\newif\ifabstract

\abstracttrue
\abstractfalse 
\newif\iffull
\ifabstract \fullfalse \else \fulltrue \fi

\ifabstract
\documentclass{sig-alternate}
\fi

\iffull
\documentclass[11pt]{article}
\textheight 9.3in \advance \topmargin by -1.0in \textwidth 6.7in
\advance \oddsidemargin by -0.8in
\newcommand{\myparskip}{3pt}
\parskip \myparskip
\fi

\usepackage{amsfonts}
\usepackage{amssymb}
\usepackage{amstext}
\usepackage{amsmath}
\usepackage{xspace}
\usepackage{theorem}
\usepackage{graphicx}
\usepackage{graphics}
\usepackage{colordvi}
\usepackage{colordvi}
\usepackage{url,hyperref}

\iffull
        {\hspace*{\fill}$\Box$\par\vspace{4mm}}
\fi





\newcommand{\GM}{Grid-Minor\xspace}
\newcommand{\FPT}{Fixed-Parameter Tractability\xspace}

\newcommand{\algsc}{\ensuremath{{\mathcal{A}}_{\mbox{\textup{\tiny{\sc ARV}}}}}\xspace}
\newcommand{\alphasc}{\ensuremath{\alpha_{\mbox{\tiny{\sc ARV}}}}}
\newcommand{\alphaSC}{\ensuremath{\alpha_{\mbox{\tiny{\sc ARV}}}}}
\newcommand{\alphaARV}{\ensuremath{\alpha_{\mbox{\tiny{\sc ARV}}}}}

\newcommand{\gammaARV}{\ensuremath{\gamma_{\mbox{\tiny{\sc ARV}}}}}
\newcommand{\gammasc}{\ensuremath{\gamma_{\mbox{\tiny{\sc ARV}}}}}

\newcommand{\algSC}{\algsc}

\newcommand{\G}{{\mathbf{G}}}
\renewcommand{\H}{{\mathbf{H}}}



\newcommand{\ceil}[1]{\ensuremath{\left\lceil#1\right\rceil}}
\newcommand{\floor}[1]{\ensuremath{\left\lfloor#1\right\rfloor}}

\newcommand{\event}{{\cal{E}}}


\newcommand{\polylog}[1]{\mathrm{polylog(#1)}}

\newcommand{\opt}{\mathsf{OPT}}

\newcommand{\alphawl}{\ensuremath{\alpha_G}}
\newcommand{\alphaWL}{\alphawl}

\newcommand{\set}[1]{\left\{ #1 \right\}}
\newcommand{\sse}{\subseteq}

\newcommand{\tset}{{\mathcal T}}

\newcommand{\pset}{{\mathcal{P}}}

\newcommand{\bset}{{\mathcal{B}}}
\newcommand{\aset}{{\mathcal{A}}}

\newcommand{\cset}{{\mathcal{C}}}
\newcommand{\fset}{{\mathcal{F}}}

\newcommand{\xset}{{\mathcal{X}}}
\newcommand{\wset}{{\mathcal{W}}}

\newcommand{\rset}{{\mathcal{R}}}

\newcommand{\sset}{{\mathcal{S}}}

\newcommand{\nots}{\overline S}

\newcommand{\be}{\begin{enumerate}}
\newcommand{\ee}{\end{enumerate}}
\newcommand{\bd}{\begin{description}}
\newcommand{\ed}{\end{description}}
\newcommand{\bi}{\begin{itemize}}
\newcommand{\ei}{\end{itemize}}

\newtheorem{lemma}{Lemma}[section]
\newtheorem{theorem}{Theorem}[section]
\newtheorem{observation}{Observation}[section]
\newtheorem{corollary}{Corollary}[section]
\newtheorem{claim}{Claim}[section]
\newtheorem{proposition}{Proposition}[section]

\newtheorem{definition}{Definition}[section]
\iffull
\newenvironment{proof}{\par \smallskip{\bf Proof:}}{\hfill\stopproof}
\def\stopproof{\square}
\def\square{\vbox{\hrule height.2pt\hbox{\vrule width.2pt height5pt \kern5pt
\vrule width.2pt} \hrule height.2pt}}
\fi




\renewcommand{\phi}{\varphi}

\newcommand{\half}{\ensuremath{\frac{1}{2}}}

\newcommand{\poly}{\operatorname{poly}}

\newcommand{\reals}{{\mathbb R}}


\newcommand{\prob}[2][]{\text{\bf Pr}_{#1}\left [#2\right]}


\setlength{\parskip}{2mm} \setlength{\parindent}{0mm}

\newcommand{\mynote}[1]{{\sc\bf{[#1]}}}

\newcommand{\dmax}{d_{\mbox{\textup{\footnotesize{max}}}}}

\newcommand{\out}{\operatorname{out}}



\newcommand{\Erdos}{Erdos\xspace}
\newcommand{\Posa}{P\'{o}sa\xspace}
\newcommand{\EP}{{\Erdos-\Posa}\xspace}


\newcommand{\headers}[3]{
\newpage\setcounter{page}{1}
\def\@oddhead{$\underline{\hbox to\textwidth{%
\textbf{\rlap{#1}\phantom{hj}\hfill #2 \hfill \llap{#3}}}}$}
\def\@oddfoot{\hfill\thepage\hfill}}

\newtheorem{prop}[lemma]{Proposition}
\newtheorem{remark}[lemma]{Remark}
\newtheorem{conjecture}{Conjecture}

\def\etal{et al.\xspace}

\def\floor#1{\lfloor {#1} \rfloor}
\def\ceil#1{\lceil {#1} \rceil}
\def\script#1{\mathcal{#1}}
\def\card#1{|#1|}
\def\set#1{\{#1\}}

\def\polylog{\mathrm{poly}\log}
\def\out{\mathrm{out}}
\def\bd{\mathrm{bd}}

\def\mF{\script{F}}
\def\mP{\script{P}}

\def\mX{\script{X}}


\def\cKRV{\gamma_{\mathrm{CMG}}}




\newcommand{\tw}{\mathrm{tw}}
\newcommand{\BN}{\mathrm{BN}}

\begin{document}
\title{Large-Treewidth Graph Decompositions and Applications\footnote{An extended abstract of this paper is to appear in {\em Proc.\ of ACM STOC}, 2013.}}
\author{Chandra Chekuri\thanks{Dept.\ of Computer Science, University
of Illinois, Urbana, IL 61801. {\tt chekuri@illinois.edu}.
Supported in part by NSF grant CCF-1016684 and 
by TTI Chicago on a sabbatical visit.}
\and 
Julia Chuzhoy \thanks{Toyota Technological Institute, Chicago, IL
60637. Email: {\tt cjulia@ttic.edu}. Supported in part by NSF CAREER 
grant CCF-0844872 and Sloan Research Fellowship.}
}

\date{\today}
\iffull
\begin{titlepage}
\maketitle
\thispagestyle{empty}
\fi
\ifabstract
\maketitle
\fi

\begin{abstract}
  Treewidth is a graph parameter that plays a fundamental role in
  several structural and algorithmic results. We study the problem of
  decomposing a given graph $G$ into node-disjoint subgraphs, where
  each subgraph has sufficiently large treewidth. We prove two
  theorems on the tradeoff between the number of the desired subgraphs
  $h$, and the desired lower bound $r$ on the treewidth of each
  subgraph. The theorems assert that, given a graph $G$ with treewidth
  $k$, a decomposition with parameters $h,r$ is feasible whenever
  $hr^2 \le k/\polylog(k)$, or $h^3r \le k/\polylog(k)$ holds.  We
  then show a framework for using these theorems to bypass the
  well-known \GM Theorem of Robertson and Seymour in some
  applications. In particular, this leads to substantially improved
  parameters in some \EP-type results, and faster algorithms
  for a class of fixed-parameter tractable problems.
\end{abstract}

\iffull
\end{titlepage}
\fi

\section{Introduction}
\label{sec:intro}

Let $G=(V,E)$ be an undirected graph. We assume that the reader is
familiar with the notion of treewidth of a graph $G$, denoted by
$\tw(G)$. 
The main question considered in this paper is the following.  Given an
undirected graph $G$, and integer parameters $h, r < \tw(G)$, can $G$
be partitioned into $h$ node-disjoint subgraphs $G_1,\ldots,G_h$ such
that for each $i$, $\tw(G_i) \ge r$? It is easy to see that for this
to be possible, $h r \le \tw(G)$ must hold. Moreover, it is not hard
to show examples of graphs $G$\ifabstract\ (such as constant-degree expanders)\fi, where even for $r=2$, the largest
number of node-disjoint subgraphs of $G$ with treewidth at least $r=2$
is bounded by $h = O\left (\frac{\tw(G)}{\log
    (\tw(G))}\right)$.\footnote{Consider a constant degree $n$-node
  expander $G$ with girth $\Omega(\log n)$; the existence of such
  graphs can be shown by the probabilistic method. Let
  $G_1,\ldots,G_h$ be any collection of node-disjoint subgraphs of $G$
  of treewidth at least $2$ each. Then each graph $G_i$ must contain a
  cycle, and by the lower bound on the girth of $G$, $|V(G_i)| =
  \Omega(\log n)$, implying that $h = O(n/\log n)$. On the other hand
  $\tw(G) = \Omega(n)$.}  In this paper we prove the following two
theorems, that provide sufficient conditions for the existence
of a decomposition with parameters $h,r$.

\begin{theorem}
  \label{thm:main1-intro}
  Let $G$ be any graph with $\tw(G) = k$, and let $h,r$ be any
  integers with $h r^2 \le k/\poly\log k$. Then there is an efficient\footnote{In this paper we use the term efficient algorithm to refer to a randomized algorithm
    that runs in time polynomial in $|V(G)|$ and $k$.}
  algorithm to partition $G$ into $h$ node-disjoint subgraphs
  $G_1,\ldots,G_h$ such that $\tw(G_i) \ge r$ for each $i$.
\end{theorem}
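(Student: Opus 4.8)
The plan is to isolate one purely structural, grid-minor-\emph{free} lemma and then iterate it with elementary bookkeeping. The lemma I would use is: \emph{if $\tw(G')=s$ then $G'$ contains a subgraph $H$ with $|V(H)|=O(s\cdot \poly\log s)$ and $\tw(H)=\Omega(\sqrt{s}/\poly\log s)$.} This follows from the standard equivalence between treewidth and the size of a maximum node-well-linked set (which needs no grid-minor theorem): $\tw(G')=s$ yields a node-well-linked set $T$ of size $\Omega(s)$, and the well-linkedness of $T$ lets one route a ``crossbar'' --- two families of $d=\Omega(\sqrt{s}/\poly\log s)$ node-disjoint paths, every path of one family meeting every path of the other --- while keeping the total vertex support down to $O(s\cdot \poly\log s)$, for instance by a breadth-first-layering argument that repeatedly shortcuts long routing paths. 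A crossbar of dimension $d$ contains a $d\times d$ grid minor, so $\tw(H)=\Omega(d)$. The square-root loss is intrinsic --- a $\sqrt n\times\sqrt n$ grid has treewidth $s=\sqrt n$ but every subgraph on $O(s\cdot \poly\log s)$ vertices has treewidth $\tilde O(\sqrt s)$ --- and it is exactly what produces the gap between the bound $hr^2\le k/\poly\log k$ that we prove and the stronger $hr\le k/\poly\log k$ one might hope for.

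Given the lemma, I would build $G_1,\dots,G_h$ greedily. Choose the $\poly\log$ factor in the hypothesis large enough that $h\cdot q\le k$, where $q=\Theta(r^2\poly\log k)$ will bound the number of vertices used by each extracted piece. Keep a current graph, initially $G$. In round $i=1,\dots,h$ the current graph has treewidth at least $k-(i-1)q\ge q$ by the choice of constants; take inside it a vertex-minimal subgraph $G'$ of treewidth at least a threshold $\Theta(r^2\poly\log k)$ (such a $G'$ has treewidth equal to the threshold, since deleting one vertex changes treewidth by at most $1$), apply the lemma to $G'$ to obtain $H_i\subseteq G'$ with $\tw(H_i)\ge r$ and $|V(H_i)|\le q$, then delete $V(H_i)$ from the current graph and continue. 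Since deleting $t$ vertices lowers treewidth by at most $t$, every round succeeds and $H_1,\dots,H_h$ are pairwise vertex-disjoint with $\tw(H_i)\ge r$. (Algorithmically one carries an approximate node-well-linked set of the current graph rather than computing treewidth, using the standard fact that removing the $O(r^2\poly\log k)$ vertices of an extracted crossbar shrinks the surviving well-linked set by only $O(r^2\poly\log k)$; with the $O(\sqrt{\log n})$-approximate sparsest-cut subroutine this is a randomized polynomial-time procedure, so the algorithm is efficient in the sense of the theorem.) Finally, add the leftover vertices $V(G)\setminus\bigcup_i V(H_i)$, with their induced edges, to $H_1$, turning the $H_i$ into a partition $G_1,\dots,G_h$ of $G$; adding vertices never decreases treewidth, so $\tw(G_i)\ge r$ for all $i$.

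The main obstacle is the structural lemma of the first paragraph: extracting, from a treewidth-$s$ graph, a treewidth-$\Omega(\sqrt s/\poly\log s)$ subgraph supported on only $O(s\cdot \poly\log s)$ vertices --- that is, simultaneously controlling the dimension of the crossbar and the size of its vertex support up to a $\poly\log$ overhead in the routing. The greedy extraction and the final extension to a partition are routine once it is in hand.
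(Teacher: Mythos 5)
Your reduction is fine in outline, but the structural lemma it rests on is false, and with it the whole greedy accounting collapses. Take $G'$ to be the $s\times s$ grid in which every edge is subdivided $s$ times. Then $\tw(G')=s$, but every subgraph $H\subseteq G'$ with $\tw(H)\ge d$ is planar and therefore contains an $\Omega(d)\times\Omega(d)$ grid minor, hence has cycle rank $\Omega(d^{2})$; since suppressing the degree-$2$ subdivision vertices shows that each independent cycle of $H$ must traverse entire subdivided paths, and distinct grid edges contribute disjoint sets of $s$ subdivision vertices, we get $|V(H)|=\Omega(s\cdot d^{2})$. For $d=\Omega(\sqrt{s}/\poly\log s)$ this is $\Omega(s^{2}/\poly\log s)$ vertices, not $O(s\,\poly\log s)$: no treewidth-$\tilde{\Omega}(\sqrt s)$ witness on $\tilde{O}(s)$ vertices exists in this graph, and more generally the vertex support of a treewidth-$r$ witness cannot be bounded by any function of $r$ alone. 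This is exactly the step your second paragraph needs: you budget $q=\Theta(r^{2}\poly\log k)$ vertices per extracted piece and charge treewidth loss at one per deleted vertex, but in graphs of this kind every piece of treewidth $r$ necessarily occupies $\Omega(k r^{2})$ vertices, so the bound ``total treewidth loss $\le hq$'' is unavailable. (A smaller slip: a crossbar --- two families of $d$ disjoint paths with every cross pair meeting --- does \emph{not} contain a $d\times d$ grid minor; that claim is essentially the grid-minor theorem itself. It does certify $\tw\ge d-1$ via the bramble $\set{P_i\cup Q_j}$, so that sentence is repairable, but the existence of such a crossbar with support $\tilde{O}(s)$ fails by the same subdivided-grid example.)

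This is precisely the difficulty the paper points out (there is no clean ``minimal subgraph with roughly the same treewidth''), and its proof deliberately avoids vertex-deletion accounting. Instead it keeps the whole graph, maintains a partition of $V(G)$ into $\alphaWL$-good ``acceptable'' clusters together with the contracted graph $H_{\cset}$, and uses the number of contracted edges $\phi(\cset)$ as the potential: each iteration either refines the clustering (via well-linked decompositions and balanced/sparsest cuts, paying only in cluster-boundary edges) or produces the $h$ subgraphs, whose large treewidth is certified by edge-well-linkedness of cluster boundaries or by embedding expanders along short Leighton--Rao paths in a high-conductance piece of $H_{\cset}$. In other words, all size control happens on boundary edges in the contracted graph, where degrees are bounded by $r'$, never on the number of vertices of a treewidth witness in $G$. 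If you want to keep an ``extract a small certificate and recurse'' strategy, the certificate has to be small in the contracted graph rather than in $G$ --- which is essentially what the paper's Case 2 does --- and that requires the clustering/potential machinery your proposal omits.
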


\begin{theorem}
  \label{thm:main2-intro}
  Let $G$ be any graph with $\tw(G) = k$, and let $h,r$ be any
  integers with $h^3 r \le k/\poly\log k$. Then there is an efficient
  algorithm to partition $G$ into $h$ node-disjoint subgraphs
  $G_1,\ldots,G_h$ such that $\tw(G_i) \ge r$ for each $i$.
\end{theorem}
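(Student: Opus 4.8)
The plan is to reduce the problem to a single large well-linked set, and then carve out the $h$ subgraphs by a random partition followed by a flow-based extraction. The starting point is the standard fact --- provable by elementary separator/bramble arguments, without the Grid-Minor theorem --- that $\tw(G)=k$ implies a node-well-linked set $X\subseteq V(G)$ with $|X|=\Omega(k/\polylog k)$, together with the easy converse that any graph containing a node-well-linked set of size $s$ has treewidth $\Omega(s)$. It therefore suffices to find pairwise node-disjoint subgraphs $G_1,\dots,G_h$ of $G$, and sets $X_i\subseteq V(G_i)$ with $|X_i|=\Omega(r)$, such that $X_i$ is node-well-linked \emph{inside} $G_i$; any leftover vertices can be appended to $G_1$ afterward, since adding vertices cannot decrease treewidth, turning the collection into a genuine partition.

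To build the $G_i$, partition $X$ uniformly at random into $h$ classes $X^{(1)},\dots,X^{(h)}$, so that with high probability $|X^{(i)}|=\Theta(|X|/h)$. The heart of the argument is the extraction step: for each $i$, produce a ``surviving core'' $X_i\subseteq X^{(i)}$ with $|X_i|=\Omega(r)$ and a subgraph $G_i\supseteq X_i$ in which $X_i$ is node-well-linked, keeping the $G_i$ pairwise disjoint. To certify well-linkedness of $X_i$ inside a \emph{small} subgraph, one embeds a constant-degree expander on $X_i$ via the cut-matching game: $O(\polylog k)$ rounds, each routed as one low-congestion flow in $G$ whose existence follows from well-linkedness of $X$ and which is found using the \alphaARV-approximate sparsest-cut algorithm. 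Running all $h$ embeddings concurrently inside the common host $G$ yields $h$ candidate subgraphs whose total vertex congestion is $\poly(h)\cdot\polylog k$; a de-congestion step then passes to pairwise node-disjoint $G_i$, shrinking each core by a further $\poly(h)\cdot\polylog k$ factor (using that a sufficiently dense random subset of an expander is still well-linked inside it). Tracking these losses --- the factor $h$ from partitioning $X$, and the $\poly(h)\cdot\polylog k$ from routing $h$ instances concurrently and de-congesting --- a core of size $\Omega(r)$ is guaranteed whenever $|X|\ge h^3 r\cdot\polylog k$, i.e.\ whenever $h^3 r\le k/\polylog k$.

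The main obstacle is exactly the mutual interference of the $h$ pieces: each $X^{(i)}$ is well-linked in \emph{all} of $G$, but the flows witnessing well-linkedness of different classes compete for the same vertices, while the final $G_i$ must be pairwise node-disjoint. Controlling this competition is what forces the concurrent-low-congestion-routing, cut-matching, and sparsest-cut machinery, and is where the extra powers of $h$ together with the polylogarithmic factors are spent --- note that the only obviously \emph{necessary} condition is $hr\le\tw(G)$. A secondary point, needed so that the poor polynomial dependence of the Grid-Minor theorem is genuinely avoided, is that each conversion used (treewidth $\leftrightarrow$ well-linked set, congested $\to$ uncongested routing, expander $\to$ well-linked subset) must lose only polylogarithmic, never polynomial, factors in $k$. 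Finally, Theorem~\ref{thm:main1-intro} follows from the same template at a complementary operating point: rather than forcing a size-$\Omega(r)$ well-linked core inside each piece --- the step responsible for the dependence on $h$ --- one settles for a lighter per-piece structure, trading the polynomial dependence on $h$ for an extra factor of $r$ and yielding the condition $hr^2\le k/\polylog k$.
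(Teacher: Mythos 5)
The gap is the ``de-congestion step''. After you run the $h$ cut-matching games concurrently in $G$, the $h$ embedded expanders share vertices of $G$ with congestion $\poly(h)\cdot\polylog k$, and you must somehow convert them into \emph{pairwise node-disjoint} subgraphs $G_1,\dots,G_h$ each still containing an $\Omega(r)$-size set that is well-linked \emph{inside $G_i$}. Shrinking each core and invoking ``a dense random subset of an expander is still well-linked inside it'' does not do this: that fact certifies well-linkedness inside the abstract expander, whereas the certificate you need is well-linkedness inside the host subgraph $G_i$, and that requires the embedding paths (each of length $\polylog k$) to survive inside $G_i$. But making the pieces node-disjoint forces you to delete, or assign to other pieces, the shared vertices of $G$; each deleted vertex kills every embedding path through it, and since a typical path survives an independent assignment of its $\polylog k$ vertices only with probability roughly $\bigl(1/(\poly(h)\polylog k)\bigr)^{\polylog k}$, essentially nothing of the embeddings remains. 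No argument is given for why disjoint hosts with surviving well-linked cores exist at all, and this interference problem --- which you yourself identify as the main obstacle --- is exactly the hard part; it is not a routine cleanup, and the claimed loss of only $\poly(h)\cdot\polylog k$ per piece is unsupported.

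For contrast, the paper sidesteps this entirely by making the pieces disjoint \emph{before} certifying treewidth, rather than after. It maintains a partition of $V(G)$ into ``acceptable clusters'' (small boundary, at most half the terminals, connected) and works in the contracted graph $H$; the super-nodes of $H$ are split uniformly at random into $h+1$ classes, which are automatically node-disjoint after un-contraction, and with constant probability each class retains $\Omega(m/h^2)$ internal edges against an $O(m/h)$ boundary. Inside each class it then runs an iterative sparsest-cut procedure on clusters with boundary larger than $r'=\Theta(r\Delta^2 h\,\alphaARV(k))$: either some non-sparse cut certifies a set of $r'/\Delta$ boundary endpoints that is $\Theta(\Delta^2 r/r')$-well-linked inside that piece (giving treewidth at least $r$ via the well-linkedness-to-treewidth corollary), or a charging argument shows the refined clustering strictly decreases the number of edges of the contracted graph, so the whole process terminates after at most $|E(G)|$ iterations. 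No expander embedding, concurrent routing, or de-congestion is needed for Theorem~\ref{thm:main2-intro}; the condition $h^3 r\le k/\polylog k$ emerges from the $m/(8h^2)$-vs-$16m/h$ random-partition bounds combined with the factor $h$ built into $r'$, not from an embedding/uncrossing loss. (Also, your closing claim that Theorem~\ref{thm:main1-intro} follows from the same template is not how the paper proceeds; its proof of that theorem is a separate and more involved argument.)
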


We observe that the two theorems give different tradeoffs, depending on
whether $r$ is small or large. It is particularly useful in
applications that the dependence is linear in one of the
parameters. We make the following conjecture, that would
strengthen and unify the preceding theorems.

\begin{conjecture}
  Let $G$ be any graph with $\tw(G) = k$, and let $h,r$ be any
  integers with $h r \le k/\poly\log k$. Then $G$ can be partitioned
  into $h$ node-disjoint subgraphs $G_1,\ldots,G_h$ such that
  $\tw(G_i) \ge r$ for each $i$.
\end{conjecture}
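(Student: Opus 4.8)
We sketch one natural line of attack, together with the point at which it currently breaks down. The plan is to reduce the conjecture to a clean structural target and then attack that target. First, recall that if a subgraph $G_i\sse G$ contains a set $T_i$ of $\Theta(r)$ vertices that is well-linked in $G_i$, then $\tw(G_i)=\Omega(r)$; since the hypothesis $hr\le k/\poly\log k$ leaves polylogarithmic and constant-factor slack, it therefore suffices to produce $h$ node-disjoint subgraphs $G_1,\dots,G_h$, the $i$-th of which contains such a well-linked set $T_i$. Equivalently, it is enough to build in $G$ a family of $h$ node-disjoint clusters, the $i$-th equipped with a well-linked boundary of size $\Theta(r)$ --- a ``family-of-sets'' structure, weaker than a path-of-sets system since the clusters need not be linked to one another.

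Second, obtain the raw material: using the well-linked-set (equivalently, bramble) characterization of treewidth, extract from $G$ a set $T$ of $m=\Omega(k)$ vertices that is well-linked in $G$. Since $hr\le m/\poly\log k$, partition $T$ arbitrarily into $h$ groups $T_1,\dots,T_h$, each of size $\Theta(m/h)\ge\Omega(r\cdot\poly\log k)$. The heart of the matter is then to carve $G$ (or a large part of it) into node-disjoint vertex sets $V_1,\dots,V_h$ with $T_i\sse V_i$, such that $T_i$ is still well-linked in $G[V_i]$ up to a $\poly\log k$ factor. The intended mechanism is a routing-plus-clustering argument, in the spirit of the well-linked decomposition and of known constructions of path-of-sets systems: we set up a multicommodity-flow relaxation that simultaneously routes, for every $i$ and every bipartition of $T_i$, a large flow between its two sides, with the flows for distinct $i$ pairwise vertex-disjoint; we argue feasibility from well-linkedness of $T$ --- each inner flow needs only a $\Theta(1/h)$-fraction of the global routing capacity guaranteed for all of $T$ --- together with an averaging argument; and we convert the fractional solution into the sets $V_i$ by a region-growing / clustering step.

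We expect the last step to be the main obstacle: \emph{disjointness of the $V_i$'s is in tension with well-linkedness of the $T_i$'s inside them}, and every method we know of pays dearly for the $h$-way split. The $r^2$ in Theorem~\ref{thm:main1-intro} reflects the cost of a grid-like piece --- treewidth $r$ on $\Theta(r^2)$ vertices --- while the $h^3$ in Theorem~\ref{thm:main2-intro} reflects the congestion incurred when an $h$-way-split flow is rounded to disjoint routes. Neither theorem reaches the balanced regime $h\approx r\approx\sqrt k$, where $hr=\Theta(k/\poly\log k)$ already, so that is the critical case, and repeatedly composing the two theorems does not help (one cannot obtain a near-linear tradeoff by composing quadratic ones). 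The conjecture asks that the $h$-way split cost only $\poly\log k$, i.e.\ an essentially lossless way to partition a treewidth-$k$ graph into $h$ pieces each inheriting a $\Theta(1/h)$-share of the treewidth; and --- unlike the single large grid-like region that underlies Theorem~\ref{thm:main1-intro}, whose order is in any case capped at $O(\sqrt k)$ --- this appears to demand genuinely new structure rather than a refinement of the present tools.

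A sensible intermediate goal would be to prove the statement with $\poly\log k$ weakened to $\poly\log k\cdot\min\{h,r\}$, which already interpolates nontrivially between the two theorems, or to establish it for graphs excluding a fixed minor; there treewidth is near-linearly related to grid-minor order, and the trivial ``strip decomposition'' of a grid --- cutting a $g\times g$ grid into $h$ horizontal strips, each a $(g/h)\times g$ subgrid of treewidth $\Theta(g/h)$ --- already realizes the tradeoff $hr\le g/\poly\log k$ that the conjecture wants in general.
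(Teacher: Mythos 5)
The statement you were handed is not a theorem of the paper but its open \textbf{Conjecture}: the authors explicitly leave it unproven and establish only the weaker tradeoffs $hr^2 \le k/\poly\log k$ (Theorem~\ref{thm:main1-intro}) and $h^3r \le k/\poly\log k$ (Theorem~\ref{thm:main2-intro}). So there is no paper proof to compare against, and your submission, as you yourself acknowledge, is not a proof either. Extracting a well-linked set $T$ of size $\Omega(k)$ and splitting it into $h$ groups $T_1,\dots,T_h$ is the easy part; the step in which you carve $G$ into node-disjoint sets $V_1,\dots,V_h$ so that each $T_i$ stays well-linked in $G[V_i]$ up to a $\poly\log k$ factor is exactly the content of the conjecture, and your ``routing-plus-clustering'' mechanism is only named, not carried out. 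Your diagnosis of why it is hard is sound and matches the structure of the paper's proofs: the pieces produced in the proof of Theorem~\ref{thm:main1-intro} are expander/grid-like clusters of boundary size $r'=\Theta(r\,\poly\log k)$, so certifying treewidth $r$ costs $\Theta(r)$ per piece in boundary and $\Theta(r^2)$ overall per unit of treewidth consumed, while the randomized $h$-way split behind Theorem~\ref{thm:main2-intro} loses factors of $h$ in edge counts and congestion; neither mechanism survives the balanced regime $h\approx r\approx\sqrt{k}$ that you correctly single out as critical. But identifying the obstacle is not overcoming it, so the proposal contains a genuine gap --- indeed the gap is the conjecture itself --- and your concluding suggestions (a $\poly\log k\cdot\min\{h,r\}$ interpolation, or the $H$-minor-free case via the linear grid-minor bound of \cite{DemaineH-grid} and strip decompositions of the grid) are reasonable intermediate targets but are likewise not established here.

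One smaller technical caution for any future attempt: your opening reduction, ``a well-linked set of $\Theta(r)$ vertices in $G_i$ forces $\tw(G_i)=\Omega(r)$,'' holds as stated for \emph{node}-well-linked sets (Lemma~\ref{lem:tw-wl}), whereas the flow-based carving you envision naturally yields \emph{edge}-well-linkedness, for which the treewidth lower bound (Corollary~\ref{cor: from well-linkedness to treewidth}) degrades by the maximum degree. The paper pays for this with the degree-reduction step of Theorem~\ref{thm:log-degree}, at a further polylogarithmic cost; this is absorbable in your slack but must be made explicit.
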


\smallskip
\noindent {\bf Motivation and applications.} The starting point for
this work is the observation that a special case of
Theorem~\ref{thm:main2-intro}, with $h = \Omega(\log^2 k)$, is a
critical ingredient in recent work on poly-logarithmic approximation
algorithms for routing in undirected graphs with constant congestion
\cite{Chuzhoy11,ChuzhoyL12,ChekuriE13}. In particular,
\cite{Chuzhoy11} developed such a decomposition for edge-disjoint
routing, and subsequently \cite{ChekuriE13} extended it to the
node-disjoint case. However, in this paper, we are motivated by a
different set of applications, for which Theorem~\ref{thm:main1-intro}
is more suitable. These applications rely on the seminal work of
Robertson and Seymour \cite{RS-grid}, who showed that there is a large
grid minor in any graph with sufficiently large treewidth.  The
theorem below, due to Robertson, Seymour and Thomas~\cite{RobertsonST94}, gives an improved
quantitative bound relating the size of the grid minor and the treewidth.

\begin{theorem}[Grid-Minor Theorem \cite{RobertsonST94}]
  \label{thm:RST-grid} Let $G$ be any graph, and $g$ any integer, such
  that $\tw(G)\ge 20^{2g^5}$. Then $G$ contains a $g \times g$ grid as
  a minor.  Moreover, if $G$ is planar, then $\tw(G) \ge 6g-4$ suffices.
\end{theorem}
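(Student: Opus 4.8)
The plan is to prove the contrapositive: a graph $G$ with no $g\times g$ grid minor has bounded treewidth, with the precise constant $20^{2g^5}$ emerging from the bookkeeping. The first step is to convert the treewidth lower bound into a usable connectivity statement. Using the min-max duality between treewidth and brambles (equivalently, tangles), a graph with $\tw(G)\ge k$ carries a tangle of order $\Omega(k)$, and from such a tangle one extracts a large ``well-linked'' set $Z$ of vertices: no separation of small order splits $Z$ in a balanced fashion, so by Menger's theorem one can route $\Omega(|Z|)$ vertex-disjoint paths between any two roughly-equal subsets of $Z$. This reduction is standard and loses only small factors.

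The core of the argument is a bootstrapping construction that upgrades a large well-linked set into a large grid minor. I would build the grid incrementally, maintaining a ``grid-like'' substructure: a family $\mathcal{P}$ of vertex-disjoint paths (the prospective columns) together with a linkage crossing all of them in a consistent linear order (the prospective rows). Exploiting well-linkedness, one routes an additional path through the terminal set meeting many current columns; after discarding the columns it hits out of order or re-enters, what remains is a new row together with a retained subfamily of columns, giving a strictly larger grid-like structure. Iterating $\poly(g)$ times, and then realizing the final grid-like object as an honest minor by contracting path segments, produces the $g\times g$ grid.

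The main obstacle, and the reason the bound is single-exponential in $g^5$ rather than polynomial, is controlling the interference among the paths meant to become the grid: a freshly routed path may cross existing columns out of order, reuse their vertices, or wind around wastefully, and each cleanup step --- choosing a sub-linkage that is ``in series'' via Menger, then pruning --- is only guaranteed to retain a small polynomial fraction of the structure, so the losses compound multiplicatively across the $\poly(g)$ rounds. Getting any bound at all requires a delicate potential/counting argument certifying that the grid-like structure cannot keep shrinking; getting a genuinely polynomial bound requires replacing this step with a far more efficient routing scheme, which is the subject of later work and not something I would attempt here. Finally, the improved planar bound $\tw(G)\ge 6g-4$ I would handle separately: for planar $G$ the treewidth is tied linearly to the side length of the largest grid minor, via planar duality and the structure of (sphere-cut) branch decompositions of planar graphs, giving a short and essentially self-contained argument in that case.
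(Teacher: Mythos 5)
The statement you are asked to prove is not proved in the paper at all: it is the Robertson--Seymour--Thomas Grid-Minor Theorem, imported verbatim with a citation to \cite{RobertsonST94}, so there is no internal argument to compare against. Judged on its own, your proposal is a plan rather than a proof, and the plan defers exactly the steps where the theorem lives. The reduction from treewidth to a tangle/well-linked set is indeed standard, but everything after that is asserted, not established: you never specify the invariant for the ``grid-like'' structure, never prove that a freshly routed path can always be cleaned up so that a quantifiable fraction of the columns survives in consistent order, and never carry out the ``bookkeeping'' from which the bound $20^{2g^5}$ is supposed to emerge. You acknowledge this yourself (``not something I would attempt here''), but that concession is precisely the content of the theorem: the potential/counting argument controlling the compounding losses across rounds is the heart of the Robertson--Seymour--Thomas (and Diestel--Jensen--Gorbunov--Thomassen) proofs, and without it the argument does not yield any finite bound, let alone the stated one.

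The planar clause has the same problem in sharper form. ``Treewidth is tied linearly to the side length of the largest grid minor via planar duality and sphere-cut branch decompositions'' is a statement of the result, not an argument; the specific constant in $\tw(G)\ge 6g-4$ requires a genuine geometric/duality argument (in \cite{RobertsonST94} it is obtained through a careful analysis of walls and dual distances), and nothing in your sketch produces it. So the proposal identifies the right general strategy but contains no complete step beyond the initial tangle/well-linkedness reduction; as written it is a gap, not a proof.
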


Kawarabayashi and Kobayashi~\cite{KawarabayashiK-grid} obtained an
improved bound of $2^{O(g^2 \log g)}$ on the treewidth required to
ensure a $g\times g$ grid minor, and a further improvement to a bound
of $2^{O(g \log g)}$ was recently claimed by Seymour
\cite{Seymour-grid}.

Notice that Theorem~\ref{thm:RST-grid} guarantees a grid minor of size
sub-logarithmic in the treewidth $k$ in general graphs, and of size
$\Omega(k)$ in planar graphs. Demaine and Hajiaghayi
\cite{DemaineH-grid} extended the linear relationship between the grid
minor size and the treewidth to graphs that exclude a fixed graph $H$
as a minor (the constant depends on the size of $H$, see
\cite{KawarabayashiK-grid} for an explicit dependence). A $g \times g$
grid has treewidth $g$, and it can be partitioned into $h$
node-disjoint grids of size $r \times r$ each, as long as $r\sqrt h =
O(g)$. Thus, in a general graph $G$ of treewidth $k$, the \GM Theorem
currently only guarantees that for any integers $h,r$ with $hr^2 =
O(\log^{2/5} k)$, there is a partition of $G$ into $h$ node-disjoint
subgraphs of treewidth at least $r$ each. Robertson \etal \cite{RobertsonST94}
observed that, in order for $G$ to
contain a $g \times g$ grid as a minor, its treewidth may need to be
as large as $\Omega(g^2 \log g)$, and they suggest that this may
be sufficient. Demaine \etal \cite{DemaineHK09} conjecture that
the treewidth of $\Theta(g^3)$ is both necessary and sufficient.

The existence of a polynomial relationship between the grid-minor size
and the graph treewidth is a fundamental open question, that appears
to be technically very challenging to resolve.  Our work is motivated
by the observation that the \GM Theorem can be bypassed in various
applications by using Theorems~\ref{thm:main1-intro} and
\ref{thm:main2-intro}.  We describe two general classes of such
applications below.

\smallskip
{\em Bounds for \EP type results.} 
The duality between packing and covering plays a central role in graph
theory and combinatorial optimization. One central result of this
nature is Menger's theorem, which asserts that for any graph $G$,
subsets $S$, $T$ of its vertices, and an integer $k$, either $G$
contains $k$ node-disjoint paths connecting the vertices of $S$ to the
vertices of $T$, or there is a set $X$ of at most $k-1$ vertices,
whose removal disconnects all such paths. \Erdos and \Posa
\cite{ErdosP65} proved that for every graph $G$, either $G$ contains
$k$ node-disjoint cycles, or there is a set $X$ of $O(k\log k)$ nodes,
whose removal from $G$ makes the graph acyclic. More generally, a
family $\mF$ of graphs is said to satisfy the \EP property, iff there
is an integer-valued function $f$, such that for every graph $G$,
either $G$ contains $k$ disjoint subgraphs isomorphic to members of
$\mF$, or there is a set $S$ of $f(k)$ nodes, such that $G-S$ contains
no subgraph isomorphic to a member of $\mF$. In other words, $S$ is a
cover, or a hitting set, for $\mF$ in $G$.  \EP-type results provide
relationships between integral covering and packing problems, and are
closely related to fractional covering problems and the integrality
gaps of the corresponding LP relaxations.

As an illustrative example for the \EP-type results, let $\fset_m$
denote the family of all cycles of length $0$ modulo
$m$. Thomassen~\cite{Thomassen88} has proved an \EP-type result for
$\fset_m$, by showing that for each graph $G$, either $G$ contains $k$
disjoint copies of cycles from $\fset_m$, or there is a subset $S$ of
$f(k)$ vertices, whose removal disconnects all such cycles in $G$
(here, $f(k)=2^{2^{O(k)}}$, and $m$ is considered to be a
constant). The proof consists of two steps. In the first step, a
simple inductive argument is used to show that for any graph $G$ of
treewidth at most $w$, either $G$ contains $k$ disjoint copies of
cycles from $\fset_m$, or there is a subset $S$ of $O(kw)$ vertices,
whose removal from $G$ disconnects all such cycles. The second step is
to show that if $G$ has treewidth at least some value $g(k)$, then it
must contain $k$ disjoint copies of cycles from $\fset_m$. Combining
these two steps together, we obtain that $f(k)=O(k\cdot g(k))$. The
second step uses Theorem~\ref{thm:RST-grid} to show that, if
$\tw(G)\geq g(k)=2^{m^{O(k)}}$, then $G$ contains a grid minor of size
$k(2m)^{2k-1}\times k(2m)^{2k-1}$. This grid minor is then in turn
used to find $k$ disjoint copies of cycles from $\fset_m$ in $G$,
giving $f(k)=2^{m^{O(k)}}$.

Using Theorem~\ref{thm:main1-intro}, we can significantly strengthen
this result, and obtain $f(k)=\tilde O(k)$, as
follows.\footnote{Throughout the paper we use $\tilde{O}$ notation to
  suppress polylogarithmic factors.} Assume first that we are given
any graph $G$, with $\tw(G)\geq f'(m)k\poly\log k$, where $f'(m)$ is
some function of $m$. Then, using Theorem~\ref{thm:main1-intro}, we
can partition $G$ into $k$ vertex-disjoint subgraphs of treewidth at
least $f'(m)$ each. Using known techniques (such as, e.g.,
Theorem~\ref{thm:RST-grid}), we can then show that each such subgraph
must contain a copy of a cycle from $\fset_m$. Therefore, if
$\tw(G)\geq f'(m)k\poly\log k$, then $G$ contains $k$ disjoint copies
of cycles from $\fset_m$. Combining this with Step 1 of the algorithm
of Thomassen, we conclude that every graph $G$ either contains $k$
copies of cycles from $\fset_m$, or there is a subset $S$ of
$f(k)=\tilde O(k^2)$ vertices, whose removal from $G$ disconnects all
such cycles; a stronger bound of $f(k) = \tilde O(k)$ can be obtained
by refining the Step 1 argument using a divide and conquer analysis \cite{FominST11} (see Lemma~\ref{lem:ep-improved} in Section~\ref{sec:apps}).

There is a large body of work in graph theory and combinatorics on
\EP-type results. Several of these rely on the \GM Theorem, and
consequently the function $f(k)$ is shown to be exponential (or even
worse) in $k$. Some fundamental results in this area can be improved
to obtain a bound polynomial in $k$, using
Theorem~\ref{thm:main1-intro} and the general framework outlined
above. For example, Robertson and Seymour \cite{RS-grid} derived the
following as an important consequence of the \GM Theorem. Given any
fixed graph $H$, let $\mF(H)$ be the family of all graphs that contain
$H$ as a minor. Then $\mF(H)$ has the \EP property iff $H$ is
planar. However, the bound they obtained for $f(k)$ is exponential in
$k$. Using the above general framework, we can show that $f(k)=O(k
\cdot \polylog(k))$ for any fixed $H$.

\smallskip {\em Improved running times for \FPT.} The theory of
bidimensionality \cite{DemaineH-survey} is a powerful methodology in
the design of fixed-parameter tractable (FPT) algorithms. It led to 
sub-exponential (in the parameter $k$) time FPT algorithms for
bidimensional parameters (formally defined in Section~\ref{sec:apps})
in planar graphs, and more generally graphs that exclude a fixed graph
$H$ as a minor.
The theory is based on the \GM
Theorem.  However, in general graphs, the weak bounds of the \GM
Theorem meant that one could only derive FPT algorithms with running
time of the form $2^{2^{O(k^{2.5})}}n^{O(1)}$, as shown by Demaine and
Hajiaghayi \cite{DemaineH07}. Our results lead to 
algorithms with running times of the form $2^{k~\polylog(k)}n^{O(1)}$ for the
same class of problems as in \cite{DemaineH07}. Thus, one can obtain
FPT algorithms for a large class of problems in general graphs via a
generic methodology, where the running time has a singly-exponential
dependence on the parameter $k$.

\smallskip
The thrust of this paper is to prove Theorems~\ref{thm:main1-intro} and
\ref{thm:main2-intro}, and to highlight their applicability as general
tools. The applications described in Section~\ref{sec:apps} are of
that flavor; we have not attempted to examine specific problems in
depth. We believe that the theorems, and the technical ideas
in their proofs, will have further applications.  

\smallskip
\noindent{\bf Overview of techniques and discussion.}
A significant contribution of this paper is the formulation of the
decomposition theorems for treewidth, and identifying their
applications. The main new and non-trivial technical contribution is
the proof of Theorem~\ref{thm:main1-intro}. The proof of
Theorem~\ref{thm:main2-intro} is similar in spirit to the recent work
of \cite{Chuzhoy11} and \cite{ChekuriE13}, who obtained a special case
of Theorem~\ref{thm:main2-intro} with $h=\poly\log k$, and used it to
design algorithms for low-congestion routing in undirected graphs. We
note that Theorem~\ref{thm:main1-intro} gives a substantially
different tradeoff between the parameters $h,r$ and $k$, when compared
to Theorem~\ref{thm:main2-intro}, and leads to the improved results
for the two applications we mentioned earlier. Its proof uses new
ingredients with a connection to decomposing expanders as explained
below.

\smallskip
\noindent
{\bf Contracted graph, well-linked decomposition, and expanders:}
The three key technical ingredients in the proof of
Theorem~\ref{thm:main1-intro} are in the title of the paragraph.  To
illustrate some key ideas we first consider how one may prove
Theorem~\ref{thm:main1-intro} if $G$ is an $n$-vertex constant-degree
expander, which has treewidth $\Omega(n)$. At a high level, one can
achieve this as follows. We can take $h$ disjoint copies of an
expander with $\Omega(r)$ nodes each (the expansion certifies that
treewidth of each copy is $r$), and ``embed'' them into $G$ in a
vertex-disjoint fashion. This is roughly possible, modulo various
non-trivial technical issues, using short-path vertex-disjoint routing
in expanders \cite{LR}. Now consider a general graph $G$. For instance
it can be a planar graph on $n$ nodes with treewidth $O(\sqrt{n})$;
note that the ratio of treewidth to the number of nodes is very
different than that in an expander. Here we employ a different
strategy, where we cut along a small separator and retain large
treewidth on both sides and apply this iteratively until we obtain the
desired number of subgraphs. The non-trivial part of the proof is to
be able to handle these different scenarios. Another technical
difficulty is the following. Treewidth of a graph is a global
parameter and there can be portions of the graph that can be removed
without changing the treewidth. It is not easy to cleanly characterize
the minimal subgraph of $G$ that has roughly the same treewidth as
that of $G$.  A key technical ingredient here is borrowed from
previous work on graph decompositions \cite{Raecke,Chuzhoy11}, namely,
the notion of a contracted graph. The contracted graph tries to
achieve this minimality, by contracting portions of the graph that
satisfy the following technical condition: they have a small boundary
and the boundary is well-linked with respect to the contracted
portion.  Finally, a recurring technical ingredient is the notion of a
well-linked decomposition. This allows us to remove a small number of
edges while ensuring that the remaining pieces have good conductance.
This high-level clustering idea has been crucial in many applications.


\smallskip
\noindent {\bf Related work on grid-like minors and (perfect) brambles:}
An important ingredient in the decomposition results is a need to
certify that the treewidth of a given graph is large, say at least
$r$. Interestingly, despite being NP-Hard to compute, the treewidth of
a graph $G$ has an exact min-max formula via the bramble number
\cite{ST-BN} (see Section \ref{sec:prelim}). However, Grohe and Marx
\cite{GroheM09} have shown that there are graphs $G$ (in fact
expanders) for which a polynomial-sized bramble can only certify that
treewidth of $G$ is $\Omega(\sqrt{k})$ where $k=\tw(G)$; certifying
that $G$ has larger treewidth would require super-polynomial sized
brambles.  Kreutzer and Tamari \cite{KreutzerT10}, building on
\cite{GroheM09}, gave efficient algorithms to construct brambles of
order $\tilde{\Omega}(\sqrt{k})$. They also gave efficient algorithms
to compute ``grid-like'' minors introduced by Reed and Wood
\cite{ReedW-grid} where it is shown that $G$ has a grid-like minor of
size $\ell$ as long as $\tw(G) = \Omega(\ell^4 \sqrt{\log \ell})$.  In
some applications it is feasible to use a grid-like minor in place of
a grid and obtain improved results.  Kreutzer and Tamari
\cite{KreutzerT10} used them to define perfect brambles and gave a
meta-theorem to obtain FPT algorithms, for a subclass of problems
considered in \cite{DemaineH07}, with a single-exponential dependence
on the parameter $k$. Our approach in this paper is different, and in
a sense orthogonal, as we explain below.

First, a grid-like minor is a single connected structure that does not
allow for a decomposition into disjoint grid-like minors. This
limitation means one needs a global argument to show that a grid-like
minor of a certain size implies a lower bound on some parameter of
interest. In contrast, our theorems are specifically tailored to
decompose the graph and then apply a local argument in each subgraph,
typically to prove that the parameter is at least one in each
subgraph. The advantage of our approach is that it is agnostic to how
one proves a lower bound in each subgraph; we could use the \GM
Theorem or the more efficient grid-like minor theorem in each
subgraph. Kreutzer and Tazari~\cite{KreutzerT10} derive efficient FPT
algorithms for a subclass of problems considered in \cite{DemaineH07}
where the class is essentially defined as those problems for which one
can use a grid-like minor in place of a grid in the global sense
described above. In contrast, we can generically handle all the
problems considered in \cite{DemaineH07} as explained in
Section~\ref{sec:apps}.

Second, we discuss the efficiency gains possible via our approach.  It
is well-known that an $\alpha$-approximation for sparse vertex
separators gives an $O(\alpha)$-approximation for treewidth. Feige
\etal \cite{FeigeHL05} obtain an $O(\sqrt{\log \tw(G)})$-approximation
for treewidth. Thus we can efficiently certify treewidth to within a
much better factor via separators than with brambles. More explicitly,
well-linked sets provide a compact certificate for treewidth;
informally, a set of vertices $X$ is well-linked in $G$ if there are
no small separators for $X$ --- see Section \ref{sec:prelim} for formal
definitions. The tradeoffs we obtain through well-linked sets are
stronger than via brambles.  In particular, the FPT algorithms that we
obtain have a running time $2^{k~\polylog(k)}n^{O(1)}$ where $k$ is
the parameter of interest. In contrast the algorithms obtained via
perfect brambles in \cite{KreutzerT10} have running times of the form
$2^{\poly(k)} n^{O(1)}$ where the polynomial is incurred due to the inefficiency
in the relationship between treewidth and the size of a grid-like
minor. Although the precise dependence on $k$ depends on the parameter
of interest, the current bounds require at least a quadratic dependence
on $k$.

\smallskip {\bf Organization:} \ifabstract Most of the proofs, and in
particular, all the details of proof of Theorem~\ref{thm:main2-intro} are
omitted due to space constraints, and can be found in the
full version of the paper. Section~\ref{sec: proof of first main
  theorem} describes our proof of
Theorem~\ref{thm:main1-intro}. Section~\ref{sec:apps} describes the
applications and it relies only on the statement of
Theorem~\ref{thm:main1-intro}, and can be read independently.
\fi \iffull
Sections~\ref{sec: proof of first main theorem} and \ref{sec: proof of
  second main theorem} contain the proofs of
Theorem~\ref{thm:main1-intro} and Theorem~\ref{thm:main2-intro}
respectively. Section~\ref{sec:apps} describes the applications; it
relies only on the statement of Theorem~\ref{thm:main1-intro} and can
be read independently.  \fi
\label{--------------------------------------------Prelims-------------------------------------}

\section{Preliminaries and Notation}
\label{sec:prelim}
Given a graph $G$ and a set of vertices $A$, we denote by
$\out_G(A)$ the set of edges with exactly one end point in $A$
and by $E_G(A)$ the set of edges with both end points in $A$.
For disjoint sets of vertices $A,B$ the set of edges with
one end point in $A$ and the other in $B$ is denoted by $E_G(A,B)$.
When clear from context, we omit the subscript $G$.  All
logarithms are to the base of $2$. 
We use the following
simple claim several times\iffull , and its proof appears in the Appendix\fi.

\begin{claim}\label{claim: simple partition} Let
  $\set{x_1,\ldots,x_n}$ be a set of non-negative integers, with
  $\sum_{i}x_i=N$, and $x_i\leq 2N/3$ for all $i$. Then we can
  efficiently compute a partition $(A,B)$ of $\set{1,\ldots,n}$, such
  that $\sum_{i\in A}x_i,\sum_{i\in B}x_i\geq N/3$.
\end{claim}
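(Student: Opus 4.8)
The plan is a greedy insertion argument. I would process the indices $1,\ldots,n$ one at a time, maintaining a partial assignment of each processed index to $A$ or $B$, and keeping the running sums $a=\sum_{i\in A}x_i$ and $b=\sum_{i\in B}x_i$ as balanced as possible. Concretely, when index $i$ arrives, put $i$ into whichever of $A,B$ currently has the smaller sum (breaking ties arbitrarily). The claim is that at the end $\min(a,b)\geq N/3$.

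The key observation to carry the invariant is the following: at every point during the process, $|a-b|\leq 2N/3$. Indeed, suppose before inserting $x_i$ we have (without loss of generality) $a\leq b$; after inserting $i$ into $A$ the new sums are $a+x_i$ and $b$, and $|(a+x_i)-b| = |b-a-x_i| \le \max(b-a,\, x_i)$, using $b-a\ge 0$ and $x_i \ge 0$. Since $x_i \le 2N/3$ by hypothesis, and $b-a$ was at most $2N/3$ by the inductive hypothesis (at the very start $b-a=0$), we get $|a'-b'|\leq 2N/3$ after the insertion as well. So the invariant $|a-b|\le 2N/3$ holds throughout, in particular at the end. Combined with $a+b=N$, this gives $\min(a,b) = \tfrac{(a+b)-|a-b|}{2}\geq \tfrac{N - 2N/3}{2} = N/6$, which is not quite the bound we want.

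To get the stronger bound $N/3$, I would sharpen the invariant slightly: instead of tracking $|a-b|$ after each step, track the sum before the step is placed. The cleaner route is to sort the $x_i$ in decreasing order first and then run the same greedy (longest-processing-time rule). With the sorted order one argues as follows: consider the final state, and let $j$ be the \emph{last} index placed on the side with the smaller final sum, say side $A$. At the moment $j$ was placed, $A$ had the smaller running sum, so $a_{\text{before}} \le b_{\text{before}}$, hence $a_{\text{before}} \le \tfrac{N - x_j - \sum_{\ell>j,\ell\in A}x_\ell - \sum_{\ell>j,\ell\in B}x_\ell}{2}$; but no elements after $j$ land on $A$ (by choice of $j$), and the elements after $j$ all have $x_\ell \le x_j$. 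If $x_j \le N/3$ we get $a_{\text{final}} = a_{\text{before}} + x_j \ge$ half of what remained plus $x_j \ge N/3$ by a short computation; and if $x_j > N/3$, then since all later elements are $\le x_j$ and there are at most two "large" elements exceeding $N/3$ (as they sum to at most $N$ together with the rest), one checks directly that the two sides each receive a large element or the residual mass suffices.

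The main obstacle I expect is precisely pinning down this last case analysis so that the bound is exactly $N/3$ rather than $N/6$: the naive "always add to the smaller pile" greedy does give $N/3$, but the clean proof of the $1/3$ (as opposed to $1/4$ or $1/6$) constant wants either the sorted order or a careful choice of which is "the last element placed on the lighter side." I would therefore structure the final writeup around the sorted greedy and the "last placed element" argument, handling separately the easy case $x_j\le N/3$ (where half-the-residual-plus-$x_j$ beats $N/3$) and the case $x_j > N/3$ (where at most two indices can be this large, so they necessarily get split between $A$ and $B$, each side then carrying at least $N/3$). Everything here is elementary and runs in $O(n\log n)$ time, so efficiency is immediate.
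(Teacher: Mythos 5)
Your algorithm (sort in decreasing order, always add the next element to the lighter pile) is exactly the paper's, but the analysis you sketch for the main case has the inequality pointing the wrong way. You take $j$ to be the last index placed on the side with the \emph{smaller} final sum $A$; from ``$A$ was the lighter pile when $j$ arrived'' you can only deduce an \emph{upper} bound $a_{\mathrm{before}}\le \tfrac12(\text{mass placed so far})$, and no ``short computation'' turns that into the lower bound $a_{\mathrm{final}}\ge N/3$ that you assert. The LPT-style argument must be run on the side with the \emph{larger} final sum: if $x_j$ is the last element placed on the heavier side $B$, then $b_{\mathrm{final}}=b_{\mathrm{before}}+x_j\le \frac{N-x_j}{2}+x_j=\frac{N+x_j}{2}\le \frac{2N}{3}$ whenever $x_j\le N/3$, which is equivalent to $\min(a,b)\ge N/3$. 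Alternatively, just sharpen the invariant from your first paragraph: your own computation shows $|a'-b'|\le\max\{|a-b|,x_i\}$ at every step, so the final imbalance is at most $\max_i x_i$ rather than $2N/3$; when every $x_i\le N/3$ this immediately gives $\min(a,b)\ge (N-N/3)/2=N/3$, with no ``last placed element'' bookkeeping at all.

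The other case you only gesture at (``one checks directly''), and you flag it yourself as the main obstacle, so as submitted the argument is incomplete; the missing check is short and is what the paper does. At most two elements can exceed $N/3$. If there are two, the sorted greedy places $x_1$ on $A$ and then $x_2$ on $B$ (since at that moment $A=x_1>0=B$), so each side already holds more than $N/3$. If only $x_1>N/3$, then $A$ receives $x_1$ first, so $A\ge x_1\ge N/3$ throughout; any later element joins $A$ only at a moment when $B\ge A\ge N/3$, and if no later element ever joins $A$ then $B=N-x_1\ge N/3$ because $x_1\le 2N/3$. With the corrected small-element case, this is precisely the paper's two-case proof.
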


\noindent {\bf Graph partitioning.}
Suppose we are given any graph $G=(V,E)$ with a set $T$ of vertices
called terminals.  Given any partition $(S,\nots)$ of $V(G)$, the
\emph{sparsity} of the cut $(S,\nots)$ with respect to $T$ is
$\Phi(S,\nots)=\frac{|E(S,\nots)|}{\min\set{|T\cap S|,|T\cap
    \nots|}}$. The \emph{conductance} of the cut $(S,\nots)$ is
$\Psi(S,\nots)=\frac{|E(S,\nots)|}{\min\set{|E(S)|,|E(\nots)|}}$.  We
then denote: $\Phi(G)=\min_{S\subset V}\set{\Phi(S,\nots)}$, and
$\Psi(G)=\min_{S\subset V}\set{\Psi(S,\nots)}$.  Arora, Rao and
Vazirani~\cite{ARV} showed an efficient algorithm that, given a graph $G$ with a
set $T$ of $k$ terminals, produces a cut $(S,\nots)$ with
$\Phi(S,\nots)\leq \alphasc(k)\cdot \Phi(G)$, where
$\alphasc(k)=O(\sqrt{\log k})$. Their algorithm can also be used to
find a cut $(S,\nots)$ with $\Psi(S,\nots)\leq \alphasc(m)\cdot
\Psi(G)$, where $m=|E(G)|$. We denote this algorithm by $\algsc$, and
its approximation factor by $\alphasc$ from now on.

\smallskip
\iffull
\noindent {\bf Well-linkedness, bramble number and treewidth.} \fi
\ifabstract
\noindent{\bf Well-linkedness and treewidth.} \fi
The treewidth of a graph $G=(V,E)$ is typically defined via tree
decompositions.  A tree-decomposition for $G$ consists of a tree
$T=(V(T),E(T))$ and a collection of sets $\{X_v \subseteq V\}_{v \in
  V(T)}$ called bags, such that the following two properties are
satisfied: (i) for each edge $ab \in E$, there is some node $v \in
V(T)$ with both $a,b \in X_v$ and (ii) for each vertex $a \in V$, the
set of all nodes of $T$ whose bags contain $a$ form a non-empty 
(connected) subtree of $T$. The {\em width} of a given tree decomposition is
$\max_{v \in V(T)} |X_v| - 1$, and the treewidth of a graph $G$,
denoted by $\tw(G)$, is the width of a minimum-width tree
decomposition for $G$.

It is convenient to work with well-linked sets instead of treewidth.
We describe the relationship between them after formally defining the
notion of well-linkedness that we require.

\begin{definition}
  We say that a set $T$ of vertices is
  $\alpha$-well-linked\footnote{This notion of well-linkedness is
    based on edge-cuts and we distinguish it from node-well-linkedness
    that is directly related to treewidth. For technical reasons it is
    easier to work with edge-cuts and hence we use the term well-linked to mean
    edge-well-linkedness, and explicitly use the term
    node-well-linkedness when necessary.} in $G$, iff for any
  partition $(A,B)$ of the vertices of $G$ into two subsets,
  $|E(A,B)|\geq \alpha\cdot \min\set{|A\cap T|,|B\cap T|}$.
\end{definition}

\begin{definition}
  We say that a set $T$ of vertices is \emph{node-well-linked} in $G$,
  iff for any pair $(T_1,T_2)$ of equal-sized subsets of $T$, there is
  a collection $\pset$ of $|T_1|$ {\bf node-disjoint} paths,
  connecting the vertices of $T_1$ to the vertices of $T_2$. (Note
  that $T_1$, $T_2$ are not necessarily disjoint, and we allow empty
  paths).
\end{definition}

\begin{lemma}[Reed \cite{Reed-chapter}]
  \label{lem:tw-wl}
  Let $k$ be the size of the largest node-well-linked set in $G$. Then
  $k \le \tw(G) \le 4k$.
\end{lemma}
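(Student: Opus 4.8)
The plan is to prove the two bounds separately; $k\le\tw(G)$ is the standard easy direction, while $\tw(G)\le 4k$ carries the real content.

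\textbf{Lower bound.} Given a node-well-linked set $T$ with $|T|=k$, I would show that every tree decomposition of $G$ has a bag of size $\ge k$, which gives $\tw(G)\ge k-1$ (the displayed inequality then holds up to the usual additive term, immaterial for the applications). Fix a tree decomposition $(\mathcal T,\{X_t\})$. Each edge $e=t_1t_2$ cuts $\mathcal T$, hence $V(G)$, into two sides and induces a separation whose separator is the adhesion $X_{t_1}\cap X_{t_2}$. Orient $e$ toward the side holding strictly more vertices of $T$ outside the adhesion, and pick a sink $t^*$. For each component of $\mathcal T-t^*$, with far side $B$ and adhesion $Y\subseteq X_{t^*}$, the $T$-heavier side sits at $t^*$, and Menger's theorem applied to this separation together with node-well-linkedness of $T$ forces $|Y|\ge|T\cap(B\setminus X_{t^*})|$. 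Since $T$ decomposes as $T\cap X_{t^*}$ together with the pieces $T\cap(B\setminus X_{t^*})$ over the components, and all these adhesions lie in $X_{t^*}$, one gets $|X_{t^*}|\ge|T|=k$ (working with a minimal/lean tree decomposition if one wants to control overlaps between adhesions). Equivalently, one can simply invoke the treewidth--bramble duality \cite{ST-BN}: a node-well-linked set of size $k$ certifies a bramble (or haven) of order $k$.

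\textbf{Upper bound.} I would construct a tree decomposition of $G$ of width $\le 4k$ by a top-down recursion. A \emph{piece} is a pair $(U,Z)$ with $Z\subseteq U\subseteq V(G)$ such that $Z$ separates $U\setminus Z$ from $V(G)\setminus U$ and $|Z|\le 3k$; the goal for each piece is a tree decomposition of $G[U]$ of width $\le 4k$ whose root bag contains $Z$, starting from $(V(G),\emptyset)$. If $|U|\le 4k+1$, output the single bag $U$. Otherwise pad $Z$ to a set $Z'\subseteq U$ of size exactly $3k$ and find a separator $S$ of $G[U]$ with $|S|\le k$ that is \emph{balanced for} $Z'$: every component of $G[U]-S$ meets $Z'$ in at most $\frac23|Z'|=2k$ vertices. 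Group the components of $G[U]-S$ into two nonempty parts $A_1,A_2$ with each $|Z'\cap A_i|\le 2k$ (a single ``heavy'' component goes alone, the rest are balanced by Claim~\ref{claim: simple partition}), and recurse on $(A_1\cup S,(Z\cap A_1)\cup S)$ and $(A_2\cup S,(Z\cap A_2)\cup S)$. Since $Z\subseteq Z'$, each new interface has size $\le 2k+|S|\le 3k$, every bag has size $\le 4k+1$, and each child is a strictly smaller piece, so the recursion terminates and yields a width-$4k$ tree decomposition.

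\textbf{The main obstacle.} The entire construction hinges on the fact that a balanced separator $S$ for $Z'$ of size $\le k$ always exists; this is the one nontrivial point. Suppose it fails for some piece. Since $G$ — and hence $G[U]$ — has no node-well-linked set of size $>k$, every subset $Y$ of $Z'$ with $|Y|\ge k+1$ fails to be node-well-linked, so Menger produces a separator of $Y$ in $G[U]$ of size $\le k$; but no size-$\le k$ separator is balanced for $Z'$, hence each one leaves a component carrying more than $\frac23$ of $Z'$, into which one can move and repeat. Tracking the separators met along the way and recombining them via Claim~\ref{claim: simple partition}, and running the whole thing as an augmenting-path / LP-duality argument, one builds a node-well-linked subset of $Z'$ of size $k+1$, contradicting maximality. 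Making this precise — and in particular matching the interface bound $3k$ against the bag bound $4k+1$ so that the final width is exactly $4k$ — is the technical heart of the lemma; this is essentially Reed's argument \cite{Reed-chapter}, and an equivalent route is to bound the order of every bramble of $G$ by $4k$ in terms of the size of the largest node-well-linked set and then appeal to treewidth--bramble duality.
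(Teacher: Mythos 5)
First, a point of reference: the paper does not prove Lemma~\ref{lem:tw-wl} at all --- it is imported verbatim from Reed \cite{Reed-chapter} --- so your proposal can only be measured against the standard argument, whose skeleton (a bag/orientation argument for the lower bound, and for the upper bound a recursion on pieces $(U,Z)$ with interface size $3k$, bags of size $4k+1$, driven by balanced separators of order at most $k$) you have reproduced correctly. The problem is that the one nontrivial ingredient is exactly the step you label ``the main obstacle'' and then do not prove: that the absence of a node-well-linked set of size $k+1$ forces every set $Z'$ of $3k$ vertices to admit a separator $S$ with $|S|\le k$ such that each component of $G[U]-S$ keeps at most $2k$ vertices of $Z'$. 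The sketch you offer for it does not work as stated: failure of node-well-linkedness of a single $(k+1)$-subset $Y\subseteq Z'$ only yields, via Menger, a small separator for \emph{one} bad pair of subsets of $Y$; nothing guarantees this separator interacts with $Z'$ in a way that lets you ``move into the heavy component and repeat'' with any invariant, and the separators collected along the way belong to different subsets, so ``recombining them via Claim~\ref{claim: simple partition}'' is not a defined operation. The actual proofs of this step pick a single extremal object --- e.g.\ a separation of minimum order among those splitting $Z'$ suitably, uncrossed using submodularity of the order of separations, or one auxiliary max-flow computation --- and from its failure to be balanced extract the linked set directly. Saying ``this is essentially Reed's argument'' concedes that the heart of the $\tw(G)\le 4k$ direction is missing; the same applies to the alternative route you mention (bounding every bramble's order by $4k$), which is equally nontrivial.

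Two secondary issues. In the lower bound, the step ``one gets $|X_{t^*}|\ge|T|$'' does not follow from summing the per-branch inequalities, because the adhesions of different branches at $t^*$ may coincide or overlap inside $X_{t^*}$, and the disjoint-path systems for different branches are not simultaneously disjoint; appealing to a ``lean'' decomposition is a genuine extra argument, not a parenthetical. Also, the fallback via treewidth--bramble duality \cite{ST-BN} is weaker than you claim: a node-well-linked set of size $k$ does not directly certify a bramble of order $k$; the straightforward majority-component construction (the one the paper itself uses in Corollary~\ref{cor: from well-linkedness to treewidth}) only gives order about $k/4$, so that route loses a constant factor. This matters little in the end, since under the paper's definition the inequality $k\le\tw(G)$ is itself only true up to an additive constant (three leaves of a star already form a node-well-linked set while the treewidth is $1$), which is presumably why the paper simply cites \cite{Reed-chapter}; your remark that the additive slack is immaterial for the applications is fair. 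The substantive gap is the unproven balanced-separator lemma in the upper bound.
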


\iffull

We also use the notion of brambles, defined below. Brambles will help
us relate the different notions of well-linkedness, and treewidth to
each other.

\begin{definition}
  A bramble in a graph $G$ is a collection
  $\bset=\set{G_1,\ldots,G_r}$ of connected sub-graphs of $G$, where
  for every pair $G_i,G_j$ of the subgraphs, either $G_i$ and $G_j$
  share at least one vertex, or there is an edge $e=(u,v)$ with $u\in
  G_i$, $v\in G_j$. We say that a set $S$ of vertices is a hitting set
  for the bramble $\bset$, iff for each $G_i\in \bset$, $S\cap
  V(G_i)\neq \emptyset$. The order of the bramble $\bset$ is the
  minimum size of any hitting set $S$ for $\bset$. The bramble number
  of $G$, $\BN(G)$, is the maximum order of any bramble in $G$.
\end{definition}

\begin{theorem}\cite{ST-BN}
For every graph $G$, $\tw(G)=\BN(G)-1$.
\end{theorem}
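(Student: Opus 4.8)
The plan is to prove the two inequalities $\BN(G)\le \tw(G)+1$ and $\tw(G)\le\BN(G)-1$ separately. The first is short; the second is the substantive direction and is essentially the content of the Seymour--Thomas theorem \cite{ST-BN}, so I expect the bulk of the work to be there.

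\textbf{Proof that $\BN(G)\le\tw(G)+1$.} Fix a tree decomposition $(T,\set{X_t}_{t\in V(T)})$ of $G$ of width $\tw(G)$, and let $\bset=\set{G_1,\dots,G_r}$ be an arbitrary bramble. For each $i$, let $T_i$ be the subgraph of $T$ induced by the nodes $t$ with $X_t\cap V(G_i)\neq\emptyset$; since $G_i$ is connected, the connectivity axiom of tree decompositions implies that $T_i$ is a nonempty subtree of $T$. Any two bramble elements touch --- they either share a vertex, or are joined by an edge, whose two endpoints then lie together in some bag --- so the subtrees $T_1,\dots,T_r$ pairwise intersect. By the Helly property for subtrees of a tree, $\bigcap_i V(T_i)\neq\emptyset$; picking any node $t$ in this common intersection, $X_t$ meets $V(G_i)$ for every $i$, i.e.\ $X_t$ is a hitting set for $\bset$. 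Hence the order of $\bset$ is at most $|X_t|\le \tw(G)+1$, and maximizing over all brambles gives the bound.

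\textbf{Proof that $\tw(G)\le\BN(G)-1$.} It suffices to prove the contrapositive: if $G$ has no bramble of order at least $k+1$, then $G$ has a tree decomposition of width at most $k-1$. I would prove this by induction, establishing the stronger statement that for every $W\sse V(G)$ with $|W|\le k$ there is such a decomposition in which $W$ lies inside a single bag (base case $W=V(G)$: a one-bag decomposition). For the inductive step one removes $W$, and for each component $C_j$ of $G-W$ recurses on the subgraph induced by $C_j$ together with $W$, but with a smaller or cleverly chosen interface set, then glues the pieces at a bag equal to $W$; verifying the two tree-decomposition axioms for the glued object is routine. The real content --- and the point I expect to be the main obstacle --- is arranging the recursion to make progress, in particular when $|W|=k$ and no vertex can be added to the interface without exceeding the bag-size budget. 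Here one must argue that, unless the components of $G-W$ together with parts of $W$ can be assembled into a bramble of order at least $k+1$ (contradicting the hypothesis), some separation into strictly smaller interfaces must exist. This dichotomy is the technical heart of \cite{ST-BN} and is not a routine calculation.

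An alternative route for the hard direction, which isolates the same difficulty in a different guise, goes through \emph{havens}: a haven of order $k+1$ is a map $\beta$ assigning to each $X\sse V(G)$ with $|X|\le k$ a component $\beta(X)$ of $G-X$, monotone in the sense that $X\sse Y$ implies $V(\beta(Y))\sse V(\beta(X))$. One shows that (a) the absence of a tree decomposition of width $\le k-1$ forces a haven of order $k+1$ --- via the cops-and-robber game, in which $k$ cops capture an invisible, arbitrarily fast robber iff $\tw(G)\le k-1$, with the robber's escape strategies corresponding exactly to havens --- and (b) a haven of order $k+1$ converts into a bramble of order $\ge k+1$ by a standard construction (brambles and havens have equal maximum order). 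In this formulation the main obstacle reappears as a ``no recontamination helps'' statement: if $k$ cops can capture the robber at all, they can do so \emph{monotonically} (never letting the robber's reachable region regrow), since only monotone winning strategies unfold into genuine tree decompositions of width $\le k-1$.
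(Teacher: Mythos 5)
The paper does not prove this statement at all: it is quoted from Seymour and Thomas \cite{ST-BN}, and the only direction the paper ever invokes is $\tw(G)\ge \BN(G)-1$ (in the appendix proof of Corollary~\ref{cor: from well-linkedness to treewidth}, a bramble of large order is built to lower-bound treewidth). Your proof of exactly that inequality, $\BN(G)\le\tw(G)+1$, via the subtrees traced by connected bramble elements and the Helly property, is correct and complete, so the half of the theorem the paper actually uses is fully covered by your argument.

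The hard direction, however, is not proved but only sketched, and the sketch as written would fail. Your proposed inductive strengthening --- for every $W\sse V(G)$ with $|W|\le k$ there is a tree decomposition of width at most $k-1$ with $W$ inside a single bag --- is false: take $G$ the path $v_1v_2v_3$ and $k=2$, so $G$ has no bramble of order $3$, and let $W=\set{v_1,v_3}$. In any tree decomposition with a bag containing both $v_1$ and $v_3$, the three subtrees $T_{v_1},T_{v_2},T_{v_3}$ pairwise intersect, so by the same Helly argument you used for the easy direction some bag contains all of $v_1,v_2,v_3$, forcing width at least $2$. The correct Seymour--Thomas (or Bellenbaum--Diestel) induction does not quantify over arbitrary small sets $W$; it quantifies over brambles, requiring the distinguished bag to be a (minimum-order) cover of the current bramble and working with ``$\mathcal{B}$-admissible'' decompositions, and this is precisely what makes the recursion go through. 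Separately, in your alternative route the game characterization is misstated: the helicopter game in which $k+1$ cops catch an \emph{invisible}, arbitrarily fast robber characterizes pathwidth (node search number), not treewidth; for treewidth and for the correspondence with havens the robber must be visible (the invisible variant matching treewidth is the inert-robber game). So as a proof of the full equality there is a genuine gap: the substantive inequality $\tw(G)\le\BN(G)-1$ still rests entirely on the citation, and the outlined path to it would need the inductive statement repaired along the lines above.
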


\fi

We then obtain the following simple corollary, \iffull whose proof
appears in the Appendix.\fi\ifabstract whose proof appears in the full
version.\fi

\begin{corollary}\label{cor: from well-linkedness to treewidth}
  Let $G$ be any graph with maximum vertex degree at most $\Delta$,
  and let $T$ be any subset of vertices, such that $T$ is
  $\alpha$-well-linked in $G$, for some $0<\alpha<1$. Then $\tw(G)\geq
  \frac{\alpha \cdot |T|}{3\Delta}-1$.
\end{corollary}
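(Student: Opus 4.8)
The plan is to prove the bound directly, without detouring through node-well-linked sets or brambles: fix a tree decomposition of $G$ of width exactly $\tw(G)$, find a single bag that acts as a balanced vertex separator for $T$, convert this small \emph{vertex} cut into a small \emph{edge} cut using the degree bound, and then apply the definition of $\alpha$-well-linkedness to that edge cut. The losses $3$, $\Delta$ and $-1$ in the statement will fall out naturally from these three reductions.

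In more detail: first, fix a width-$\tw(G)$ tree decomposition $(\mathcal{T},\{X_v\})$ of $G$. It is a standard fact (proved by a charging/orientation argument on $\mathcal{T}$, pointing each tree edge toward the side that holds more than half of $T$ and taking a bag with no outgoing edge) that some bag $X:=X_v$ is a balanced separator for $T$, meaning every connected component of $G-X$ contains at most $|T|/2$ vertices of $T$; moreover $|X|\le \tw(G)+1$. Second, regard the components of $G-X$, together with $X$ itself, as ``pieces'', each piece weighted by the number of its vertices that lie in $T$, so the total weight is $|T|$. Each component-piece has weight at most $|T|/2\le 2|T|/3$, and the piece $X$ has weight at most $\tw(G)+1$. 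If $\tw(G)+1>2|T|/3$, then since $0<\alpha<1\le\Delta$ forces $\alpha/(3\Delta)\le 2/3$, we already get $\tw(G)>2|T|/3-1\ge \alpha|T|/(3\Delta)-1$ and are done; otherwise all pieces have weight at most $2|T|/3$, so Claim~\ref{claim: simple partition} partitions the pieces into two groups of total weight at least $|T|/3$ each. Let $S$ be the union of the vertex sets of the pieces in the first group and $\bar S$ the union of those in the second; then $(S,\bar S)$ partitions $V(G)$ with $|S\cap T|,|\bar S\cap T|\ge |T|/3$. Third, note that since $X$ lies wholly in one of $S,\bar S$, any edge with both endpoints outside $X$ has its endpoints in the same component of $G-X$, hence on the same side; therefore every edge of $E(S,\bar S)$ is incident to $X$, so $|E(S,\bar S)|\le \Delta\,|X|\le \Delta(\tw(G)+1)$. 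Finally, $\alpha$-well-linkedness of $T$ gives $|E(S,\bar S)|\ge \alpha\cdot\min\{|S\cap T|,|\bar S\cap T|\}\ge \alpha|T|/3$, and combining the two bounds on $|E(S,\bar S)|$ yields $\Delta(\tw(G)+1)\ge \alpha|T|/3$, i.e. $\tw(G)\ge \alpha|T|/(3\Delta)-1$.

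As the paper says, this is a ``simple corollary'', and I expect no serious obstacle. The only ingredient that is not completely immediate is the existence of the balanced-separator bag, which is folklore; and the only place needing a little care is the bookkeeping of terminals lying inside the separating bag $X$ (handled above by treating $X$ as its own piece and isolating the trivial case $\tw(G)+1>2|T|/3$). One could alternatively extract from $T$ an integral node-well-linked subset of size $\Omega(\alpha|T|/\Delta)$ and invoke Reed's Lemma~\ref{lem:tw-wl} (or the bramble characterization of treewidth), but the example $G=K_{1,|T|}$ shows that such an extraction must lose a factor $\Theta(\Delta)$ and is somewhat delicate, so I prefer the direct separator argument, which also produces the stated constants exactly.
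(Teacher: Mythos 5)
Your proof is correct, and it takes a genuinely different route from the paper. The paper certifies large treewidth from below: for every vertex set $X$ with $|X|<\frac{\alpha|T|}{3\Delta}$ it shows that $G\setminus X$ has a unique component containing more than half of $T$, collects these components into a bramble, and invokes the Seymour--Thomas min--max theorem $\tw(G)=\BN(G)-1$ to conclude. You instead argue from above: you start from a width-$\tw(G)$ tree decomposition, extract one bag that is a balanced separator for $T$ (the folklore orientation argument you sketch is fine), and turn the resulting vertex cut of size at most $\tw(G)+1$ into an edge cut of size at most $\Delta(\tw(G)+1)$ that $\alpha$-well-linkedness forces to have at least $\alpha|T|/3$ edges. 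Note that the inner counting step is identical in both proofs: Claim~\ref{claim: simple partition} applied to the components of $G\setminus X$ together with $X$ itself, the observation that every crossing edge is incident to $X$, and the well-linkedness inequality; your handling of the degenerate case $\tw(G)+1>2|T|/3$ plays the role of the paper's implicit assumption that $X$ is small. What the two routes buy: yours is more elementary and self-contained, since it needs only the easy direction that a small-width tree decomposition yields small balanced separators, whereas the paper's version rests on the nontrivial bramble characterization of treewidth; on the other hand, the paper's formulation produces an explicit bramble (a reusable certificate, one big component $C_X$ for \emph{every} small $X$), which meshes with the bramble machinery it discusses elsewhere. Both arguments give exactly the stated constant $\frac{\alpha|T|}{3\Delta}-1$.
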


Lemma~\ref{lem:tw-wl} guarantees that any graph $G$ of treewidth $k$
contains a set $X$ of $\Omega(k)$ vertices, that is node-well-linked
in $G$. Kreutzer and Tazari~\cite{KreutzerT10} give a constructive
version of this lemma, obtaining a set $X$ with slightly weaker
properties. Lemma~\ref{lem: find well-linked set} below rephrases, in
terms convenient to us, Lemma 3.7 in~\cite{KreutzerT10} .
 
 \begin{lemma}\label{lem: find well-linked set} There is an
 efficient algorithm, that, given a graph $G$ of treewidth
   $k$, finds a set $X$ of $\Omega(k)$ vertices, such that $X$ is
   $\alpha^*=\Omega(1/\log k)$-well-linked in $G$. Moreover, for any
   partition $(X_1,X_2)$ of $X$ into two equal-sized subsets, there is
   a collection $\pset$ of paths connecting every vertex of $X_1$ to a
   distinct vertex of $X_2$, such that every vertex of $G$
   participates in at most $1/\alpha^*$ paths in $\pset$.
 \end{lemma}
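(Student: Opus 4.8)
The plan is to establish Lemma~\ref{lem: find well-linked set} as an algorithmic counterpart of Lemma~\ref{lem:tw-wl}, replacing the exact (but intractable) min--max characterization of treewidth with the approximate balanced-separator algorithm $\algsc$, and then using a well-linked decomposition to upgrade the set it produces into a genuinely well-linked one. First I would reduce to the bounded-degree case: replace each vertex $v$ of $G$ by a tree gadget $T_v$ whose leaves are in bijection with the edges incident to $v$, obtaining a graph $G'$ of maximum degree $3$ with $\tw(G')=\Theta(\tw(G))$; an $\alpha$-well-linked set of leaves of $G'$ together with a vertex-congestion-$O(1/\alpha)$ path system in $G'$ pulls back to the corresponding object in $G$ with only a constant-factor loss. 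So assume $\Delta(G)=O(1)$ and $\tw(G)=\Theta(k)$.

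Next I would extract a set $Z$ that resists balanced separators. Using the exact fact that $\tw(G)\le w$ iff every vertex subset $W$ has a $\tfrac12$-balanced separator of size $\le w+1$ (here Claim~\ref{claim: simple partition} converts a central bag of a hypothetical narrow tree decomposition into a balanced separator), run the standard recursive tree-decomposition builder: at each step call $\algsc$ to find an approximately minimum $\tfrac23$-balanced separator of the current piece, and recurse. Had the builder succeeded it would certify $\tw(G)=O(\alphasc\cdot k^{*})$ for the relevant optimum $k^{*}$; since $\tw(G)=\Theta(k)$, with the budget tuned accordingly it must instead get stuck on a set $Z$ with $|Z|=\Omega(k)$ every $\tfrac13$-balanced vertex separator of which has size $\Omega(k/\alphasc)=\Omega(k/\sqrt{\log k})$. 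In particular every equal bipartition of $Z$ requires that many internally vertex-disjoint paths in $G$.

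Finally I would run a well-linked decomposition on $Z$. Repeatedly use $\algsc$ to find an approximately sparsest cut of the current piece with respect to the surviving terminals of $Z$; as long as its sparsity is below a threshold $\alpha^{*}=\Theta(1/\log k)$, delete the cut edges and continue on the resulting pieces. Because $Z$ has no cheap $\tfrac13$-balanced separator, every sparse cut that arises is unbalanced --- only a small fraction of the terminals lies on its smaller side --- so the recursion essentially proceeds down one branch; a charging argument (the total number of deleted edges is at most $\alpha^{*}|Z|$, and the surviving terminal set never drops below a constant fraction of $|Z|$) produces a single subgraph $G^{*}\subseteq G$ containing a set $X\subseteq Z$ with $|X|=\Omega(k)$ that is $\alpha^{*}$-well-linked in $G^{*}$, hence in $G$. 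For the path system: given an equal bipartition $(X_1,X_2)$ of $X$, $\alpha^{*}$-edge-well-linkedness plus LP duality yields a fractional flow routing $\alpha^{*}$ units between the two halves; scaling by $1/\alpha^{*}$ and taking a flow-path decomposition gives $|X_1|$ paths matching $X_1$ to $X_2$ with edge-congestion $1/\alpha^{*}$, and since $\Delta(G)=O(1)$ the vertex-congestion is $O(1/\alpha^{*})$ as well.

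I expect the main obstacle to be the interaction of the last two steps: making the approximation losses from $\algsc$ and from the well-linked decomposition compound to only the $\Theta(1/\log k)$ factor in the well-linkedness while costing merely a constant factor in $|X|$. The delicate bookkeeping is ensuring that the well-linked decomposition leaves a single piece retaining a constant fraction of the terminals of $Z$, rather than scattering them across many well-linked pieces --- which is exactly the point of having arranged in the second step that $Z$ resists balanced cuts. This lemma is the rephrasing of Lemma~3.7 of~\cite{KreutzerT10}, where the full argument is carried out.
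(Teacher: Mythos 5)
First, a point of comparison: the paper does not actually prove this lemma --- it is stated as a rephrasing of Lemma~3.7 of \cite{KreutzerT10} and used as a black box --- so your sketch has to stand on its own, and it does not take the route of \cite{KreutzerT10} (which argues directly with balanced \emph{vertex} separators and node-based linkedness in $G$ itself).

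The genuine gap is your opening reduction. You replace each vertex $v$ by a tree gadget $T_v$ and assert that an $\alpha$-well-linked set of leaves of $G'$, together with a vertex-congestion-$O(1/\alpha)$ path system in $G'$, ``pulls back to the corresponding object in $G$ with only a constant-factor loss.'' Both halves of that claim fail when $G$ has unbounded degree, which is exactly the case the lemma must cover (the paper applies the lemma to the original graph $G$ inside the proof of Theorem~\ref{thm:log-degree}, i.e.\ before any degree reduction is available, and the vertex-congestion guarantee is what drives that proof). Concretely, take $G=K_n$, so $k=\Theta(n)$. In the expansion $G'$, let $Z$ be the set of all $n-1$ leaves of the single gadget $T_{v_1}$: any sublinear vertex set $S$ leaves almost all gadgets $T_j$, $j\neq 1$, untouched and mutually adjacent, so almost all surviving leaves of $T_{v_1}$ lie in one component of $G'-S$; hence $Z$ has no balanced separator of size $o(n)$, your second step can legitimately get stuck on this $Z$, and your third step can return $X\subseteq Z$. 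Collapsing the gadget maps $X$ to the single vertex $v_1$ of $G$, so the pulled-back set has size $1$, not $\Omega(k)$. The congestion claim fails for the same reason: a vertex $v$ of $G$ is used by every path that merely touches $T_v$, and with congestion $1/\alpha$ per gadget vertex this can be as large as $\Theta(\deg_G(v)/\alpha)$, a loss of $\Delta(G)$, not $O(1)$. Since the entire difficulty of the lemma is the size and node-congestion guarantee in graphs of unbounded degree, this is a missing idea rather than bookkeeping: the argument has to be run with vertex separators and node-capacitated flows in $G$ directly (as in \cite{KreutzerT10}), not with edge cuts after a degree-reducing expansion.

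Two smaller remarks. Your second and third steps are structurally sound (get stuck on a set with no cheap balanced separator, then run a terminal-based decomposition and argue every sparse cut is terminal-unbalanced), but if the decomposition is driven by the edge-sparsest-cut routine $\algsc$, the standard charging (compare Theorem~\ref{thm: well-linked-decomposition}) only supports a threshold of order $1/(\alphasc(k)\log k)=\Theta(1/\log^{1.5}k)$, so you should check whether your compounded losses really give the claimed $\alpha^*=\Omega(1/\log k)$ rather than $\Omega(1/\log^{1.5}k)$. Also, of the claim $\tw(G')=\Theta(\tw(G))$ only the direction $\tw(G')\geq \tw(G)$ is needed (and it is the one that is immediate, by minor monotonicity).
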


\noindent{\bf Well-linked decompositions.}
Let $S$ be any subset of vertices in $G$. We say that $S$ is
$\alpha$-good\footnote{The same property was called "bandwidth
  property" in~\cite{Raecke}, and in~\cite{Chuzhoy11,ChuzhoyL12}, set
  $S$ with this property was called $\alpha$-well-linked. We choose
  this notation to avoid confusion with other notions of
  well-linkedness used in this paper.}, iff for any partition $(A,B)$
of $S$, $|E(A,B)|\geq \alpha\cdot \min\set{|\out(A)\cap
  \out(S)|,|\out(B)\cap \out(S)|}$. An equivalent definition is
as follows. Start with graph $G$ and subdivide each edge $e \in \out_G(S)$ by a vertex $t_e$.
Let $\tset_S=\set{t_e\mid e\in \out_G(S)}$ be the set of these new vertices, and let $H$ be the sub-graph of the resulting graph, induced by $S\cup \tset_S$.
 Then $S$ is $\alpha$-good in $G$
iff  $\tset_S$ is $\alpha$-well-linked in $H$.

A set $D:\out(S)\times \out(S)\rightarrow\reals^+$ of demands defines,
for every pair $e,e'\in \out(S)$, a demand $D(e,e')$. We say that $D$
is a $c$-restricted set of demands, iff for every $e\in \out(S)$,
$\sum_{e'\in \out(S)}D(e,e')\leq c$. Assume that $S$ is an
$\alpha$-good subset of vertices in $G$. From the duality of cuts and
flows, and from the known bounds on the flow-cut gap in undirected
graphs \cite{LLR}, if $D$ is any set of $c$-restricted demands over
$\out(S)$, then it can be fractionally routed inside $G[S]$ with
edge-congestion at most $O(c\log k'/\alpha)$, where $k'=|\out(S)|$.

The following theorem, in its many variations, (sometimes under the
name of "well-linked decomposition") has been used extensively in
routing and graph decomposition (see
e.g. \cite{Raecke,ANF,CKS,RaoZhou,Andrews,Chuzhoy11,ChuzhoyL12,ChekuriE13}). \ifabstract
The proof appears in the full version.\fi \iffull For completeness,
the proof appears in Appendix.\fi

\begin{theorem}\label{thm: well-linked-decomposition}
  Let $S$ be any subset of vertices of $G$, with $|\out(S)|=k'$, and
  let $0<\alpha<\frac{1}{8\alphasc(k')\cdot \log k'}$ be a
  parameter. Then there is an efficient algorithm to compute a
  partition $\wset$ of $S$, such that for each $W\in \wset$,
  $|\out(W)|\leq k'$ and $W$ is $\alpha$-good. Moreover, $\sum_{W\in
    \wset}|\out(W)|\leq k'(1+16\alpha\cdot \alphasc(k')\cdot \log
  k')=k'(1+O(\alpha\log^{3/2}k'))$. The parameter $\alphasc(k')$ can be
  set to $1$ if  the efficiency of the algorithm is not relevant.
\end{theorem}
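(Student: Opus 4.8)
I would run the standard iterative well-linked decomposition. Maintain a partition $\wset$ of $S$, initialized to $\wset=\{S\}$, and repeatedly refine it until every part is $\alpha$-good. To process a part $W\in\wset$: first split $W$ into its connected components (this creates no cut edges), and discard any component with empty boundary, since it is trivially $\alpha$-good. For a connected part $W$ with $\out_G(W)\neq\emptyset$, form the auxiliary graph $H_W$ by subdividing every edge of $\out_G(W)$ with a fresh terminal, so $|\tset_W|=|\out_G(W)|$ and, by the equivalence noted just before the theorem, $W$ is $\alpha$-good in $G$ iff $\tset_W$ is $\alpha$-well-linked in $H_W$. Run \algsc on $H_W$ with terminal set $\tset_W$. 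If the cut $(X,\bar X)$ it returns has sparsity at least $\alpha\cdot\alphasc(k')$ with respect to $\tset_W$, then, since \algsc has approximation factor $\alphasc(|\tset_W|)\le\alphasc(k')$ (using $|\tset_W|\le k'$, argued below), we get $\Phi(H_W)\ge\alpha$, so $W$ is $\alpha$-good and we freeze it. Otherwise the returned cut has sparsity $<\alpha\cdot\alphasc(k')$, and we replace $W$ by $W_1=W\cap X$ and $W_2=W\cap\bar X$ in $\wset$. Each split strictly decreases the vertex count of the part being split, and singletons are trivially $\alpha$-good, so the process halts after a polynomial number of steps.

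\textbf{Keeping every boundary at most $k'$.} When a connected $W$ is split along a cut of sparsity $\beta:=\alpha\cdot\alphasc(k')$ with respect to $\tset_W$, let $a$ and $b$ be the numbers of edges of $\out_G(W)$ whose endpoint in $W$ lies in $W_1$, resp.\ $W_2$, with $a\le b$, and let $\ell=|E_G(W_1,W_2)|$. I would first argue that the \algsc cut may be assumed ``aligned'', i.e.\ each terminal $t_e$ is on the same side as its unique neighbour in $W$: moving $t_e$ to that side never increases the number of crossing edges, hence does not increase sparsity. Then $\min\{|\tset_W\cap X|,|\tset_W\cap\bar X|\}=\min\{a,b\}=a$, so $\ell<\beta a$. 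Since $\beta<1$ (this follows from the hypothesis, as $\beta\log k'<1/8$) and $a\le k_W/2$ with $k_W:=|\out_G(W)|$, we obtain $|\out_G(W_1)|=\ell+a\le(1+\beta)a\le\frac{3}{4}k_W$ and $|\out_G(W_2)|=\ell+b<a+b=k_W$. By induction from $|\out_G(S)|=k'$, every part ever produced has boundary at most $k'$, which justifies the use of $\alphasc(k')$ above.

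\textbf{Bounding $\sum_{W\in\wset}|\out_G(W)|$.} View the refinement as a binary tree $\mathcal{T}$ rooted at $S$ whose leaves form the final partition. Since each edge of $\out_G(W)$ descends to exactly one child while each edge of $E_G(W_1,W_2)$ becomes a boundary edge of both children, $\sum_{\text{leaves}}|\out_G(W)|=k'+2\sum_{\text{internal }W}\ell_W$ (with $\ell_W=0$ at component splits). To bound $L:=\sum_{\text{internal }W}\ell_W$, charge $\ell_W$ uniformly among the $a_W$ edges of $\out_G(W)$ on the smaller side of the split, so each receives charge less than $\beta$. Following a single edge $e$ down the nested chain of sets that contain it, $e$ is charged only at a ``real'' split where it lies on the smaller side, and at each such split the boundary of its containing set shrinks by a factor at least $4/3$ (shown above), while this boundary never grows along the chain; since it starts at most $k'$ and is always at least $1$, $e$ is charged at most $\log_{4/3}k'=O(\log k')$ times, a total of $O(\beta\log k')$. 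The number of edges that ever serve as a boundary edge is at most $k'+L$ (the $k'$ edges of $\out_G(S)$, plus for each real split its $\ell_W$ newly separated edges, each separated only once). Summing charges both ways gives $L\le(k'+L)\cdot O(\beta\log k')$. The hypothesis $\alpha<\frac{1}{8\alphasc(k')\log k'}$ makes $\beta\log k'<1/8$, so the coefficient $1-O(\beta\log k')$ is bounded below by a positive constant; solving yields $L\le 8\alpha\,\alphasc(k')\log k'\cdot k'$ with the constants tracked, hence $\sum_{W\in\wset}|\out_G(W)|\le k'(1+16\alpha\,\alphasc(k')\log k')$. If efficiency is irrelevant, replace \algsc by an exact sparsest-cut computation and set $\alphasc(k')=1$ throughout.

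\textbf{Main obstacle.} The algorithm is straightforward; the delicate part is the accounting in the last step. One must handle disconnected parts and the aligned-cut reduction carefully so that (i) every produced part really has boundary $\le k'$, (ii) each boundary edge genuinely lands on the smaller side only $O(\log k')$ times, and (iii) the self-referential inequality $L\lesssim(k'+L)\beta\log k'$ can be closed --- which is exactly where the quantitative hypothesis on $\alpha$ is used.
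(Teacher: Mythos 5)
Your proposal is correct and follows essentially the same route as the paper's proof: iteratively split parts along approximately-sparsest cuts of the subdivided auxiliary graph until every part is $\alpha$-good, keep all boundaries at most $k'$ via the smaller-side shrinkage, and bound the newly created cut edges by a charging argument (your self-referential inequality $L\le(k'+L)\cdot O(\beta\log k')$ is just the closed form of the paper's direct-plus-indirect geometric charging). The one bookkeeping point to tighten is that once an edge becomes a cut edge it can later be charged via \emph{either} endpoint, so the per-edge charge bound picks up a factor $2$; with the sharper shrink factor $(1+\beta)a\le\frac{9}{16}k_W$ that you actually establish (rather than the rounded $\frac34$), the inequality still closes with the stated constant $16$.
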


\noindent {\bf Pre-processing to reduce maximum degree.}
Let $G$ be any graph with $\tw(G) = k$. The proofs of
Theorems~\ref{thm:main1-intro} and \ref{thm:main2-intro} work with
edge-well-linked sets instead of the node-well-linked ones. In order
to be able to translate between both types of well-linkedness and the
treewidth, we need to reduce the maximum vertex degree of the input
graph $G$.  Using the cut-matching game of Khandekar, Rao and
Vazirani~\cite{KRV}, one can reduce the maximum vertex degree to
$O(\log^3 k)$, while only losing a $\poly\log k$ factor in the
treewidth, as was noted in \cite{ChekuriE13} (see Remark 2.2).  We
state the theorem formally below. A brief proof sketch appears in
\ifabstract the full version.\fi \iffull the Appendix for
completeness.\fi

\begin{theorem}\label{thm:log-degree}
  Let $G$ be any graph with treewidth $k$. Then there is an efficient randomized algorithm to compute a
  subgraph $G'$ of $G$, with maximum vertex degree at most $O(\log^3 k)$ 
  such that $\tw(G')=\Omega(k/\log^6 k)$. 
\end{theorem}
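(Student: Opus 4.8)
The plan is to extract a large well-linked ``skeleton'' from $G$, wire it into a bounded-degree expander whose edges are realized by short routing paths inside $G$, and then argue that this expander certifies that the subgraph spanned by those paths has large treewidth. First I would apply Lemma~\ref{lem: find well-linked set} to obtain a set $X$ with $|X|=\Omega(k)$ that is $\alpha^{*}$-well-linked in $G$ for $\alpha^{*}=\Omega(1/\log k)$, and with the additional property that every partition of $X$ into two equal halves admits a perfect matching routed by a collection of paths of vertex-congestion at most $1/\alpha^{*}=O(\log k)$ (discard one vertex if $|X|$ is odd). Next, run the cut-matching game of Khandekar, Rao and Vazirani on the ground set $X$: in round $i$, feed the union $M_1\cup\cdots\cup M_{i-1}$ of the matchings chosen so far to the (efficient) KRV cut-player strategy to obtain a balanced bipartition $(A_i,B_i)$ of $X$; by Lemma~\ref{lem: find well-linked set} route an arbitrary perfect matching $M_i$ between $A_i$ and $B_i$ in $G$ by a set $\pset_i$ of paths of vertex-congestion at most $1/\alpha^{*}$. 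After $t=O(\log^2 k)$ rounds the cut-matching game guarantee ensures that the graph $W$ on vertex set $X$ with edge set $M_1\cup\cdots\cup M_t$ has conductance $\Omega(1)$, and clearly $W$ has maximum degree at most $t=O(\log^2 k)$. Finally, let $G'$ be the subgraph of $G$ consisting of all edges and vertices lying on paths in $\pset_1\cup\cdots\cup\pset_t$; then $X\subseteq V(G')$ and $G'\subseteq G$, and the algorithm is randomized and efficient since ARV/KRV and Lemma~\ref{lem: find well-linked set} are.

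For the maximum degree of $G'$: within a single round $i$, the paths of $\pset_i$ pass through any fixed vertex at most $1/\alpha^{*}$ times and hence contribute at most $2/\alpha^{*}$ to its degree in $G'$; summing over the $t$ rounds gives $\Delta(G')=O(t/\alpha^{*})=O(\log^3 k)$. For the treewidth, observe that $\{\pset_i\}_{i=1}^{t}$ gives an embedding of $W$ into $G'$ that fixes every vertex of $X$ and sends every edge of $W$ to a path of $G'$, with total congestion $O(t/\alpha^{*})=O(\log^3 k)$ on each vertex (and each edge) of $G'$. Since $W$ has conductance $\Omega(1)$, the set $X$ is $\Omega(1)$-edge-well-linked in $W$, so any perfect matching between two equal halves of $X$ routes in $W$ with congestion $O(\log k)$ by the flow-cut gap; composing this routing with the above embedding shows that $X$ is $\Omega(1/\poly\log k)$-well-linked in $G'$. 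Combining this with $\Delta(G')=O(\log^3 k)$ through Corollary~\ref{cor: from well-linkedness to treewidth} (or, tracking node-well-linkedness throughout, through Lemma~\ref{lem:tw-wl}) yields $\tw(G')=\Omega(k/\log^6 k)$ once the polylogarithmic factors are collected.

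The main obstacle is the congestion bookkeeping in the composition ``route inside $W$'' followed by ``embed $W$ into $G'$'': one has to verify that an $\Omega(1)$-level of well-linkedness in $W$ together with an $O(\poly\log k)$-congestion embedding of $W$ into $G'$ composes into a quantitatively useful well-linkedness bound in $G'$, and that the type of well-linkedness (edge- versus node-) is tracked consistently so that the final passage to treewidth via Corollary~\ref{cor: from well-linkedness to treewidth} or Lemma~\ref{lem:tw-wl} is legitimate. A secondary point is to state the cut-matching game in exactly the form needed here---that a conductance-$\Omega(1)$ graph on $X$ can be assembled from $O(\log^2 k)$ matchings, each between a balanced bipartition that the cut player prescribes but whose matching edges we are free to choose---and to note that a cut player with this guarantee is efficiently implementable, so that only polynomially many max-flow/ARV computations are used.
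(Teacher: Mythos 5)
Your construction is exactly the paper's: extract the well-linked set $X$ via Lemma~\ref{lem: find well-linked set}, play the KRV cut-matching game on $X$ with the matchings realized by routing paths of vertex-congestion $1/\alpha^*$, let $G'$ be the union of the paths, bound the degree by $O(\log^2 k)\cdot(1/\alpha^*)=O(\log^3 k)$, and finish by transferring well-linkedness of $X$ from the expander $W$ to $G'$ and invoking Corollary~\ref{cor: from well-linkedness to treewidth}. Two small points in the last step, though. First, the KRV guarantee should be quoted as \emph{expansion}, not conductance: after $O(\log^2 k)$ rounds every cut $(S,\bar S)$ of $X$ is crossed by at least $\min\{|S|,|\bar S|\}/2$ edges of $W$; since $W$ has degree $O(\log^2 k)$ its conductance is only $\Omega(1/\log^2 k)$, but the expansion statement is precisely the $\Omega(1)$-edge-well-linkedness of $X$ in $W$ that you need, so this is a wording issue rather than a mathematical one. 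Second, your detour ``route a matching inside $W$ with congestion $O(\log k)$ via the flow-cut gap, then compose with the embedding'' is lossier than necessary: it certifies only $\Omega(1/\log^4 k)$-well-linkedness of $X$ in $G'$ (congestion $O(\log k)\cdot O(\log^3 k)$), hence $\tw(G')=\Omega(k/\log^7 k)$, one logarithmic factor short of the stated bound. The paper transfers the expansion directly: for any partition $(A,B)$ of $V(G')$, the $\Omega(\min\{|A\cap X|,|B\cap X|\})$ edges of $W$ crossing the induced partition of $X$ each correspond to a path of $G'$ with endpoints on opposite sides, so each contributes a crossing edge, and since every edge of $G'$ lies on at most $O(\log^3 k)$ of the routing paths, $X$ is $\Omega(1/\log^3 k)$-well-linked in $G'$; combined with $\Delta(G')=O(\log^3 k)$ this gives $\tw(G')=\Omega(k/\log^6 k)$ as claimed. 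With that one-line replacement your argument matches the paper's proof.
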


\begin{remark}
  In fact a stronger result, giving a constant bound on the maximum
  degree follows from the expander embedding result in
  \cite{ChekuriE13}. However, the bound on the treewidth guaranteed is
  worse than in the preceding theorem by a (large) polylogarithmic
  factor. For our algorithms, the polylogarithmic bound on the degree
  guaranteed by Theorem~\ref{thm:log-degree} is sufficient.
\end{remark}

\label{--------------------------------------------sec Proof of main thm 1------------------------------------------}
\iffull
\section{Proof of Theorem~\ref{thm:main1-intro}}
\fi
\ifabstract
\def\sectitle{PROOF OF THEOREM~\ref{thm:main1-intro}}
\section{\sectitle}
\fi
\label{sec: proof of first main theorem}

We start with a graph $G$ whose treewidth is at least $k$. For our
algorithm, we need to know the value of the treewidth of $G$, instead
of the lower bound on it. We can compute the treewidth of $G$
approximately, to within an $O(\log(\tw(G)))$-factor, using the
algorithm of Amir~\cite{Amir10}.  Therefore, we assume that we are
given a value $k'\geq k$, such that $\Omega(k'/\log k')\leq \tw(G)\leq
k'$. We then apply Theorem~\ref{thm:log-degree}, to obtain a subgraph
$G'$ of $G$ of maximum vertex degree $\Delta=O(\log^3k')$ and
treewidth $\Omega(k'/\log^7 k')$.  Using Lemma~\ref{lem: find
  well-linked set}, we compute a subset $T$ of $\Omega(k'/\log^7k')$
vertices, such that $T$ is $\Omega(1/\log k')$-well-linked in
$G'$.\footnote{The bounds we claim here are somewhat loose although they
  do not qualitatively affect our theorems. For instance
  \cite{FeigeHL05} gives an $O(\sqrt{\log(\tw(G))})$-approximation for
  treewidth which improves the bound in \cite{Amir10}. 
  }

In order to simplify the notation, we denote $G'$ by $G$ and $|T|$ by
$k$ from now on.  From the above discussion, $\tw(G)\leq ck\log^7k$
for some constant $c$, $T$ is $\Omega(1/\log k)$-well-linked in $G$,
and the maximum vertex degree in $G$ is $\Delta=O(\log^3k)$; we define
the parameter $\alpha^*$ to be $\Omega(1/\log k)$ which is the
well-linkedness guarantee given by Lemma~\ref{lem: find well-linked
  set}.  It is now enough to find a collection $G_1,\ldots,G_h$ of
vertex-disjoint subgraphs of $G$, such that $\tw(G_i)\geq r$ for each
$i$.  We use the parameter $r'=c'\Delta^2 r\log^{11}k$, where $c'$ is
a sufficiently large constant. We assume without loss of generality
that $k$ is large enough, so, for example, $k\geq c''r\log^{30}k$,
where $c''$ is a large enough constant.  We also assume without loss
of generality that $G$ is connected.

\begin{definition}
  We say that a subset $S$ of vertices in $G$ is an acceptable
  cluster, iff $|\out(S)|\leq r'$, $|S\cap T|\leq |T|/2$, and $S$ is
  $\alphaWL$-good, for $\alphaWL=\frac{1}{256\alphasc(k)\log
    k}=\Theta\left (\frac{1}{\log^{1.5}k}\right)$.
\end{definition}

Notice that since the maximum vertex degree in $G$ is bounded by
$\Delta<r'$, if $S$ consists of a single vertex, then it is an
acceptable cluster.  Given any partition $\cset$ of the vertices of
$G$ into acceptable clusters, we let $H_{\cset}$ be the
\emph{contracted graph} associated with $\cset$. Graph $H_{\cset}$ is
obtained from $G$ by contracting every cluster $C\in\cset$ into a
single vertex $v_C$, that we refer to as a super-node. We delete
self-loops, but leave parallel edges. Notice that the maximum vertex
degree in $H_{\cset}$ is bounded by $r'$. We denote by $\phi(\cset)$
the total number of edges in $H_{\cset}$.  Below is a simple
observation that follows from the $\alpha^*$-well-linkedness of $T$ in
$G$. \ifabstract The proof appears in the full version of the
paper. \fi

\begin{observation}\label{obs: many edges}
  Let $\cset$ be any partition of the vertices of $G$ into acceptable
  clusters. Then $\phi(\cset)\geq \alpha^* k/3$.
\end{observation}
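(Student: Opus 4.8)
The plan is to derive the bound from a single use of the $\alpha^*$-well-linkedness of $T$, after setting up a balanced cut with the help of Claim~\ref{claim: simple partition}. Write $\cset=\set{C_1,\ldots,C_n}$ and set $x_i=|C_i\cap T|$ for each $i$. Since $\cset$ is a partition of $V(G)$, we have $\sum_i x_i=|T|=k$, and since each $C_i$ is an acceptable cluster, the definition gives $x_i=|C_i\cap T|\le |T|/2=k/2\le 2k/3$. Thus the hypotheses of Claim~\ref{claim: simple partition} hold with $N=k$, and it produces a partition $(P,Q)$ of $\set{1,\ldots,n}$ with $\sum_{i\in P}x_i\ge k/3$ and $\sum_{i\in Q}x_i\ge k/3$.

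Next I would transfer this bipartition back to $G$. Let $A=\bigcup_{i\in P}C_i$ and $B=\bigcup_{i\in Q}C_i$; then $(A,B)$ is a partition of $V(G)$ with $|A\cap T|\ge k/3$ and $|B\cap T|\ge k/3$. Because $T$ is $\alpha^*$-well-linked in $G$, applying the definition of well-linkedness to the partition $(A,B)$ yields $|E_G(A,B)|\ge \alpha^*\cdot\min\set{|A\cap T|,|B\cap T|}\ge \alpha^* k/3$.

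Finally I would observe that none of these crossing edges is lost under contraction: any edge of $E_G(A,B)$ joins a vertex of some $C_i$ with $i\in P$ to a vertex of some $C_j$ with $j\in Q$, and in particular joins two \emph{distinct} clusters of $\cset$, so after contracting each cluster it does not become a self-loop and is therefore counted in $\phi(\cset)$. Hence $\phi(\cset)\ge |E_G(A,B)|\ge \alpha^* k/3$, which is the claimed bound. I do not expect a genuine obstacle here; the only things to check are exactly the two used above, namely that acceptability of the clusters forces $|C_i\cap T|\le 2k/3$ (so Claim~\ref{claim: simple partition} applies) and that contraction preserves all cross-cluster edges.
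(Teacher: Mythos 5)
Your proposal is correct and follows essentially the same route as the paper: apply Claim~\ref{claim: simple partition} to the cluster-wise terminal counts (valid since acceptability gives $|C\cap T|\le |T|/2\le 2k/3$), invoke the $\alpha^*$-well-linkedness of $T$ on the induced vertex bipartition, and note that the crossing edges survive the contraction and hence are counted in $\phi(\cset)$. The paper's argument is the same, with the bound $|C\cap T|\le |T|/2$ and the preservation of cross-cluster edges left implicit; your proof just spells out these two checks.
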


\iffull

\begin{proof}
  We use Claim~\ref{claim: simple partition} to find a partition
  $(\aset,\bset)$ of $\cset$, such that $\sum_{C\in \aset}|T\cap
  C|,\sum_{C\in \bset}|T\cap C|\geq |T|/3$. We then set
  $A=\bigcup_{C\in \aset}C$, $B=\bigcup_{C\in \bset}C$. Since $T$ is
  $\alpha^*$-well-linked in $G$,  $|E(A,B)|\geq \alpha^* |T|/3=\alpha^* k/3$.
  Since the edges of $E(A,B)$ also belong to $H_{\cset}$, the claim follows.
\end{proof}

\fi

Throughout the algorithm, we maintain a partition $\cset$ of $V(G)$
into acceptable clusters. At the beginning, $\cset=\set{\set{v}\mid
  v\in V(G)}$.  We then perform a number of iterations. In each
iteration, we either compute a partition of $G$ into $h$ disjoint
sub-graphs, of treewidth at least $r$ each, or find a new partition
$\cset'$ of $V(G)$ into acceptable clusters, such that
$\phi(\cset')\leq \phi(\cset)-1$. The execution of each iteration is
summarized in the following theorem.

\begin{theorem}\label{thm: iteration}
  There is an efficient algorithm, that, given a partition $\cset$ of
  $V(G)$ into acceptable clusters, either computes a partition of $G$
  into $h$ disjoint subgraphs of treewidth at least $r$ each, or
  returns a new partition $\cset'$ of $V(G)$ into acceptable clusters,
  such that $\phi(\cset')\leq \phi(\cset)-1$.
\end{theorem}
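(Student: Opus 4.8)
The plan is to analyze the contracted graph $H_{\cset}$ and find within it a structure that certifies the existence of either many disjoint high-treewidth subgraphs, or a merge operation that strictly reduces $\phi(\cset)$. By Observation~\ref{obs: many edges}, $\phi(\cset)\geq \alpha^*k/3$, so $H_{\cset}$ has many edges but bounded (by $r'$) maximum degree, hence $\Omega(\alpha^* k/r')$ super-nodes. The natural dichotomy is on the treewidth of $H_{\cset}$: either it is large — at least some threshold $\tau$ proportional to $hr'$ (up to polylog factors) — or it is small.

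\textbf{Case 1: $\tw(H_{\cset})$ is large.} Here I would apply Lemma~\ref{lem: find well-linked set} to $H_{\cset}$ to extract a set $T'$ of $\Omega(\tw(H_{\cset}))$ super-nodes that is $\Omega(1/\log k)$-well-linked in $H_{\cset}$, together with the routing guarantee. The goal is then to partition the super-nodes into $h$ groups, build $h$ vertex-disjoint subgraphs of $G$ (by un-contracting the relevant clusters and taking the induced subgraph plus a disjoint subset of the routing paths), and argue each has treewidth at least $r$. To do this I would split $T'$ into $h$ roughly-equal subsets, use the fractional-routing/well-linked-decomposition machinery (Theorem~\ref{thm: well-linked-decomposition} and the flow-cut gap bound) to realize, inside each group, a well-linked subset of size $\Omega(|T'|/h)$ with low congestion, and then use Corollary~\ref{cor: from well-linkedness to treewidth} to convert $\Omega(|T'|/h)$-well-linkedness (after dividing out the degree $\Delta$ and the congestion, which are all $\poly\log k$) into treewidth $\geq r$. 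The parameter choices $r' = c'\Delta^2 r\log^{11}k$ and $\alphaWL=\Theta(1/\log^{1.5}k)$ are exactly calibrated so that, after all the polylog losses from well-linked decomposition, degree reduction, and congestion, $\Omega(|T'|/h)\cdot \alphaWL /(3\Delta \cdot \text{congestion}) - 1 \geq r$; and the hypothesis $hr^2\leq k/\poly\log k$ (note the $r'$ depends linearly on $r$, so $h r' = \poly\log k\cdot hr \leq \poly\log k \cdot \sqrt{k/\poly\log k}\cdot\sqrt{hr^2}$, i.e. $hr'\cdot r' \lesssim k$) is what guarantees $\tw(H_{\cset})\geq \tau$ is enough.

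\textbf{Case 2: $\tw(H_{\cset})$ is small.} If $\tw(H_{\cset}) < \tau$, then $H_{\cset}$ has a balanced vertex separator of size $O(\tau\cdot\alphasc(\cdot))$, or more usefully a sparse cut. I would use $\algsc$ on $H_{\cset}$ to find a sparse edge-cut $(S_1, S_2)$: since $\tw$ is small, $\Phi(H_{\cset})$ or $\Psi(H_{\cset})$ is small, so we get a cut with few crossing edges relative to the smaller side. Pulling this cut back to $G$ gives a vertex partition $(V_1, V_2)$ with $|E_G(V_1,V_2)|$ small. Now the crux: I want to re-cluster. Consider one side, say the side $V_i$ that does not contain too much of $T$ (or handle $T$ carefully); I would run the well-linked decomposition (Theorem~\ref{thm: well-linked-decomposition}) on appropriate pieces to break $V_i$ into $\alphaWL$-good clusters whose boundaries into the rest of $G$ total at most (a bit more than) the original boundary, all of size $\leq r'$ — and crucially, the new clustering $\cset'$ of $V(G)$ has $\phi(\cset') < \phi(\cset)$, because contracting a sparse cut strictly merges super-nodes while the well-linked-decomposition overhead is controlled by $\alpha\cdot\alphasc(k)\log k \ll 1$ (this is why $\alphaWL$ is set below $1/(8\alphasc\log k)$). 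One must verify the acceptable-cluster conditions: the bound $r'$ on boundaries, $|S\cap T|\leq|T|/2$ (which may require a separate argument when a huge well-linked-through-$T$ chunk would be created — likely one shows that a cluster with $>|T|/2$ terminals would itself already force large $\tw(H_{\cset})$, contradicting Case 2), and $\alphaWL$-goodness, all delivered by Theorem~\ref{thm: well-linked-decomposition}.

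\textbf{The main obstacle} I expect is Case 2: making the re-clustering genuinely decrease $\phi$ while keeping every resulting cluster acceptable, and in particular handling the terminal-balance condition $|S\cap T|\leq|T|/2$ and the boundary-size condition $|\out(S)|\leq r'$ simultaneously. The sparse cut from $\algsc$ controls total crossing edges but not the boundary of individual new clusters, so one needs the well-linked decomposition to refine further, and then needs to check that the cumulative edge overhead across all these refinements stays below the one-edge slack we are trying to gain — this is a delicate accounting that forces the specific polylog exponents in $r'$ and the specific value of $\alphaWL$. A secondary subtlety is ensuring termination/progress is measured correctly ($\phi$ is a nonnegative integer bounded by $|E(G)|$, and each iteration that does not already produce the output decreases it by at least $1$, so the process terminates), and that when $\phi$ can no longer decrease we are necessarily in Case 1 — i.e. a clustering that is "irreducible" must have large contracted treewidth, which is essentially the content of combining Observation~\ref{obs: many edges} with the Case-2 analysis.
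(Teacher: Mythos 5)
Your dichotomy (large vs.\ small $\tw(H_{\cset})$) is not the paper's, and both branches have real gaps. In your Case 1, the step ``split the well-linked set $T'$ of $H_{\cset}$ into $h$ groups, route with low congestion, and obtain $h$ vertex-disjoint subgraphs each containing a large well-linked set'' is essentially the statement of Theorem~\ref{thm:main1-intro} itself applied to $H_{\cset}$: low-congestion fractional routing does not give vertex-disjointness, and well-linkedness of a subset of $T'$ is with respect to all of $H_{\cset}$, not with respect to the induced subgraph on ``its'' group. Turning one large well-linked set into $h$ disjoint certificates is exactly the hard part; the paper never does this by splitting $T'$. Instead, when $n=|V(H)|\geq k^5$ it fixes a set $Z$ of $k^5$ super-nodes, proves (Claim~\ref{claim: small balanced cut}) that every minimum balanced cut has at most $k^2$ edges --- else $\tw(G)$ would exceed $ck\log^7k$ --- and uses \algsc{} repeatedly to carve out $h+1$ pieces each with $|E_H(X_i)|\geq 64|\out_H(X_i)|$; when $n<k^5$ it decomposes $H$ into high-conductance pieces and embeds expanders one at a time (Theorems~\ref{thm: find large conductance subgraph} and \ref{thm: remove expander}), deleting the vertices used by each embedding so that the extracted subgraphs are genuinely disjoint. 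Your sketch contains no mechanism playing the role of either of these.

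Your Case 2 also does not close: a sparse cut of $H_{\cset}$ followed by Theorem~\ref{thm: well-linked-decomposition} may produce a cluster $C$ with $|\out(C)|>r'$, which cannot be made acceptable, so $\phi$ need not decrease; in the paper this eventuality is precisely the \emph{win} condition, but it yields only one $\alphaWL$-good cluster of treewidth $\geq r$, and the paper can afford this only because it has already produced $h$ disjoint pieces $X_1,\dots,X_h$, one such cluster per piece --- with a two-sided sparse cut you get at most two. Relatedly, your claim that a cluster containing more than $|T|/2$ terminals ``would force large $\tw(H_{\cset})$'' is unsupported (the paper simply discards the at most one offending piece among $h+1$), and your closing assertion that an irreducible clustering must have large $\tw(H_{\cset})$ does not follow from Observation~\ref{obs: many edges}, which lower-bounds only $|E(H_{\cset})|$; a bounded-degree graph can have $\Omega(\alpha^*k)$ edges and very small treewidth. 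So the proposal, as it stands, reduces the theorem to statements that are either equivalent to it or false without further argument.
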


Clearly, after applying Theorem~\ref{thm: iteration} at most $|E(G)|$
times, we obtain a partition of $G$ into $h$ disjoint subgraphs of
treewidth at least $r$ each. From now on we focus on proving
Theorem~\ref{thm: iteration}. Given a current partition $\cset$ of
$V(G)$ into acceptable clusters, let $H$ denote the corresponding
contracted graph.  We denote $n=|V(H)|$, $m=|E(H)|$. Notice that from
Observation~\ref{obs: many edges}, $m\geq \alpha^*k/3$.  We now
consider two cases, and prove Theorem~\ref{thm:main1-intro} separately
for each of them. The first case is when $n\geq k^5$.

\subsection{Case 1: $n\geq k^5$}




We note that $n$ is large when compared to the treewidth and hence we
expect the graph $H$ should have low expansion. Otherwise, we get a
contradiction by showing that $\tw(G) > ck\log^7k$. The proof strategy
in the low-expansion regime is to repeatedly decompose along balanced
partitions to obtain $h$ subgraphs with treewidth at least $r$ each.

Let $z=k^5$. The algorithm first chooses an arbitrary subset $Z$ of
$z$ vertices from $H$ that remains fixed throughout the
algorithm. Suppose we are given any subset $S$ of vertices of $H$. We
say that a partition $(A,B)$ of $S$ is \emph{$\gamma$-balanced} (with
respect to $Z$), iff $\min\{|A\cap Z|,|B\cap Z|\} \geq \gamma |S\cap
Z|$.  We say that it is \emph{balanced} iff it is $\gamma$-balanced
for $\gamma=\frac 1 4$. The following claim is central to the proof of
the theorem in Case 1.

\begin{claim}\label{claim: small balanced cut}
  Let $S$ be any subset of vertices in $H$ with $|S\cap Z|>100$, and
  let $(A,B)$ a balanced partition of $S$ (with respect to $Z$), minimizing
  $|E_H(A,B)|$. Then $|E_H(A,B)|\leq k^2$.
\end{claim}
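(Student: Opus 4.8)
It suffices to exhibit \emph{some} balanced partition $(A,B)$ of $S$ (with respect to $Z$) with $|E_H(A,B)|\le k^2$: the partition in the statement minimizes the number of cut edges over all balanced partitions of $S$, so its cut is then at most $k^2$ as well. The engine of the argument is that the contracted graph $H$ has small treewidth. First, every acceptable cluster $C$ is connected: if $C=C_1\cup C_2$ with $C_1,C_2\neq\emptyset$ and no $G$-edges between them, then the partition $(C_1,C_2)$ together with $\alphaWL$-goodness and $\alphaWL>0$ forces one side, say $C_1$, to have $\out(C_1)\cap\out(C)=\emptyset$; since also there are no edges from $C_1$ to $C_2$, the set $C_1$ is a union of connected components of $G$, contradicting the connectivity of $G$ and $C_2\neq\emptyset$. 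Hence $H$ is a minor of $G$, and so $k_1:=\tw(H)\le\tw(G)\le ck\log^7k$.

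\textbf{Producing the partition.} Fix a minimum-width tree decomposition of $H$; intersecting every bag with $S$ gives a tree decomposition of $H[S]$ of width at most $k_1$. By the standard balanced-separator lemma for tree decompositions, applied with unit weight on the vertices of $Z\cap S$, there is a bag $X$ of this decomposition with $|X|\le k_1+1$ such that every connected component of $H[S]-X$ contains at most $\ell/2$ vertices of $Z$, where $\ell:=|S\cap Z|>100$. Now view the components $C_1,C_2,\dots$ of $H[S]-X$ (each an indivisible item of weight $|C_i\cap Z|\le\ell/2$) together with the individual vertices of $X\cap Z$ (items of weight $1\le\ell/2$) as objects to be placed into two bins; place them greedily in non-increasing order of weight, each into the currently lighter bin. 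A routine argument (consider the last item placed into the heavier bin) shows the two bin totals stay within the largest item weight, hence within $\ell/2$, of each other throughout, so at the end each bin holds at least $(\ell-\ell/2)/2=\ell/4$ vertices of $Z$. Letting $A$ be the union of the components and $X$-vertices in the first bin and $B=S\setminus A$, the pair $(A,B)$ is a partition of $S$ with $\min\{|A\cap Z|,|B\cap Z|\}\ge\ell/4=\tfrac14|S\cap Z|$, i.e.\ a balanced partition.

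\textbf{Bounding the cut.} Every edge of $E_H(A,B)$ has both endpoints in $S$, hence lies in $H[S]$; it is not cut if both endpoints lie in a single component of $H[S]-X$, and two vertices in distinct components of $H[S]-X$ are non-adjacent, so every cut edge is incident to a vertex of $X$. Therefore $|E_H(A,B)|\le\sum_{x\in X}\deg_H(x)\le|X|\cdot r'\le(k_1+1)\,r'$, since the contracted graph $H$ has maximum degree at most $r'$. Finally, plugging in $k_1\le ck\log^7k$, $\Delta=O(\log^3k)$ and $r'=c'\Delta^2r\log^{11}k=O(r\log^{17}k)$ gives $(k_1+1)r'=O(kr\log^{24}k)$, and the standing assumption $k\ge c''r\log^{30}k$ turns this into $O\!\big(k^2/\log^6k\big)\le k^2$ once $c''$ is chosen large enough (relative to $c$, $c'$, and the constant hidden in $\Delta=O(\log^3k)$).

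\textbf{Main obstacle.} The only genuinely substantive step is the first: recognizing that $H$, although it may be large, has treewidth only $O(k\,\poly\log k)$, so that a single small bag already separates any vertex subset of $H[S]$ in a $Z$-balanced way, and that the size of that bag times the degree bound $r'$ is comfortably below $k^2$ thanks to the slack $k\ge c''r\log^{30}k$. The remaining ingredients --- the balanced-separator lemma, the greedy two-coloring of the components and free vertices, and the final arithmetic --- are routine; the one place to stay careful is the verification that acceptable clusters are connected, which is exactly where $\alphaWL>0$ and the connectivity of $G$ are used.
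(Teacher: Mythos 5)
Your proof is correct, but it takes a genuinely different route from the paper's. The paper argues by contradiction directly from the minimality of the cut: if $|E_H(A,B)|>k^2$, minimality forces the side $A$ to be $1$-good in $H[S]$, so the set $\Gamma$ of boundary vertices of $A$ (of size at least $k^2/r'$) is $1$-well-linked there; un-contracting and rerouting flow across the $\alphaWL$-good clusters turns $\Gamma$ into an $\Omega(\alphaWL/\log k)$-well-linked set of the same size in $G$, and Corollary~\ref{cor: from well-linkedness to treewidth} then yields $\tw(G)=\Omega\left(k^2/(r\log^{22.5}k)\right)>ck\log^7k$, contradicting the standing bound $\tw(G)\le ck\log^7k$. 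You instead exhibit one cheap balanced partition: you first verify (correctly, and this is not stated in the paper, since acceptability in this section does not include connectivity) that $\alphaWL$-goodness with $\alphaWL>0$ together with the connectivity of $G$ forces every acceptable cluster to induce a connected subgraph, so $H$ is a minor of $G$ and $\tw(H)\le ck\log^7k$; a centroid bag of a tree decomposition of $H[S]$, weighted by $Z\cap S$, then gives a balanced partition whose cut edges are all incident to that bag, hence of size at most $(\tw(H)+1)r'=\tilde{O}(kr)\le k^2$ under the assumption $k\ge c''r\log^{30}k$, and the minimum balanced cut is no larger. Your route is shorter and more elementary, uses minimality only trivially, and actually proves the stronger bound $\tilde{O}(kr)$; its only extra obligation is the (easy but necessary) connectivity check that legitimizes the minor/treewidth transfer to $H$. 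The paper's route never needs cluster connectivity: it transfers well-linkedness rather than treewidth, using the same flow-rerouting-inside-clusters machinery that reappears later (e.g., in Theorem~\ref{thm: remove expander}). Both arguments consume the same numerical slack ($\tw(G)\le ck\log^7k$, $k\ge c''r\log^{30}k$), and either suffices for the algorithm, which only needs the existential bound before invoking $\algsc$.
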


\iffull
\begin{proof}
  To simplify notation, we denote $H[S]$ by $H'$.  Assume that the
  claim is not true, and assume without loss of generality that
  $|A\cap Z|\geq |B\cap Z|$. We claim that the set $A$ of vertices is
  $1$-good in $H'$. Indeed, assume otherwise. Then there is a
  partition $(X,Y)$ of $A$, with

\[
|E_{H'}(X,Y)|< \min\set{|\out_{H'}(X)\cap \out_{H'}(A)|,|\out_{H'}(Y)\cap \out_{H'}(A)|}.
\]
 
Assume without loss of generality that $|X\cap Z|\geq |Y\cap Z|$. We claim that
$(X,B\cup Y)$ is a balanced partition for $S$, with $|E_H(X,B\cup
Y)|<|E_H(A,B)|$, contradicting the minimality of $|E(A,B)|$. To see
that $(X,B\cup Y)$ is a balanced partition, observe that $|(B\cup Y)\cap
Z|\geq \frac{|S\cap Z|}{4}$ since $(A,B)$ is a balanced partition, and
$|X\cap Z|\geq \half |A\cap Z|\geq \frac 1 4 |A\cap Z|$ from our
assumptions about $X$ and $A$. Finally, observe that

\[\begin{split}
|E_H(X,B\cup Y)|&=
|E_{H'}(X,B\cup Y)|
\\&=|E_{H'}(X,Y)|+|E_{H'}(X,B)|
\\&<|E_{H'}(Y,B)|+|E_{H'}(X,B)|
\\&=|E_{H'}(A,B)|=|E_{H}(A,B)|,\end{split}\]

contradicting the minimality of $|E_{H}(A,B)|$. We conclude that the
set $A$ of vertices is $1$-good in $H'$. Let $\Gamma$ be the subset of
vertices of $A$ that serve as endpoints to edges in $\out_{H'}(A)$,
that is, $\Gamma=\set{v\in A\mid \exists e=(u,v)\in
  \out_{H'}(A)}$. Then $|\Gamma|\geq
\frac{|E_{H'}(A,B)|}{r'}\geq\frac{k^2}{r'}$, since the degrees of
vertices in $H$ are bounded by $r'$. It is also easy to see that
$\Gamma$ is $1$-well-linked in the graph $H[A]$, since $A$ is a
$1$-good set in $H'$.

Finally, let $\Gamma'$ be a subset of $|\Gamma|$ vertices in the
original graph $G$, obtained as follows. For each super-node $v_C\in
\Gamma$, we select an arbitrary vertex $u$ on the boundary of $C$
(that is, $u\in C$, and it is an endpoint of some edge in $\out(C)$),
and add it to $\Gamma'$. We claim that the set $\Gamma'$ of vertices
is $\Omega\left (\frac{\alphaWL}{\log k}\right )$-well-linked in
$G$. Indeed, let $(X,Y)$ be any partition of the vertices of $G$, with
$\Gamma'_X=\Gamma'\cap X, \Gamma'_Y=\Gamma'\cap Y$, and assume without
loss of generality that $|\Gamma'_X|\leq |\Gamma'_Y|$. We need to show
that $|E(X,Y)|\geq \Omega\left(\frac{\alphaWL|\Gamma'_X|}{\log
    k}\right )$. In order to show this, it is enough to show that
there is a flow $F$, where the vertices in $\Gamma'_X$ send one flow
unit each to the vertices of $\Gamma'_Y$, and the total
edge-congestion caused by this flow is at most $O(\log
k/\alphaWL)$. Let $(\Gamma_X,\Gamma_Y)$ be the partition of $\Gamma$
induced by the partition $(\Gamma_X',\Gamma_Y')$ of $\Gamma'$. Since
the set $\Gamma$ of vertices is $1$-well-linked in $H$, there is a
flow $F'$ in $H$, where every vertex in $\Gamma_X$ sends $1$ flow unit
towards the vertices in $\Gamma_Y$, every vertex in $\Gamma_Y$
receives at most one flow unit, and the edge-congestion is at most
$1$. We now extend this flow $F'$ to obtain the desired flow $F$ in
the graph $G$. In order to do so, we need to specify how the flow is
routed across each cluster $C\in \cset$. For each such cluster $C$,
flow $F$ defines a set $D_C$ of $2$-restricted demands over $\out(C)$
(the factor $2$ comes from both the flow routed across the cluster,
and the flow that originates or terminates in it). Since $C$ is
$\alphaWL$-well-linked, this set $D_C$ of demands can be routed inside
$C$ with congestion at most $O(\log r'/\alphaWL)\leq O(\log
k/\alphaWL)$. Concatenating the flow $F'$ with the resulting flows
inside each cluster $C\in \cset$ gives the desired flow $F$. We
conclude that we have obtained a set $\Gamma$ of at least
$\frac{k^2}{r'}$ vertices in $G$, such that $\Gamma$ is
$\Omega(\alphaWL/\log k)$-well-linked. From Corollary~\ref{cor: from
  well-linkedness to treewidth}, it follows that $\tw(G)\geq
\Omega\left (\frac{\alphaWL(k)\cdot k^2}{r'\cdot \Delta\log k}\right
)=\Omega\left(\frac{k^2}{r\cdot \log^{22.5} k}\right )>ck\log^7k$,
since we have assumed that $k\geq c''r\log^{30}k$ for a large enough
constant $c''$. This contradicts the fact that $\tw(G)\leq ck\log^7k$.
\end{proof}
\fi

\ifabstract The proof shows that if $|E_H(A,B)|>k^2$, then $\tw(G)>ck\log^7k$,
leading to a contradiction. We defer the proof to the full version.\fi

We now show an algorithm to find the desired collection
$G_1,\ldots,G_h$ of subgraphs of $G$.  We use the algorithm $\algSC$
of Arora, Rao and Vazirani~\cite{ARV} to find a balanced partition of
a given set $S$ of vertices of $H$; the algorithm is applied to $H$
with $S \cap Z$ as the terminals. Given any such set $S$ of vertices,
the algorithm $\algSC$ returns a $\gammaARV$-balanced partition
$(A,B)$ of $S$, with $|E_H(A,B)|\leq \alphaSC(z)\cdot \opt$, where
$\opt$ is the smallest number of edges in any balanced partition, and
$\gammaARV$ is some constant. In particular, from Claim~\ref{claim:
  small balanced cut}, $|E(A,B)|\leq \alphaARV(z)\cdot k^2$, if
$|S\cap Z|\geq 100$.

 We
start with $\sset=\set{V(H)}$, and 
perform $h$
iterations. At the beginning of iteration $i$, set $\sset$ will
contain $i$ disjoint subsets of vertices of $H$. An iteration is
executed as follows. We select a set $S\in \sset$, maximizing
$|Z\cap S|$, and compute a $\gammaARV$-balanced partition $(A,B)$ of $S$,  using the algorithm \algSC. We then remove $S$ from $\sset$, and add $A$
and $B$ to it instead. Let $\sset=\set{X_1,\ldots,X_{h+1}}$ be the final
collection of sets after $h$ iterations.  From
Claim~\ref{claim: small balanced cut}, the increase in $\sum_{X\in
  \sset}|\out_H(X)|$ is bounded by $k^2\alphasc(z)$ in each iteration. Therefore,
throughout the algorithm, $\sum_{X\in \sset}|\out_H(X)|\leq k^2\alphasc(z)h$
holds. In the following observation,  \ifabstract whose proof appears in the full version, \fi we show that for each $X_i\in
\sset$, $|X_i\cap Z|\geq \frac{\gammaARV \cdot z}{2h}$.

\begin{observation}\label{obs: many edges in each cut} Consider some iteration $i$ of the algorithm. Let
  $\sset_i$ be the collection of vertex subsets at the
  beginning of iteration $i$, let $S\in \sset_i$ be the set that was
  selected in this iteration, and let $\sset_{i+1}$ be the set
  obtained after replacing $S$ with $A$ and $B$. Then
  $|A\cap Z|,|B\cap Z|\geq \gammaARV\cdot |S\cap Z|$, and for each $S'\in
  \sset_{i+1}$, $|S'\cap Z|\geq \frac{\gammaARV \cdot z}{2h}$.
\end{observation}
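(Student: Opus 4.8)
The plan is to derive the statement from two elementary invariants of the procedure together with the pigeonhole principle; no genuinely hard step is involved. The first inequality, $|A\cap Z|,|B\cap Z|\geq \gammaARV|S\cap Z|$, is simply a restatement of the fact that $(A,B)$ is a $\gammaARV$-balanced partition of $S$, which is exactly what $\algSC$ is guaranteed to output — provided it is run on a set $S$ that actually admits a balanced partition. So the only thing to check for the first inequality is that the chosen $S$ has $|S\cap Z|$ large enough, and this will come out of the argument for the second inequality; hence I would establish the lower bound on $|S\cap Z|$ first and use it for both parts.

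First I would record that $\sset$ is always a partition of $V(H)$: this holds at the start, when $\sset_1=\set{V(H)}$, and is preserved each time a set $S$ is replaced by the two sides $A,B$ of a partition of $S$. Hence $\set{S'\cap Z : S'\in\sset_i}$ is a partition of $Z$, so $\sum_{S'\in\sset_i}|S'\cap Z| = z$. Next I would note that $|\sset_i| = i$, and that $i\leq h$ throughout the $h$ iterations. Combining these, the set $S$ selected in iteration $i$ — the one maximizing $|Z\cap S|$ — satisfies $|S\cap Z|\geq z/i\geq z/h$, and since $z=k^5$ while $h\leq hr^2\leq k/\poly\log k\leq k$ (as $r\geq 1$), this gives $|S\cap Z|\geq k^4\geq 100$. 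At this point Claim~\ref{claim: small balanced cut} applies, so $S$ has a balanced partition with at most $k^2$ edges, and therefore $\algSC$ does return a $\gammaARV$-balanced partition $(A,B)$. This proves the first inequality, and moreover $|A\cap Z|,|B\cap Z|\geq \gammaARV|S\cap Z|\geq \gammaARV z/h\geq \frac{\gammaARV z}{2h}$.

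For the second inequality I would run a one-line induction on $i$, with inductive hypothesis ``every set of $\sset_i$ has $Z$-intersection of size at least $\frac{\gammaARV z}{2h}$''. The base case $i=1$ is immediate since $|V(H)\cap Z| = z\geq \frac{\gammaARV z}{2h}$ (using $\gammaARV\leq 1\leq 2h$). For the inductive step: the members of $\sset_{i+1}$ are the two new sets $A,B$, which satisfy the bound by the previous paragraph, together with the members of $\sset_i\setminus\set{S}$, whose $Z$-intersections are unchanged and which therefore satisfy the bound by the inductive hypothesis. Applying the hypothesis to $\sset_{i+1}$ yields the second assertion. The only place one needs to be even slightly careful is in verifying that $\algSC$'s balance guarantee is actually in force, i.e.\ that $z/h$ comfortably exceeds the threshold of Claim~\ref{claim: small balanced cut}; everything else is bookkeeping, and in fact the argument delivers the stronger bound $\gammaARV z/h$, the factor $2$ being slack.
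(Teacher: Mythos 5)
Your proof is correct and follows essentially the same route as the paper: an induction over iterations that combines the $\gammaARV$-balance guarantee of $\algSC$ with a pigeonhole (maximum $\geq$ average) bound on the selected set. The only cosmetic difference is that you maintain the absolute bound $|S'\cap Z|\geq \gammaARV z/(2h)$ directly by applying the averaging step to the parent $S$ in each iteration, whereas the paper maintains the ratio invariant $\max/\min\leq 1/\gammaARV$ and extracts the absolute bound at the end; both hinge on exactly the same two facts.
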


\iffull

\begin{proof}
  The proof is by induction on the number of iterations.  At the
  beginning of iteration $1$, $\sset_1=\set{V(H)}$, so the claim
  clearly holds for $\sset_1$. Assume now that the claim holds for
  iterations $1,\ldots,i-1$, and consider iteration $i$, where some
  set $S\in \sset_{i}$ is replaced by $A$ and $B$.  
  Since $(A,B)$ is a $\gammaARV$-balanced cut of $S$, and $|S|\geq \frac{\gammaARV \cdot z}{2h}>100$, it follows that $|A\cap Z|,|B\cap Z|\geq \gammaARV\cdot |S\cap Z|$.
  
  Since we have assumed that the claim holds for iterations
  $1,\ldots,i-1$, and we have shown that $|A\cap Z|,|B\cap Z|\geq \gammaARV\cdot |S\cap Z|$, it follows that the ratio $\max_{S'\in
    \sset_{i+1}}\set{|S'\cap Z|}/\min_{S'\in \sset_{i+1}}\set{|S'\cap Z|}\leq
  1/\gammaARV$. Therefore,  for each $S'\in
  \sset_{i+1}$, $|S'\cap Z|\geq \frac{\gammaARV \cdot z}{2h}$
\end{proof}
\fi

Among the sets $X_1,\ldots,X_{h+1}$, there can be at most one set
$X_i$, with $|T\cap \left (\bigcup_{v_C\in X_i}C\right )|>|T|/2$. We
assume without loss of generality that this set is $X_{h+1}$, and we will ignore it from
now on.  Consider now some set $X_i$, for $1\leq i\leq
h$. 
Since graph $H$ is connected, and $X_i$ contains at least $\frac{\gammasc\cdot z}{2h}$ vertices (the vertices of $X_i\cap Z$), while $|\out_H(X_i)|\leq k^2h\alphasc(z)$, it follows that $|E_H(X_i)|\geq \half\left (\frac{\gammasc\cdot z}{2h}-k^2h\alphasc(z)\right )\geq \frac{\gammasc\cdot z}{8h}>64|\out_H(X_i)|$, as $z=k^5$, and $k$ is large enough.

Let $X'_i$ be the subset of vertices obtained from $X_i$, by
un-contracting all super-nodes of $X_i$. Then $|E_G(X'_i)|\geq
|E_H(X_i)| \geq 64|\out_H(X_i)|=
64|\out_G(X'_i)|$.

Our next step is to compute a decomposition $\wset_i$ of $X'_i$ into $\alphaWL$-good
clusters, using Theorem~\ref{thm: well-linked-decomposition}. Notice that $k'=|\out_G(X'_i)|\leq k^2h\alphasc(z)\leq 5k^2h\alphasc(k)<k^4$ since $z=k^5$; therefore the choice of $\alpha_G=\frac{1}{256\alphasc (k)\log k}<\frac{1}{8\alphasc(k')\log k'}$ satisfies the conditions of the theorem.

Fix some $1\leq i\leq h$. Assume first that for every cluster $C_i\in \wset_i$, 
$|\out_G(C_i)|\leq r'$. Then we can obtain a new
partition $\cset'$ of $V(G)$ into acceptable clusters with
$\phi(\cset')\leq\phi(\cset)-1$, as follows. We add to $\cset'$ all
clusters $C\in \cset$ that are disjoint from $X'_i$, and we add all
clusters in $\wset_i$ to it as well. Clearly, the resulting partition
$\cset'$ consists of acceptable clusters only. We now show that
$\phi(\cset')\leq\phi(\cset)-1$.  Indeed,

\[\phi(\cset')\leq \phi(\cset)-|\out_H(X_i)|-|E_H(X_i)|+\sum_{R\in \wset_i}|\out_G(R)|\]

From the choice of $\alphaWL$, $\sum_{R\in \wset_i}|\out_G(R)|< 3
|\out_G(X'_i)| = 3|\out_H(X_i)|$ holds, while $|E_H(X_i)|\geq
64|\out_H(X_i)|$. Therefore, $\phi(\cset')\leq\phi(\cset)-1$.

Assume now that for each $1\leq i\leq h$, there is at least one
cluster $C_i\in \wset_i$ with $|\out_G(C_i)|\geq r'$. Let
$\set{C_1,\ldots,C_h}$ be the resulting collection of clusters, where
for each $i$, $C_i\in \wset_i$. For $1\leq i\leq h$, we now let
$G_i=G[C_i]$. It is easy to see that the graphs $G_1,\ldots,G_h$ are
vertex-disjoint. It now only remains to show that each graph $G_i$ has
treewidth at least $r$. Fix some $1\leq i\leq h$, and let
$\Gamma_i\subseteq C_i$ contain the endpoints of edges in
$\out_G(C_i)$, that is, $\Gamma_i=\set{v\in C_i\mid \exists e=(u,v)\in
  \out_G(C_i)}$. Then, since $C_i$ is an $\alphaWL$-good set of
vertices, $\Gamma_i$ is $\alphaWL$-well-linked in the graph
$G_i$. Moreover, $|\Gamma_i|\geq |\out(C_i)|/\Delta\geq
r'/\Delta$. From Corollary~\ref{cor: from well-linkedness to
  treewidth}, $\tw(G_i)\geq \frac{\alphaWL r'}{3\Delta^2}-1\geq r$.

\subsection{Case 2: $n<k^5$}
Since vertex degrees in $H$ are bounded by $r'$, $m=O(k^5r')=O(k^6)$.  The
algorithm for Case 2 consists of two phases. In the first phase, we
partition $V(H)$ into a number of disjoint subsets
$X_1,\ldots,X_{\ell}$, where, on the one hand, for each $X_i$, the
conductance of $H[X_i]$ is large, while, on the other hand,
$\sum_{i=1}^{\ell}|\out(X_i)|\leq |E(H)|/10$. We discard all clusters
$X_i$ with $|\out(X_i)|\geq |E(X_i)|/2$, denoting by $\xset$ the
collection of the remaining clusters, and show that $\sum_{X_i\in
  \xset}|E(X_i)|=\Omega(\alpha^*k)$. If any cluster $X\in \xset$ has
$|E(X)|\leq 2r'$, then we find a new partition $\cset'$ of the
vertices of $G$ into acceptable clusters, with $\phi(\cset')
\leq\phi(\cset)-1$.
Therefore, we can assume that for every cluster $X\in \xset$,
$|E(X)|>2r'$. We then proceed to the second phase.  Here, we take
advantage of the high conductance of each $X_i \in \xset$ to show that
$X_i$ can be partitioned into $h_i$ vertex-disjoint sub-graphs, such
that we can embed a large enough expander into each such subgraph. The
value $h_i$ is proportional to $|E(X_i)|$, and we ensure that
$\sum_{X_i\in \xset}h_i\geq h$ to get the desired number of
subgraphs. The embedding of the expander into each sub-graph is then
used as a certificate that this sub-graph (or more precisely, a
sub-graph of $G$ obtained after un-contracting the super-nodes) has 
large treewidth.

\paragraph{Phase 1} We use the following theorem, that allows us to
decompose any graph into a collection of high-conductance connected
components, by only removing a small fraction of the edges. A similar
procedure has been used in previous work, and can be proved using
standard graph decomposition techniques. \iffull The proof is deferred to the Appendix.\fi
\ifabstract The proof is deferred to the full version.\fi

\begin{theorem}\label{thm: cut into expanders}
Let $H$ be any connected $n$-vertex graph containing $m$ edges. Then there is an efficient algorithm to compute a partition $X_1,\ldots,X_{\ell}$ of the vertices of $H$, such that:
(i) for each $1\leq i\leq \ell$, the conductance of graph $H[X_i]$, $\Psi(H[X_i])\geq \frac{1}{160\alphaSC(m)\log m}$; and (ii) $ \sum_{i=1}^{\ell}|\out(X_i) |\leq m/10$.
\end{theorem}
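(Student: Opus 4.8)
The plan is to prove Theorem~\ref{thm: cut into expanders} by a standard recursive decomposition: repeatedly apply the sparsest-cut approximation algorithm $\algSC$ (with respect to conductance, as stated in Section~\ref{sec:prelim}), and whenever it returns a cut that is ``sparse enough,'' recurse on both sides; otherwise, declare the current vertex set to be one of the final pieces $X_i$. The key quantitative choice is the stopping threshold. Let $\beta = \frac{1}{160\alphaSC(m)\log m}$ be the target conductance. We run $\algSC$ on the current induced subgraph $H[S]$ with the edges of $E_H(S)$ as the ``weight'', obtaining a cut $(S_1,S_2)$ of $S$. If $\Psi_{H[S]}(S_1,S_2) > 2\beta\cdot\alphaSC$ (equivalently, if the sparsest conductance cut of $H[S]$ has value more than $2\beta$, which the approximation guarantee of $\algSC$ certifies), we stop and output $S$ as a final cluster; it then has conductance at least $2\beta > \beta$ in $H[S]$, giving property (i). Otherwise we cut along $(S_1,S_2)$ and recurse on $S_1$ and $S_2$ separately.

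For property (ii), the bound on the total number of cut edges $\sum_i |\out_H(X_i)|$, the standard charging argument runs as follows. Every edge that ends up in $\out_H(X_i)$ for some final piece is an edge that was ``cut'' at some point during the recursion; it suffices to bound the total number of edges cut over the whole recursion tree. When we cut $(S_1,S_2)$ inside $H[S]$, the conductance condition $\Psi(S_1,S_2)\le 2\beta\alphaSC$ means $|E_H(S_1,S_2)| \le 2\beta\alphaSC \cdot \min\{|E_{H[S]}(S_1)|+ \ldots, |E_{H[S]}(S_2)|+\ldots\}$ where the relevant denominator is the number of edges incident to the smaller side (inside $H[S]$). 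We then use a potential/amortization argument on $\sum_S (\text{edges incident to } S)$, or equivalently we charge the $|E_H(S_1,S_2)|$ new cut edges to the smaller side: each edge of $H$ can be charged only $O(\log m)$ times because each time it is charged, it lies in a part whose incident-edge count has dropped by at least a constant factor, and the incident-edge count starts at $\le 2m$. Summing, the total number of cut edges is at most $2\beta\alphaSC \cdot O(\log m) \cdot m$. Plugging in $\beta = \frac{1}{160\alphaSC\log m}$, this is at most $\frac{2 m \cdot O(\log m)}{160 \log m} \le m/10$ once the constants in the $O(\cdot)$ are tracked, which is property (ii). (The factor $160$ in the statement is precisely chosen to absorb these constants.)

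One subtlety to handle carefully: the conductance $\Psi(H[X_i])$ in the statement refers to conductance within the induced subgraph $H[X_i]$, computed using edges of $H[X_i]$ only; one must make sure the recursion measures conductance internally (which is what $\algSC$ applied to $H[S]$ does), and also that at termination a singleton or small piece still satisfies (i) vacuously or trivially — a single vertex has no internal cut so $\Psi \ge \beta$ holds by convention, and connectivity of each $H[X_i]$ can be assumed by first splitting into connected components at no edge cost. Another point: the denominator in the definition of conductance $\Psi$ is $\min\{|E(S)|, |E(\nots)|\}$ within the current subgraph, whereas the sparsest-cut algorithm $\algSC$ as quoted gives a cut with $\Psi(S,\nots) \le \alphaSC(m)\Psi(H)$; so the approximation is with respect to $m = |E(H)|$ globally, but since subgraphs have fewer edges, $\alphaSC(|E(H[S])|) \le \alphaSC(m)$ and the bound only improves.

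The main obstacle is getting the amortized charging tight enough to land at $m/10$ rather than some larger constant fraction; this is purely a matter of bookkeeping the constants in the ``each edge is cut $O(\log m)$ times'' claim and matching them against the $\frac{1}{160}$ in $\beta$. Concretely, one shows that if an edge $e$ is incident to a set $S$ that is partitioned, and $e$ survives into the smaller side $S'$, then the number of edges incident to $S'$ is at most $\frac{2}{3}$ (say) of the number incident to $S$; hence $e$ can be on the ``small side'' at most $\log_{3/2}(2m) = O(\log m)$ times before its containing set has $O(1)$ edges and the recursion halts. Summing $|E_H(S_1,S_2)| \le 2\beta\alphaSC\cdot(\text{edges on small side})$ over all recursive cuts, grouped by which edge is being charged, gives $\sum_{\text{cuts}} |E_H(S_1,S_2)| \le 2\beta\alphaSC \cdot O(\log m)\cdot m$, and the constant $160$ is exactly what makes this $\le m/10$. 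Everything else is routine, and the proof can be relegated to the appendix as the statement indicates.
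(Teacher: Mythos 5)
Your proof is correct in substance, but it takes a different route from the paper. You re-run the recursive ``approximate sparsest conductance cut + charging'' machinery directly on $H$: cut whenever $\algSC$ returns a cut of conductance at most $2\beta\alphaSC$ (so that stopping certifies $\Psi(H[X_i])>2\beta\geq\beta$), charge each cut to the internal edges of the smaller side, and observe that an edge can be charged only $O(\log m)$ times because the internal edge count of its containing set at least halves at each charge. The paper instead proves this theorem as a \emph{reduction} to Theorem~\ref{thm: well-linked-decomposition}: subdivide every edge $e$ of $H$ by a vertex $v_e$, attach a pendant terminal $t_e$, apply the well-linked decomposition with $\alpha=\frac{1}{160\alphaSC(m)\log m}$ to the non-terminal vertices, and then check that $\alpha$-goodness of each cluster in the augmented graph translates into conductance at least $\alpha$ for the corresponding induced subgraph of $H$, while the decomposition's guarantee $\sum_W|\out(W)|\le m(1+16\alpha\alphaSC(m)\log m)$, minus the $m$ terminal edges, gives the $m/10$ bound. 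The two arguments use the same engine (indeed Theorem~\ref{thm: well-linked-decomposition} is itself proved exactly by your recursion-plus-charging scheme); the paper's gadget buys the convenience of quoting an already-proved statement and inheriting its edge bound without redoing any bookkeeping, whereas your direct version avoids the subdivision gadget and the goodness-to-conductance translation, at the cost of carrying out the charging constants yourself. On that score, two small points you left implicit do need to be tracked: the denominator of $\Psi$ is the \emph{internal} edge count $\min\{|E(S_1)|,|E(S_2)|\}$, so the halving should be argued for internal (not incident) edges, which in fact gives the cleaner factor $1/2$ rather than $2/3$; and each cut edge contributes to $\out(X_i)$ for \emph{two} final clusters, so $\sum_i|\out(X_i)|$ is twice the number of edges cut. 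With threshold $2\beta\alphaSC$ this yields roughly $2\cdot\frac{\log m}{80\log m}\cdot m=m/40\le m/10$, so your slack absorbs both corrections and the claim stands.
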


The algorithm in phase 1 uses Theorem~\ref{thm: cut into expanders} to
partition the contracted graph $H$ into a collection
$\set{X_1,\ldots,X_{\ell}}$ of clusters. Recall that $m=|E(H)|$, and
$n=|V(H)|$. We are guaranteed that $\sum_{i=1}^{\ell}|E(X_i)|\geq
0.9m$ and $\sum_{i=1}^{\ell}|\out(X_i)| \leq 0.1m$ from Theorem~\ref{thm: cut   into expanders}.
 
Let $\xset'$ contain all clusters $X_i$ with $|\out(X_i)|\geq \half
|E(X_i)|$, and let $\xset$ contain all remaining clusters. Notice that 
\ifabstract \\ \fi
$\sum_{X_i\in \xset'}|E(X_i)|\leq 2\sum_{X_i\in \xset'}|\out(X_i)|\leq
2\sum_{i=1}^{\ell}|\out(X_i)|\leq 0.2m$. Therefore, $\sum_{X_i\in
  \xset}|E(X_i)|\geq \half m\geq \frac{\alpha^* k} 6$ from Observation~\ref{obs:
  many edges}. From now on we only focus on clusters in $\xset$.

Assume first that there is some cluster $X_i\in \xset$, with
$|E(X_i)|\leq 2r'$. We claim that in this case, we can find a new
partition $\cset'$ of the vertices of $G$ into acceptable clusters,
with $\phi(\cset') \leq \phi(\cset)-1$. We
first need the following simple observation\ifabstract  whose proof appears in the full version\fi.

\begin{observation}\label{obs: small cut few terminals}
Assume that for some $X_i\in\xset$, $|E(X_i)|\leq 2r'$ holds. Let $X'=\bigcup_{v_C\in X_i}C$. Then $|X'\cap T|<|T|/2$.
\end{observation}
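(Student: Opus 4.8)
The plan is to show that the terminals lying inside $X'=\bigcup_{v_C\in X_i}C$ cannot reach the "majority half" of $T$, which forces $|X'\cap T|<|T|/2$. First I would recall that each cluster $C\in\cset$ is an acceptable cluster, so $|C\cap T|\le |T|/2$ by definition; moreover $|\out_G(C)|\le r'$. The key bound is on the number of edges that leave $X'$ in $G$: since $\out_G(X')$ consists exactly of edges of $H$ incident to the super-node set $X_i$, we have $|\out_G(X')|=|\out_H(X_i)|\le |E_H(X_i)| = |E(X_i)| \le 2r'$, using the hypothesis of the observation together with the fact that $X_i\in\xset$ (so $|\out(X_i)|\le\frac12|E(X_i)|$, which we don't even need here, but it is consistent).

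Now I would invoke the $\alpha^*$-well-linkedness of $T$ in $G$, in the contrapositive direction. Suppose for contradiction that $|X'\cap T|\ge |T|/2$; since $T$ is partitioned between $X'$ and $V(G)\setminus X'$, and $X_{h+1}$-type issues aside, the smaller side has at least $\min\{|X'\cap T|,\,|T\setminus X'|\}$ terminals. If $|X'\cap T|\ge|T|/2$, then either $X'$ contains essentially all of $T$ (in which case $|T\setminus X'|$ is small), or both sides have $\Omega(|T|)$ terminals. In the latter case, well-linkedness gives $|\out_G(X')|\ge \alpha^*\cdot\min\{|X'\cap T|,|T\setminus X'|\}$, and if this minimum is $\Omega(k)$ we get $|\out_G(X')|=\Omega(\alpha^* k)$, contradicting $|\out_G(X')|\le 2r'$ since $k\ge c''r\log^{30}k$ and $r'=c'\Delta^2 r\log^{11}k$ make $\alpha^* k\gg r'$. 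So the remaining case to rule out is $|X'\cap T|\ge |T|/2$ with $|T\setminus X'|$ small; here I would argue directly that $X'$ being a union of acceptable clusters, each containing at most $|T|/2$ terminals, does not by itself prevent this, so one needs the cut bound: writing $t = |T\setminus X'|$, well-linkedness with the roles possibly reversed gives $|\out_G(X')|\ge \alpha^* t$ whenever $t\le |T|/2$, which is automatically the case. But I actually want to bound $|X'\cap T|$ from above, so instead I compare against the cluster $X_i$ inside $H$: the terminals inside $X'$ that are "cut off" from the rest must be charged to the few edges $\out_G(X')$, and since $T$ is $\alpha^*$-well-linked, the side of the cut $(X', V\setminus X')$ with the fewer terminals has at most $|\out_G(X')|/\alpha^* \le 2r'/\alpha^*$ terminals. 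Since $2r'/\alpha^* \ll |T|/2 = k/2$ (again using $k\ge c''r\log^{30}k$), the side with fewer terminals cannot be $V\setminus X'$ unless $|T\setminus X'|\le 2r'/\alpha^* < k/2 \le |T\cap X'|$ is consistent — so I must instead conclude that $|X'\cap T|$ is the smaller side, giving $|X'\cap T|\le 2r'/\alpha^* < |T|/2$, which is exactly the claim.

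The main obstacle is getting the direction of the well-linkedness inequality right: $\alpha$-well-linkedness bounds $|\out_G(X')|$ from below by $\alpha^*$ times the size of the \emph{smaller} terminal side, so to extract an upper bound on $|X'\cap T|$ I need to be sure that $X'\cap T$ is in fact the smaller side, which is where the quantitative gap $2r'/\alpha^* < |T|/2$ does the work — the inequality $|\out_G(X')|\le 2r' $ combined with $|\out_G(X')|\ge \alpha^*\min\{|X'\cap T|,|T\setminus X'|\}$ yields $\min\{|X'\cap T|,|T\setminus X'|\}\le 2r'/\alpha^*$, and since $2r'/\alpha^* = O(r\log^{12.5}k)\ll k/2$ by the assumption $k\ge c''r\log^{30}k$, the minimum is achieved by $X'\cap T$ (it cannot be $|T\setminus X'|$ if $|X'\cap T|\ge|T|/2$, but if $|X'\cap T|\ge|T|/2$ then $\min = |T\setminus X'|\le 2r'/\alpha^*$, i.e.\ $|X'\cap T|\ge |T| - 2r'/\alpha^* $; this does \emph{not} contradict anything, so the clean way is: whichever side is smaller has $\le 2r'/\alpha^*$ terminals, hence $< |T|/2$, hence $X'\cap T$, which would otherwise be $\ge |T|/2$, must be the larger side — but we want it to be small). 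Resolving this cleanly requires observing that the correct reading is simply $|X'\cap T|\le |\out_G(X')|/\alpha^* + \text{(contribution from acceptable clusters)}$; more precisely, since each cluster $C\subseteq X'$ is acceptable with $|C\cap T|\le|T|/2$ and $X'$ is a union of at least two such clusters (as $H$ is connected with $n\ge 2$), and since the cut $(X',V\setminus X')$ is sparse, $T$'s well-linkedness forces $\min\{|X'\cap T|, |T\setminus X'|\}\le 2r'/\alpha^* \ll |T|/2$; I would then rule out the possibility $|T\setminus X'|\le 2r'/\alpha^*$ by noting that in that case $X'$ would be a small-cut set containing all but $o(k)$ terminals, but then some single acceptable cluster $C\subseteq X'$ would — no; rather, I would simply observe that $|X'\cap T| \le |T|/2$ must hold because if not, apply well-linkedness to the partition and charge: the flow of $|T\setminus X'|$ terminal-disjoint paths shows $|\out_G(X')|\ge \alpha^* |T\setminus X'|$ is the wrong direction too. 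The genuinely clean argument, which I'd commit to, is: $\min\{|T\cap X'|,|T\setminus X'|\}\le |\out_G(X')|/\alpha^*\le 2r'/\alpha^* < |T|/2$ by the parameter choice, so exactly one of the two sides has more than $|T|/2$ terminals; and we separately know that $X'$ cannot contain more than $|T|/2$ — here I invoke that no sequence of the earlier steps has placed $>|T|/2$ terminals in $X'$ — actually the cleanest is to note the earlier algorithm already discarded the one cluster among $X_1,\dots,X_{h+1}$ with a terminal majority, so for the $X_i\in\xset$ under consideration we already know (by the setup preceding the observation, where $X_{h+1}$ absorbed the terminal-heavy cluster) that $|X'\cap T|\le|T|/2$, and combined with $\min\le 2r'/\alpha^*<|T|/2$ we must then be in the case $|X'\cap T|\le 2r'/\alpha^* < |T|/2$, completing the proof. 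I expect writing this carefully — pinning down which side is the "min" and citing the right earlier fact — to be the only delicate point; the arithmetic $2r'/\alpha^* = O(\Delta^2 r\log^{12.5}k) = o(k)$ under $k\ge c''r\log^{30}k$ is routine.
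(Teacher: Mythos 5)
Your argument, as you yourself notice midway through, only yields $\min\{|X'\cap T|,\,|T\setminus X'|\}\le |\out_G(X')|/\alpha^*\le 2r'/\alpha^*$, and this is perfectly consistent with the scenario you need to exclude, namely $|X'\cap T|\ge |T|/2$ with $|T\setminus X'|\le 2r'/\alpha^*$ (i.e.\ $X'$ swallowing almost all terminals behind a small cut). The way you finally try to close this hole --- ``the earlier algorithm already discarded the one cluster among $X_1,\dots,X_{h+1}$ with a terminal majority'' --- does not apply here: that discarding step belongs to Case~1 (and to the proof of the second theorem), where the sets come from iterated balanced cuts. In Case~2, the sets $X_1,\ldots,X_\ell$ are produced by the conductance decomposition of Theorem~\ref{thm: cut into expanders}, the only filtering before this observation is by the condition $|\out(X_i)|\ge |E(X_i)|/2$, and no terminal-based pruning has occurred. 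In fact the assertion you want to import is essentially the observation itself, so the appeal is circular; a cluster of $\xset$ could a priori contain more than half of $T$, and that is exactly what must be ruled out here.

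The paper closes the gap with an idea your proposal does not contain: instead of applying well-linkedness to the two-sided cut $(X',V\setminus X')$, it re-partitions $X'$ along the boundaries of its constituent acceptable clusters. Assuming for contradiction $|X'\cap T|\ge |T|/2$, set $C^*=V(G)\setminus X'$, so $|C^*\cap T|\le |T|/2$, and recall each acceptable cluster $C\subseteq X'$ has $|C\cap T|\le |T|/2$. Claim~\ref{claim: simple partition} then splits the collection consisting of these clusters together with $C^*$ into two groups $\aset,\bset$, each containing at least $|T|/3$ terminals, and $\alpha^*$-well-linkedness of $T$ forces $|E_G(A,B)|\ge \alpha^*|T|/3$. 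The punchline is that every edge of $E_G(A,B)$ joins two distinct clusters of this collection and hence lies in $E_H(X_i)\cup\out_H(X_i)$, a set of at most $2r'+r'=3r'<\alpha^* k/3$ edges --- a contradiction. The rebalancing via the cluster structure of $X'$ is what lets one apply well-linkedness to a partition where \emph{both} sides are terminal-heavy while the cut is still confined to the few edges of $H$ incident to $X_i$; without it, the direction-of-the-minimum problem you wrestled with cannot be resolved, and your proof as written does not establish the claim.
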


\iffull
\begin{proof}
Assume otherwise. Observe that since $X_i\in \xset$, $|\out(X_i)|\leq |E(X_i)|/2\leq r'$ must hold. Let $\cset'\subseteq \cset$ be the collection of all acceptable clusters, whose corresponding super-nodes belong to $X_i$. 
Let $C^*=\bigcup_{v_C\in V(H)\setminus X_i}C$, and let $\cset''=\cset'\cup\set{C^*}$. From our assumption, $|C^*\cap T|\leq |T|/2$, so we can use Claim~\ref{claim: simple partition} to partition the clusters in $\cset''$ into two subsets, $\aset$ and $\bset$, such that $\sum_{C\in \aset}|T\cap C|,\sum_{C\in \bset}|T\cap C|\geq |T|/3$. Let $A=\bigcup_{C\in \aset}C$ and let $B=\bigcup_{C\in \bset}C$. Then  $|E_G(A,B)|\geq \alpha^*|T|/3$, from the $\alpha^*$-well-linkedness of the set $T$ of terminals. 
Let $E'=E_H(X_i)\cup \out_H(X_i)$. Observe that $E'$ is also a subset of edges of $G$, and it disconnects $A$ from $B$ in $G$. However, $|E'|\leq 3r'<\alpha^*k/3$, a contradiction.
\end{proof}
\fi

Let $X'_i$ be the set of vertices of $G$, obtained from $X_i$ by
un-contracting the super-nodes of $X_i$. We apply Theorem~\ref{thm:
  well-linked-decomposition} to the set $X'_i$ of vertices, to obtain a
partition $\wset_i$ of $X'_i$ into $\alphaWL$-good clusters. It is easy to
see that all clusters in $\wset_i$ are acceptable, since we are
guaranteed that for each $R\in \wset_i$, $|\out(R)|\leq |\out(X'_i)|\leq
r'$, and $|R\cap T|<|T|/2$. The new partition $\cset'$ of the vertices
of $G$ into acceptable clusters is obtained as follows. We include all
clusters of $\cset$ that are disjoint from $X'_i$, and we additionally
include all clusters in $\wset_i$. From the above discussion, all
clusters in $\cset'$ are acceptable. It now only remains to bound
$\phi(\cset')$. It is easy to see that $\phi(\cset')\leq
\phi(\cset)-|E_H(X_i)|-|\out_H(X_i)|+\sum_{R\in
  \wset_i}|\out(R)|$. The choice of $\alphaWL$ ensures that $\sum_{R\in
  \wset_i}|\out(R)|\leq 1.25 |\out_G(X'_i)| = 1.25 |\out_H(X_i)|$. Since
$|E_H(X_i)|>2|\out_H(X_i)|$, we get that $\phi(\cset')\leq\phi(\cset)-1$.

From now on, we assume that for every cluster $X_i\in \xset$,
$|E(X_i)|\geq 2r'$.

\paragraph{Phase 2} 
For convenience, we assume without loss of generality, that
$\xset=\set{X_1,\ldots,X_z}$. For $1\leq i\leq z$, let
$m_i=|E(X_i)|$. Recall that from the above discussion, $m_i\geq r'$,
and $\sum_{i=1}^zm_i\geq \alpha^*k/6$. We set
$h_i=\lceil\frac{6m_ih}{\alpha^*k}\rceil$.  Let $X'_i=\bigcup_{v_C\in X_i}C$. In the
remainder of this section, we will partition, for each $1\leq i\leq
z$, the graph $G[X'_i]$ into $h_i$ vertex-disjoint subgraphs, of
treewidth at least $r$ each. Since $\sum_{i=1}^zh_i\geq
\sum_{i=1}^z\frac{6m_i h}{\alpha^* k}\geq h$, this will complete the proof of
Theorem~\ref{thm:main1-intro}.

From now on, we focus on a specific graph $H[X_i]$, and its
corresponding un-contracted graph $G[X'_i]$. Our algorithm performs
$h_i$ iterations. In the first iteration, we embed an expander over
$r''=r\poly\log k$ vertices into $H[X_i]$. We then partition $H[X_i]$
into two sub-graphs: $H_1$, containing all vertices that participate
in this embedding, and $H'_1$ containing all remaining vertices. Our
embedding will ensure that $\sum_{v\in V(H_1)}d_{H}(v)\leq
r^2\poly\log k$, or in other words, we can obtain $H'_1$ from $H[X_i]$
by removing only $r^2\poly\log k$ edges from it, and deleting isolated vertices. We then proceed to the second
iteration, and embed another expander over $r''$ vertices into
$H'_1$. This in turn partitions $H'_1$ into $H_2$, that contains the
embedding of the expander, and $H'_2$, containing the remaining
edges. In general, in iteration $j$, we start with a sub-graph $H'_{j-1}$ of $H$, and partition it into two subgraphs: $H_j$ containing the embedding of an expander, and $H_{j}'$ that becomes an input to the next iteration. Since we ensure that for each graph $H_j$, the total out-degree
of its vertices is bounded by $r^2\poly\log k$, each residual graph $H'_j$ is
guaranteed to contain a large fraction of the edges of the original
graph $H[X_i]$. We show that this in turn guarantees that $H'_j$
contains a large sub-graph with a large conductance, which will in turn
allow us to embed an expander over a subset of $r''$ vertices into
$H'_j$ in the following iteration.

We start with the following theorem that forms the technical basis for
iteratively embedding multiple expanders of certain size into a larger
expander. \ifabstract The proof appears in the full version.\fi

\begin{theorem}\label{thm: find large conductance subgraph}
  Let $\G$ be any graph with $|E(\G)|=m$ and $\Psi(\G)\geq \gamma$,
  where $\gamma\leq 0.1$. Let $\H$ be a sub-graph obtained from $\G$
  by removing some subset $S_0$ of vertices and all their adjacent
  edges, so $\H=\G-S_0$. Assume further that $|E(\G)\setminus E(\H)|\leq \gamma
  m/8$. Then we can efficiently compute a subset $S$ of vertices in $\H$, such that
  $\H[S]$ contains at least $m/2$ edges and has conductance at least
  $\frac{\gamma}{4\alphaSC(m)}$.
\end{theorem}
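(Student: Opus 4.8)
The plan is to obtain $S$ by an iterative pruning of $\H$, in the spirit of expander trimming. We maintain a current subgraph, starting with $\H_0=\H$. In step $j$ we run the approximate sparsest-cut algorithm $\algSC$ on $\H_j$ in its conductance version, obtaining a cut $(A_j,B_j)$ of $\H_j$ whose conductance is at most $\alphaSC(|E(\H_j)|)$ times that of $\H_j$. If this cut has conductance at least $\gamma/4$, we stop and output $S=V(\H_j)$; since $|E(\H_j)|\le m$ and $\alphaSC$ is non-decreasing, this certifies that $\H[S]=\H_j$ has conductance at least $\frac{\gamma/4}{\alphaSC(|E(\H_j)|)}\ge\frac{\gamma}{4\alphaSC(m)}$. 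Otherwise the cut has conductance below $\gamma/4$; taking w.l.o.g.\ $|E_{\H_j}(A_j)|\le|E_{\H_j}(B_j)|$ (so that $|E_{\H_j}(A_j,B_j)|<\frac{\gamma}{4}|E_{\H_j}(A_j)|$), we set $\H_{j+1}=\H_j[B_j]$, i.e.\ we delete the lighter side $A_j$. Discarding edgeless components up front, each step strictly decreases the edge count, so the process terminates after at most $m$ iterations, each a single call to $\algSC$.

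The crux is the invariant that $|E(\H_j)|\ge m/2$ throughout; it suffices to keep the deficit $\delta_j:=m-|E(\H_j)|$ below $m/5$, and this is where the hypotheses $\Psi(\G)\ge\gamma$ and $|E(\G)\setminus E(\H)|\le\gamma m/8$ are used. Regard $V(\H_j)\subseteq V(\G)$, let $R_j=V(\G)\setminus V(\H_j)$ collect everything deleted so far (including $S_0$), and let $\partial_j=|\out_{\G}(V(\H_j))|$ be the boundary of the current graph inside $\G$. Since all pruned cuts are $\tfrac{\gamma}{4}$-sparse and $\sum_{i<j}|E_{\H_i}(A_i)|\le\delta_j$, telescoping $\partial_{i+1}\le\partial_i+|E_{\H_i}(A_i,B_i)|$ gives $\partial_j\le|\out_{\G}(V(\H))|+\sum_{i<j}|E_{\H_i}(A_i,B_i)|<\tfrac{\gamma m}{8}+\tfrac{\gamma}{4}\delta_j$. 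Applying the conductance bound for $\G$ to the cut $(A_j,V(\G)\setminus A_j)$, whose $\G$-boundary is at most $|E_{\H_j}(A_j,B_j)|+\partial_j$, a short calculation --- which also uses $|E_{\H_j}(A_j)|\le|E(\H_j)|/2$ to exclude the complement being the lighter side in $\G$ --- gives $|E_{\H_j}(A_j)|<\tfrac{4\partial_j}{3\gamma}$, so one pruning step removes at most $(1+\tfrac{\gamma}{4})\bigl(\tfrac{m}{6}+\tfrac{\delta_j}{3}\bigr)$ edges, i.e.\ $\delta_{j+1}\le\delta_j+(1+\tfrac{\gamma}{4})\bigl(\tfrac{m}{6}+\tfrac{\delta_j}{3}\bigr)$.

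To finish, apply the conductance bound for $\G$ once more to the cut $(V(\H_j),R_j)$: combining $\partial_j\ge\gamma\min\{|E_{\G}(V(\H_j))|,|E_{\G}(R_j)|\}$ with $|E_{\G}(V(\H_j))|=m-\delta_j$, with $|E_{\G}(R_j)|\ge|E_{\H}(R_j\setminus S_0)|\ge(1-\tfrac{\gamma}{4})\delta_j-\tfrac{\gamma m}{4}$, and with the upper bound $\partial_j<\tfrac{\gamma m}{8}+\tfrac{\gamma}{4}\delta_j$ above, one obtains a ``forbidden interval'' (roughly $(\tfrac m5,\tfrac{7m}{10})$ when $\gamma\le 0.1$) in which $\delta_j$ can never lie. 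Since $\delta_0=|E(\G)\setminus E(\H)|\le\gamma m/8$ sits below this interval, and since the per-step estimate shows that whenever $\delta_j$ lies below the interval $\delta_{j+1}$ is still strictly below its upper endpoint, the deficit cannot jump across the interval; hence $\delta_j<m/5<m/2$ for every $j$, and in particular $|E(\H_j)|>m/2$ at termination.

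The main obstacle is precisely this invariant: a single pruning step may in principle delete nearly half of the current edges, so the bound on the total pruned volume genuinely relies on the expansion of $\G$ --- it forces every lighter side $A_j$ to be small, while $|E(\G)\setminus E(\H)|\le\gamma m/8$ keeps $\partial_j$ small --- and on choosing the constants so that the ``forbidden interval'' strictly contains the endpoint of the range reachable in one step. The slack in the hypotheses ($\gamma\le 0.1$, only $\le m/80$ edges missing, up to $m/2$ allowed to be lost) is exactly what makes this possible; beyond this bookkeeping I expect no conceptual difficulty.
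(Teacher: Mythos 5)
Your proposal is correct, and its algorithmic skeleton is exactly the paper's: repeatedly run \algsc in its conductance version on the current induced subgraph, prune the side with fewer induced edges whenever the returned cut has conductance below $\gamma/4$, and stop when the returned cut itself certifies $\Psi\geq\frac{\gamma}{4\alphaSC(m)}$. Where you genuinely diverge is in the accounting that guarantees at least $m/2$ edges survive. The paper maintains an edge set $R$, initialized to $E(\G)\setminus E(\H)$, and charges the crossing edges of each pruned cut to boundary edges already in $R$ (using $\Psi(\G)\geq\gamma$ applied to the pruned side against the rest of $\G$ to show the pre-existing boundary is at least three times the new cut), so that $|R|\leq 2|R_0|\leq\gamma m/4$; it then finishes by contradiction, grouping the pruned clusters together with $S_0$ and the final piece into two halves of induced size $\geq 0.3m$ each via Claim~\ref{claim: simple partition} and invoking $\Psi(\G)\geq\gamma$ once more, since the crossing edges all lie in $R$. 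You instead track the deficit $\delta_j$ and the $\G$-boundary $\partial_j$ of the surviving subgraph, use the expansion of $\G$ twice (on each pruned side, and on the cut between the survivor and everything removed), and rule out $\delta_j$ ever entering a forbidden interval that the per-step bound cannot jump across. Both arguments are sound and rest on the same two hypotheses in the same roles; the paper's charging scheme is tighter and more local (total pruned crossing edges at most $\gamma m/4$, no case analysis on $\delta_j$), while your potential-function argument tolerates a much cruder per-step loss estimate (a single step may cost up to roughly $m/4$ edges in your bound) at the price of the forbidden-interval bookkeeping. One small numerical remark: with $\gamma\leq 0.1$ your Case-2 threshold is about $0.21m$ rather than $m/5$, so the invariant you actually establish is $\delta_j<0.21m$; this is immaterial since anything below $m/2$ suffices, but the final sentence "$\delta_j<m/5$" should be adjusted accordingly.
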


\iffull
\begin{proof}
  We show an algorithm to find the required set $S$ of vertices. We
  start with $S=V(\H)$, and a collection $\wset$ of disjoint vertex subsets
  of $\H$, that is initially empty. We then perform a number of iterations, where in each iteration we apply the algorithm $\algSC$ for approximating a minimum-conductance cut to graph $\H[S]$. If the algorithm returns a partition $(A,B)$
  of $S$, such that $|E_{\G}(A)|\leq |E_{\G}(B)|$, and $|E_{\G}(A,B)|<
  \gamma |E_{\G}(A)|/4$, then we delete the vertices of $A$ from $S$, add
  $A$ to $\wset$, and proceed to the next iteration. Otherwise, if the conductance of the computed cut is at least $\gamma/4$, then we terminate the algorithm, and we let $S^*$ denote the
  final set $S$. Then clearly, the
  conductance $\Psi(\H[S^*])\geq \frac{\gamma}{4\alphaSC(m)}$. It now only remains to
  prove that $\H[S^*]$ contains at least $m/2$ edges.

  Let $R_0$ denote the set of all edges that belong to $E(\G)$ but not
  to $E(\H)$. Notice that each such edge has at most one endpoint in
  the initial set $S$. We keep track of an edge set $R$ that is
  initialized to $R_0$. In each iteration, we add some edges to $R$,
  charging them to the edges that already belong to $R$. We will
  ensure that if $S$ is the current set, then all edges in $\out(S)$
  belong to $R$. Set $R$ will also contain all edges in $\bigcup_{A\in
    \wset}\out(A)$.

  Consider some iteration where we have computed a partition $(A,B)$
  of the current set $S$, and deleted the vertices of $A$ from $S$.
  Notice that this means that $|E_{\G}(A,B)|<\gamma
  |E_{\G}(A)|/4$. Consider the partition $(A,C)$ of the vertices of
  $\G$, where $C=B\cup (V(\G)\setminus S)$. Since $\Psi(\G)\geq
  \gamma$, $|E_{\G}(A,C)|\geq \gamma |E_{\G}(A)|$. Since
  $E_{\G}(A,C)\subseteq E_{\G}(A,B)\cup (\out_{\G}(S)\cap
  \out_{\G}(A))$, and $|E_{\G}(A,B)|<\gamma |E_{\G}(A)|/4$, we have
  that $|\out_{\G}(S)\cap \out_{\G}(A)|\geq 3\gamma|E_{\G}(A)|/4\geq
  3|E_{\G}(A,B)|$.

  The edges of $\out_{\G}(S)\cap \out_{\G}(A)$ must already belong to
  $R$. We add the edges of $E_{\G}(A,B)$ to $R$, and we charge their
  cost to the edges of $\out_{\G}(S)\cap \out_{\G}(A)$. Each edge in
  $\out_{\G}(S)\cap \out_{\G}(A)$ is then charged at most $1/3$, and
  each such edge will never be charged again, as none of its endpoints is any longer contained in the new set $S=B$.

  Using this charging scheme, the total direct charge to each edge of
  $R$ is at most $1/3$, and the total amount charged to each edge in
  $R_0$ (including direct and indirect charging) is a geometric series
  whose sum is bounded by $1$. Therefore, $|R|\leq 2|R_0|\leq \gamma
  m/4$.

  We now assume for contradiction that
  $E_{\H}(S^*)=|E_{\G}(S^*)|<m/2$. Recall that $S_0=V(\G)\setminus
  V(\H)$. Observe that for each cluster $A\in \wset$, $|E_{\G}(A)|\leq
  m/2$, since, when $A$ was added to $\wset$, there was another
  cluster $B$ disjoint from $A$ with $|E_{\G}(A)|\leq
  |E_{\G}(B)|$. Let $\wset'=\wset\cup \set{S_0,S^*}$. From the above
  discussion, for each set $Z\in \wset'$, $|E_{\G}(Z)|\leq m/2$, while
  $\sum_{Z\in \wset'}|E_{\G}(Z)|\geq m-|R|\geq 0.9m$. Using Claim~\ref{claim:
    simple partition}, we can partition the clusters in $\wset'$ into
  two subsets, $\aset,\bset$, such that $\sum_{Z\in
    \aset}|E_{\G}(Z)|,\sum_{Z\in \bset}|E_{\G}(Z)|\geq 0.3m$. Let
  $X=\bigcup_{Z\in \aset}Z$, $Y=\bigcup_{Z\in \bset}Z$. Then, since
  $\G$ has conductance at least $\gamma$, $|E_{\G}(X,Y)|\geq 0.3\gamma
  m$. However, $E_{\G}(X,Y)\subseteq R$, and as we have shown,
  $|R|\leq \gamma m/4<0.3\gamma m$, a contradiction.
\end{proof}

\fi

The next theorem is central to the execution of Phase 2. The theorem shows that, if we are given a sub-graph $H'$ of $H$ that has a high enough conductance, and contains at least $r'$ edges, then we can find a subset $S$ of $r'$ vertices of $H'$, such that the following holds: if $S'=\bigcup_{v_C\in S}C$, and $G'=G[S']$, then $\tw(G')\geq r$. In order to show this, we embed an expander over a set of $r''=r\poly\log k$ of vertices into $H'$, and define $S$ to be the set of all vertices of $H'$ participating in this embedding. The embedding of the expander into $H'[S]$ is then used to certify that the treewidth of the resulting graph $G'$ is at least $r$. The proof of the following theorem \iffull is deferred to the Appendix\fi \ifabstract appears in the full version\fi.

\begin{theorem}\label{thm: remove expander}
  Let $H'$ be any vertex-induced subgraph of $H$, such that
  $|E(H')|\geq r'$, and $\Psi(H')\geq \frac{1}{640\alphaSC^2(m)\log m}$. Then there is an efficient algorithm to find a subset $S$ of at most $r'$ vertices of $H'$, such that, if $G'$
  is obtained from $H'[S]$ by un-contracting the super-nodes in $S$,
  then $\tw(G')\geq r$.
\end{theorem}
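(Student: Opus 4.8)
The plan is to use the expander-embedding machinery tailored to high-conductance graphs, combined with the standard fact that embedding an expander witnesses large treewidth. First, recall that $H'$ is a vertex-induced subgraph of the contracted graph $H$, so the vertices of $H'$ are super-nodes (each corresponding to an $\alphaWL$-good acceptable cluster of $G$), and the maximum degree in $H'$ is at most $r'$. Set $r'' = r\cdot\poly\log k$ with the polylogarithmic factor chosen large enough to absorb all the conductance losses that accumulate below (in particular a factor of order $\Delta^2 \alphasc^2(m)\log m$, which is $\poly\log k$ since $m \le \poly(k)$ and $\Delta = O(\log^3 k)$). I would invoke the expander-embedding result used in \cite{ChekuriE13} (derived from the cut-matching game of \cite{KRV}): since $|E(H')| \ge r'$ and $\Psi(H') \ge \frac{1}{640\alphaSC^2(m)\log m}$, one can efficiently find an embedding of a constant-degree expander $\mathcal{E}$ on $r''$ vertices into $H'$, where the vertices of $\mathcal{E}$ map to distinct vertices of $H'$ and the edges of $\mathcal{E}$ map to paths in $H'$ causing congestion at most $O(\Delta \cdot \alphaSC(m)\log m / \Psi(H')) = \poly\log k$; here the bound $r' = c'\Delta^2 r \log^{11} k$ guarantees there are enough edges to support the routing. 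Let $S$ be the set of vertices of $H'$ that participate in this embedding (endpoints of the expander plus internal vertices of the routing paths); since each expander vertex has constant degree and each routing path has bounded length times congestion, $|S| \le r''\cdot\poly\log k \le r'$, after enlarging the $\poly\log$ factor in $r'$ if needed, which is consistent with the constant $c'$ being "sufficiently large."

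Next I would pass from the contracted picture to $G$. Let $S' = \bigcup_{v_C \in S} C$ and $G' = G[S']$. For each super-node $v_C \in S$ pick a representative boundary vertex, giving a set $T'$ of $r''$ vertices of $G'$ (the images of the expander vertices). The claim is that $T'$ is $\Omega(1/\poly\log k)$-well-linked in $G'$: given any partition $(X,Y)$ of $V(G')$, one routes the expander flow between $T'\cap X$ and $T'\cap Y$ in two stages — first use the expander embedding in $H'[S]$ to route the corresponding demand among super-nodes with congestion $\poly\log k$ (the expander itself has the routing property that any demand with node-degree $\le 1$ routes with congestion $O(\log r'')$), then route across each cluster $C$ using its $\alphaWL$-goodness, which handles the $O(1)$-restricted demands with congestion $O(\log r'/\alphaWL) = \poly\log k$; concatenating and applying the flow-cut gap shows $|E_{G'}(X,Y)| \ge \Omega(|T'\cap X|/\poly\log k)$ whenever $|T'\cap X|\le|T'\cap Y|$. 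Since $|T'| = r''$ and $G'$ has maximum degree $\le \Delta$, Corollary~\ref{cor: from well-linkedness to treewidth} gives $\tw(G') \ge \Omega\!\left(\frac{r''}{\Delta\cdot\poly\log k}\right) - 1 \ge r$, using $r'' = r\cdot\poly\log k$ with a large enough exponent.

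The main obstacle is engineering the chain of inequalities so that all the losses — the $\alphasc(m)$ factors from the cut-matching game, the congestion of the expander embedding, the congestion from routing through $\alphaWL$-good clusters, and the maximum-degree factor $\Delta$ in Corollary~\ref{cor: from well-linkedness to treewidth} — are jointly absorbed by a single $\poly\log k$ slack in $r''$, while simultaneously keeping $|S| \le r'$; this is exactly why the statement carries the generous parameter $r' = c'\Delta^2 r\log^{11}k$ and why the $\poly\log$ exponents are not pinned down. A secondary technical point is verifying that $H'$, being merely a vertex-induced subgraph of $H$ with guaranteed conductance and edge count, genuinely supports the expander embedding — but this is handled by the conductance hypothesis together with the fact that high conductance plus enough edges is precisely the input condition of the $\cite{KRV}$-based embedding, so no further structural argument about $H'$ is needed.
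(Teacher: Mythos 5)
Your overall skeleton matches the paper's: embed an expander on $r''=r\,\poly\log k$ vertices into $H'$, let $S$ be the vertices touched by the embedding, un-contract, exhibit a well-linked set of size $\Omega(r''/\Delta)$ via a two-stage flow argument (expander routing in the contracted graph, then $\alphaWL$-goodness inside each cluster), and finish with Corollary~\ref{cor: from well-linkedness to treewidth}. The genuine gap is in the one step that actually drives the theorem: the bound $|S|\le r'$. You invoke the cut-matching-game / \cite{ChekuriE13}-style embedding and then assert that ``each routing path has bounded length times congestion,'' but the cut-matching game (and flow-based embeddings generally) only controls \emph{congestion}, not \emph{path length}. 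Congestion alone does not bound the number of vertices used by the embedding: with $O(r'')$ embedded edges and no length bound, the routing paths could touch essentially all of $V(H')$, which in Case 2 can have up to $k^5$ vertices, vastly more than $r'=c'\Delta^2 r\log^{11}k$ (recall $k\ge c''r\log^{30}k$). So ``$|S|\le r''\cdot\poly\log k\le r'$'' does not follow from what you have established, and no amount of enlarging the $\poly\log$ slack in $r''$ or $r'$ fixes a missing length bound. (A secondary inaccuracy: in \cite{ChekuriE13} the expander vertices are embedded as connected subgraphs of the host graph, not as single vertices, which further inflates $|S|$ unless those subgraphs are explicitly size-controlled.)

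The paper supplies exactly this missing ingredient by a different mechanism. It first replaces every super-node of $H'$ by a degree-$3$ expander to get a constant-degree graph $H''$, proves that $H''$ is an $\Omega(\Psi(H'))$-edge-expander, and then routes the edges of an \emph{external} degree-$3$ expander $X'$ on $r''=\Theta(r\Delta^2\log^{8}k)$ vertices (split into $5$ matchings) using the Leighton--Rao theorem \cite{LR} (Theorem~\ref{thm: routing on expanders}), which guarantees paths of length $\ell=O(\log^3 k)$ \emph{and} congestion $\eta=O(\log^5 k)$. The length bound is what gives $|S|\le r''\cdot\ell=O(r\Delta^2\log^{11}k)\le r'$; the congestion bound is what later feeds into the well-linkedness estimate $\Omega(1/\log^{7.5}k)$ for the set $\Gamma$ in $G'$. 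To repair your argument you would need either to cite a short-path version of the embedding you use, or to reproduce the paper's route: constant-degree reduction, conductance-to-expansion claim, and short-path low-congestion routing of a fixed small expander.
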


We are now ready to complete the description of the algorithm for Phase 2.  
Our algorithm considers each one of the subsets $X_i\in \xset$
of vertices separately. Fix some $1\leq i\leq z$.  If $h_i=1$, then by
Theorem~\ref{thm: remove expander}, graph $G[X_i]$ has treewidth at
least $r$. Otherwise, we perform $h_i$
iterations. At the beginning of every iteration $j$, we are given some
vertex-induced subgraph $H_j$ of $H[X_i]$, with $|E(H_j)|\geq m_i/2\geq r'$ and
$\Psi(H_j)\geq \frac{1}{640\alphaSC^2(m)\log m}$. At the beginning,
$H_1=H[X_i]$, and as observed before, $\Psi(H_1)\geq \frac{1}{160\alphaSC(m)\log m}\geq \frac{1}{640\alphaSC^2(m)\log m}$. In order to execute the $j$th iteration, we apply
Theorem~\ref{thm: remove expander} to graph $H'=H_j$, and compute a
subset $S$ of at most $r'$ vertices of $H'$. We denote this set of
vertices by $S_j^i$, and we let $H^i_j=H[S_j^i]$. We also let $G^i_j$
be the sub-graph of $G$, obtained by un-contracting the super-nodes of
$H^i_j$. From Theorem~\ref{thm: remove expander}, $\tw(G^i_j)\geq r$.
We then apply Theorem~\ref{thm: find large conductance subgraph} to
graph $\G=H[X_i]$, set $S_0=\bigcup_{j'=1}^jS_{j'}^i$, and $\H=\G\setminus
S_0$, to obtain the graph $H_{j+1}=\H[S]$ that becomes an input to the
next iteration.
 
In order to show that we can carry this process out for $h_i$
iterations, it is enough to prove that $\sum_{j=1}^{h_i}\sum_{v\in
  S_j^i}d_H(v)\leq \gamma m_i/8$, where $\gamma=\frac{1}{160 \alphasc(m)\log m}$. Indeed, since the vertex degrees in $H$ are bounded by $r'$,

\[\sum_{j=1}^{h_i}\sum_{v\in S_j^i}d_H(v)\leq \sum_{j=1}^{h_i}r'\cdot |S^i_j|\leq (r')^2\cdot h_i\leq O\left (\frac{m_ir^2 h\poly\log k}{\alpha^*k}\right ),\]

by substituting $h_i=\lceil \frac{6m_ih}{\alpha^*k}\rceil$. Since we assume
that $r^2h<k/\poly\log k$, and $m=O(k^6)$, it follows that the sum is
bounded by $\frac{m_i}{1280\alphasc(m)\log m}$, as required.

Our final collection of subgraphs is $\Pi=\set{G_j^i\mid 1\leq i\leq
  z,1\leq j\leq h_i}$.  From the above discussion, $\Pi$ contains
$\sum_{i=1}^zh_i\geq h$ subgraphs of treewidth at least $r$ each.

\iffull

\label{----------------------------------------sec: proof of second main thm--------------------------------------}
\section{Proof of Theorem~\ref{thm:main2-intro}}\label{sec: proof of second main theorem}
We again use Theorem~\ref{thm:log-degree} to obtain a graph $G'$ whose maximum vertex degree is $O( \log^3 k)$, and $\tw(G')=\Omega(k/\poly\log k)$. From Lemma~\ref{lem: find well-linked set}, there is an efficient algorithm to find a subset $T$ of $\Omega(k/\poly\log k)$ vertices, that we refer to as \emph{terminals} from now on, such that $T$ is $\alpha^*$-well-linked, for $\alpha^*=\Omega(1/\log k)$. For notational convenience, we denote $G'$ by $G$ and $|T|$ by $k$ from now on. Using this notation, the maximum vertex degree in $G$ is bounded by $\Delta=O(\log^3k)$.
We use a parameter $r'=2^{20}r\Delta^2 h\alphasc(k)$, and we assume that $k\geq 2^{10}h^2r'\log k/\alpha^*=2^{30}h^3r\Delta^2\log k\cdot \alphasc(k)/\alpha^*=\Omega(h^3r\poly\log k)$.

We say that a subset $C\sse V(G)$ of vertices is an \emph{acceptable} cluster, iff $|\out(C)|\leq r'$, $|C\cap T|\leq |T|/2$, and $G[C]$ is connected. As before, given a partition $\cset$ of the vertices of $G$ into acceptable clusters, we define a corresponding contracted graph $H_{\cset}$, obtained from $G$ by contracting every cluster $C\in \cset$ into a super-node $v_C$ and removing self-loops. We again denote by $\phi(\cset)$ the number of edges in the graph $H_{\cset}$. Observe that if $C$ is a cluster consisting of a single node, then $C$ is acceptable. Therefore, the partition $\set{\set{v}\mid v\in V(G)}$ of the vertices of $G$, where every vertex belongs to a separate cluster is a partition into acceptable clusters. 
The following observation is analogous to Observation~\ref{obs: many edges}. Its proof is identical and is omitted here.

\begin{observation}\label{obs: many edges 2}
Let $\cset$ be any partition of $V(G)$ into acceptable clusters, and let $H_{\cset}$ be the corresponding contracted graph. Then $|E(H_{\cset})|\geq \alpha^* k/3$.
\end{observation}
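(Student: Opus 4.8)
The plan is to follow verbatim the argument behind Observation~\ref{obs: many edges}: I would exhibit a single vertex-partition of $G$ that is consistent with the clustering $\cset$ — so that all of its crossing edges survive the contraction into $H_{\cset}$ — and then lower-bound the number of those crossing edges using the well-linkedness of $T$.

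First I would check that Claim~\ref{claim: simple partition} applies to the multiset of integers $\{|C\cap T|\}_{C\in\cset}$. These non-negative integers sum to $|T|=k$, and since every $C\in\cset$ is an acceptable cluster we have $|C\cap T|\le |T|/2\le 2|T|/3$. Hence Claim~\ref{claim: simple partition} produces a partition $(\aset,\bset)$ of $\cset$ with $\sum_{C\in\aset}|C\cap T|\ge k/3$ and $\sum_{C\in\bset}|C\cap T|\ge k/3$.

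Next I would set $A=\bigcup_{C\in\aset}C$ and $B=\bigcup_{C\in\bset}C$. Since $\cset$ partitions $V(G)$ (and hence $T$), $(A,B)$ is a partition of $V(G)$ with $|A\cap T|,|B\cap T|\ge k/3$, so $\min\{|A\cap T|,|B\cap T|\}\ge k/3$. Because $T$ is $\alpha^*$-well-linked in $G$, this gives $|E_G(A,B)|\ge \alpha^*\cdot\min\{|A\cap T|,|B\cap T|\}\ge \alpha^* k/3$. Finally, every cluster of $\cset$ lies wholly inside $A$ or wholly inside $B$, so each edge of $E_G(A,B)$ joins two distinct clusters; after contracting the clusters of $\cset$ such an edge connects two distinct super-nodes of $H_{\cset}$ and is therefore not deleted as a self-loop. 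Thus $E_G(A,B)\subseteq E(H_{\cset})$, and $|E(H_{\cset})|\ge \alpha^* k/3$, as claimed.

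I do not anticipate any real difficulty here: the only points that need (routine) checking are that acceptability yields the $\le 2|T|/3$ bound required to invoke Claim~\ref{claim: simple partition}, and that an edge between two distinct clusters never becomes a self-loop under contraction. This is precisely why the authors remark that the proof is identical to that of Observation~\ref{obs: many edges}.
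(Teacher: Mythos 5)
Your proof is correct and is essentially identical to the paper's own argument for Observation~\ref{obs: many edges} (which the authors state carries over verbatim): apply Claim~\ref{claim: simple partition} to $\{|C\cap T|\}_{C\in\cset}$ using the acceptability bound $|C\cap T|\le |T|/2$, un-contract to get a partition $(A,B)$ of $V(G)$, invoke the $\alpha^*$-well-linkedness of $T$, and note the crossing edges survive in $H_{\cset}$. Nothing further is needed.
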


As before, our algorithm consists from a number of iterations. At the beginning of each iteration, we are given a partition $\cset$ of the vertices of $G$ into acceptable clusters, and the corresponding contracted graph $H$. The initial partition, that serves as the input to the first iteration, is $\set{\set{v}\mid v\in V(G)}$. The execution of each iteration is summarized in the following theorem.

\begin{theorem}\label{thm: iteration2}
Given a partition $\cset$ of the vertices of $G$ into acceptable clusters, there is an efficient randomized algorithm, that w.h.p. either computes a new partition $\cset'$ of the vertices of $G$ into acceptable clusters with $\phi(\cset')\leq \phi(\cset)-1$, or finds a partition of $G$ into $h$ disjoint subgraphs with treewidth at least $r$ each.
\end{theorem}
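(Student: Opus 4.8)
The plan is to prove Theorem~\ref{thm: iteration2} by a case analysis on the contracted graph $H=H_{\cset}$, following the template already established for Theorem~\ref{thm: iteration} but now exploiting the weaker hypothesis $h^3 r \le k/\poly\log k$ (which permits $r'$ to be roughly $h$ times larger than before). First I would recall that by Observation~\ref{obs: many edges 2}, $H$ has $m=|E(H)|\geq \alpha^* k/3$ edges, and all vertices of $H$ have degree at most $r'$. The high-level dichotomy is: either $H$ is ``small'' (few vertices relative to its treewidth/edge count), in which case $H$ itself has high conductance after removing a small number of edges and we can directly extract $h$ expander-subgraphs from it; or $H$ is ``large'', in which case it has a sparse balanced separator, and we either find such a separator that lets us replace a cluster by its well-linked decomposition (decreasing $\phi$), or we use the separator structure to peel off a heavy piece. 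In either branch, the outcome is exactly one of the two alternatives in the theorem statement.

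The key steps, in order, would be: (1) Apply a well-linked-decomposition-style argument (Theorem~\ref{thm: well-linked-decomposition} style) to certify that either some acceptable cluster $C$ in $\cset$ can be split so as to strictly decrease $\phi(\cset)$ --- this handles the ``easy'' case and is where we make progress on the potential; (2) otherwise, work in $H$ directly: use Theorem~\ref{thm: cut into expanders} to partition $V(H)$ into high-conductance pieces $X_1,\dots,X_\ell$ losing at most $m/10$ edges, discard the pieces where $|\out(X_i)|\geq |E(X_i)|/2$, and argue $\sum_{X_i\in\xset}|E(X_i)| = \Omega(\alpha^* k)$; (3) for each surviving piece $X_i$, either $|E(X_i)|\le 2r'$ (which, as in Observation~\ref{obs: small cut few terminals}, forces $|X_i'\cap T| < |T|/2$ and lets us again decrease $\phi$ via the well-linked decomposition), or $|E(X_i)| > 2r'$ and we move to the expander-embedding phase; (4) in the embedding phase, use Theorem~\ref{thm: find large conductance subgraph} and Theorem~\ref{thm: remove expander} iteratively to carve $G[X_i']$ into $h_i \approx \lceil 6|E(X_i)|h/(\alpha^* k)\rceil$ vertex-disjoint subgraphs of treewidth $\ge r$, verifying that the total degree removed across all $h_i$ iterations, bounded by $(r')^2 h_i \approx m_i r^2 \Delta^2 h^2 \alphasc^2(k)\poly\log k /(\alpha^* k)$, stays below the $\gamma m_i/8$ budget --- this is exactly where the $h^3 r \le k/\poly\log k$ hypothesis is used (note the extra factor of $h$ compared with the $h r^2$ regime comes from $r'$ scaling with $h$); (5) finally, sum $\sum_{X_i\in\xset} h_i \ge h$ to obtain the required $h$ subgraphs.

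The main obstacle I expect is step (4): making the iterative peeling work requires that after removing the vertices participating in the first $j$ embedded expanders, the residual graph $H_j$ still has at least $m_i/2$ edges \emph{and} still has conductance bounded below by $\frac{1}{640\alphasc^2(m)\log m}$, so that Theorem~\ref{thm: remove expander} can be applied again. Maintaining this invariant hinges on the quantitative bound $\sum_{j=1}^{h_i}\sum_{v\in S_j^i} d_H(v) \le \gamma m_i /8$, and the delicate point is that $r' = 2^{20} r \Delta^2 h \alphasc(k)$ is itself linear in $h$, so $(r')^2 h_i$ carries a factor of roughly $h^3$; the hypothesis $k \ge 2^{30} h^3 r \Delta^2 \log k \cdot \alphasc(k)/\alpha^*$ is calibrated precisely to absorb this. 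A secondary subtlety is handling the single ``bad'' cluster that may contain more than $|T|/2$ terminals --- as in Case 1 of the previous proof, one such cluster among $X_1,\dots,X_{\ell}$ must simply be discarded, and one must check that discarding it does not destroy the $\sum |E(X_i)| = \Omega(\alpha^* k)$ bound, which follows because that cluster contributes its edges only once and the constant slack in Observation~\ref{obs: many edges 2} is generous enough.
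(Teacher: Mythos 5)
Your proposal transplants the Case-2 machinery from the proof of Theorem~\ref{thm:main1-intro} (conductance decomposition of $H$, then iterated use of Theorems~\ref{thm: find large conductance subgraph} and~\ref{thm: remove expander}), but that route does not survive the weaker hypothesis $h^3r\le k/\poly\log k$, and the budget arithmetic you yourself flag as the delicate point is exactly where it breaks. Each application of Theorem~\ref{thm: remove expander} consumes a set $S^i_j$ of $H$-vertices of size $\Omega(r\,\poly\log k)$ (the embedded expander alone has $r''=r\,\poly\log k$ vertices), each of degree up to $r'$, and $r'\ge r''$ is forced; so the edges charged per extracted subgraph are $\Omega(r\cdot r')=\Omega(r^2)$ however you tune $r'$, and $\Omega(r^2h^2\,\poly\log k)$ if you keep $r'=2^{20}r\Delta^2h\alphasc(k)$ as in Section~\ref{sec: proof of second main theorem}. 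The budget available per extraction, however, is only $\gamma m_i/(8h_i)=\Theta(\alpha^* k/(h\,\poly\log k))$, since $h_i=\lceil 6m_ih/(\alpha^*k)\rceil$ and $\gamma=\Theta(1/\poly\log k)$. Hence maintaining the invariant needed to re-apply Theorem~\ref{thm: find large conductance subgraph} requires at least $hr^2\le k/\poly\log k$ (and $h^2r^2\le k/\poly\log k$ with the $h$-scaled $r'$); your displayed estimate $m_ir^2\Delta^2h^2\alphasc^2(k)\poly\log k/(\alpha^*k)$ undercounts a factor of $h$, and even after correcting it the required inequality is off from the hypothesis $h^3r\le k/\poly\log k$ by a factor of $r$. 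The constant $k\ge 2^{30}h^3r\Delta^2\log k\cdot\alphasc(k)/\alpha^*$ is calibrated for a cost \emph{linear} in $r'$ (it is what Claim~\ref{claim: bound on random partition} needs), not for the quadratic-in-$r'$ cost of expander embedding. Concretely, Theorem~\ref{thm:main2-intro} only adds something beyond Theorem~\ref{thm:main1-intro} when $r>h^2$ (otherwise $hr^2\le h^3r$), and in precisely that regime your step (4) exhausts its edge budget after $o(h_i)$ extractions. A secondary obstruction: in Section~\ref{sec: proof of second main theorem} acceptable clusters are only required to be connected with $|\out(C)|\le r'$ and at most half the terminals --- they are \emph{not} $\alphaWL$-good --- whereas the un-contraction step in Theorem~\ref{thm: remove expander} routes the expander's demands through each cluster using its goodness, so that theorem is not available as a black box here; re-imposing goodness puts you back in Theorem~\ref{thm:main1-intro}'s tradeoff.

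The paper's actual proof of Theorem~\ref{thm: iteration2} avoids any per-subgraph embedding cost altogether. It partitions $V(H)$ into $h+1$ parts by assigning every vertex a uniformly random index (this is the source of ``randomized, w.h.p.'' in the statement); Claim~\ref{claim: bound on random partition} gives $|\out_H(X_j)|<16m/h$ and $|E_H(X_j)|\ge m/(8h^2)$ for all parts; the one part possibly containing more than half the terminals is discarded; and in each remaining part the algorithm either finds, via an iterative sparsest-cut procedure with a charging argument (Claim~\ref{claim: recover large-treewidth sets}), a set of $r'/\Delta$ vertices that is $\frac{6\Delta^2 r}{r'}$-well-linked in $G[X'_j]$ --- certifying $\tw(G[X'_j])\ge r$ through Corollary~\ref{cor: from well-linkedness to treewidth} --- or outputs a partition $\cset'$ with $\phi(\cset')\le\phi(\cset)-1$. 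Because the treewidth certificate is a well-linked set of size $O(rh\,\poly\log k)$ inside the whole part, rather than an embedded expander paid for with $\Omega(r^2)$ edges, the dependence on $r$ stays linear, which is the entire point of the second theorem.
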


By applying theorem~\ref{thm: iteration2} to graph $G$ repeatedly, we are guaranteed to find a partition of $G$ into $h$ disjoint subsets of treewidth at least $r$ each after $O(|E(G)|)$ iterations. From now on we focus on proving Theorem~\ref{thm: iteration2}. Let $\cset$ be the current partition of $V(G)$ into acceptable clusters, and let $H$ be the corresponding contracted graph. We denote $|E(H)|$ by $m$. From Observation~\ref{obs: many edges 2}, $m\geq \alpha^*k/3$.

Our algorithm consists of two steps. In the first step, we compute a random partition of $V(H)$ into $(h+1)$ disjoint subsets $X_1,\ldots,X_{h+1}$. At most one of these subsets may contain more than half the terminals, and we ignore this subset in the second step. For each one of the remaining subsets $X_i$, we either extract a subgraph $G_i$ of $G$ whose treewidth is at least $r$, or find a new partition $\cset'$ of $V(G)$ into acceptable clusters with $\phi(\cset')\leq \phi(\cset)-1$.

We start with the first step. Partition the vertices of $V(H)$ into $(h+1)$ disjoint subsets $X_1,\ldots,X_{h+1}$, as follows. Each vertex $v\in V(H)$ chooses an index $1\leq j\leq h+1$ independently uniformly at random, and is then added to $X_j$.
The following claim is very similar to a claim that was proved in~\cite{Chuzhoy11}, and we include its proof in the Appendix for completeness, since we have changed the parameters.

\begin{claim}\label{claim: bound on random partition} With probability at least $\half$, for each $1\leq j\leq h+1$, $|\out_{H}(X_j)|< \frac{16m}{h}$, while $|E_{H}(X_j)|\geq \frac{m}{8h^2}$.
\end{claim}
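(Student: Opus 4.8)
The plan is to analyze the two events separately, bound the probability that each fails, and apply a union bound. Throughout we use that each vertex of $H$ has degree at most $r'$ in $H$, that $m = |E(H)| \geq \alpha^* k/3$, and that $k \geq 2^{10}h^2r'\log k/\alpha^*$, so in particular $m/(8h^2) \geq \Omega(r'\log k)$ is comfortably large; this slack is what makes the concentration arguments go through.

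First I would handle the upper bound on $|\out_H(X_j)|$. Fix $j$. An edge $e=(u,v)$ lands in $\out_H(X_j)$ iff exactly one of $u,v$ is assigned index $j$, which happens with probability exactly $\frac{2}{h+1}\cdot\frac{h}{h+1} \le \frac{2}{h+1}$, so $\expect{|\out_H(X_j)|} \le \frac{2m}{h+1} \le \frac{2m}{h}$. This is a sum of indicator variables, but they are not independent (edges sharing an endpoint are correlated). The standard fix, as in \cite{Chuzhoy11}, is to use a bounded-differences / martingale argument: expose the vertex choices one at a time; changing a single vertex's index changes $|\out_H(X_j)|$ by at most its degree, which is at most $r'$. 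Since there are at most $2m/r'$ nontrivial vertices, Azuma's inequality gives $\prob{|\out_H(X_j)| \ge \frac{4m}{h}} \le \exp\!\left(-\Omega\!\left(\frac{(m/h)^2}{(2m/r')\cdot (r')^2}\right)\right) = \exp(-\Omega(m/(h^2 r')))$, which by the lower bound on $k$ (hence on $m$) is at most, say, $\frac{1}{4(h+1)}$. A union bound over $j$ shows that with probability at least $3/4$, every $X_j$ satisfies $|\out_H(X_j)| < \frac{4m}{h} < \frac{16m}{h}$.

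Next I would handle the lower bound on $|E_H(X_j)|$. Fix $j$. An edge $e$ lies in $E_H(X_j)$ iff both endpoints get index $j$, which happens with probability $\frac{1}{(h+1)^2} \ge \frac{1}{4h^2}$, so $\expect{|E_H(X_j)|} \ge \frac{m}{4h^2} \ge \frac{\alpha^* k}{12 h^2}$, which is large by the assumed lower bound on $k$. Again the indicators are not independent, but the same bounded-differences martingale works: flipping one vertex's index changes $|E_H(X_j)|$ by at most $r'$. Azuma then gives $\prob{|E_H(X_j)| \le \frac{m}{8h^2}} \le \prob{|E_H(X_j)| \le \expect{|E_H(X_j)|} - \frac{m}{8h^2}} \le \exp(-\Omega(m/(h^2 r')))$, which is again at most $\frac{1}{4(h+1)}$ by the lower bound on $k$. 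A union bound over $j$ shows that with probability at least $3/4$, every $X_j$ satisfies $|E_H(X_j)| \ge \frac{m}{8h^2}$.

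Finally, combining the two events by a union bound, with probability at least $1/2$ both hold simultaneously for all $j$, which is exactly the statement of the claim. The only mild subtlety — and the part to state carefully rather than the main obstacle — is verifying that the exponents $\Omega(m/(h^2r'))$ really do exceed $\log(4(h+1))$ given the hypothesis $k \ge 2^{10}h^2r'\log k/\alpha^*$; since $m \ge \alpha^* k/3$, this reduces to a routine substitution. The martingale set-up itself is entirely standard and parallels \cite{Chuzhoy11}, so the bulk of the work is just re-checking constants against the slightly different parameter choices here.
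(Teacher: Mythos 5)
Your argument for the first event is essentially sound (modulo a slip: the number of non-isolated vertices is not $2m/r'$ in general; the correct accounting is to take per-vertex increments $c_v=d_H(v)\le r'$ so that $\sum_v c_v^2\le r'\sum_v d_H(v)=2mr'$, which yields the exponent $\Omega(m/(h^2r'))$ you state). The paper gets the same bound more directly by a Chernoff bound on the independent scaled variables $d_H(v)/r'$. But there is a genuine gap in your treatment of the second event. For the lower tail of $|E_H(X_j)|$ the relevant deviation is $t\approx m/(8h^2)$, not $m/h$, so plain Azuma/McDiarmid with $\sum_v c_v^2\le 2mr'$ gives an exponent of order
\[
\frac{t^2}{\sum_v c_v^2}\;=\;\Theta\!\left(\frac{(m/h^2)^2}{mr'}\right)\;=\;\Theta\!\left(\frac{m}{h^4 r'}\right),
\]
not $\Theta(m/(h^2r'))$ as you claim. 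Under the hypotheses all that is guaranteed is $m\gtrsim h^2 r'\log k$, so this exponent is only about $(\log k)/h^2$, which does not dominate $\log(4(h+1))$ once $h$ is polynomial in $k$ --- a regime explicitly allowed where this claim is used (Theorem~\ref{thm:main2-intro} permits, e.g., $r=O(1)$ and $h=\tilde\Theta(k^{1/3})$). So the union bound in your second step does not go through.

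This is exactly the difficulty the paper's proof is designed around: it partitions $E(H)$ into $z\le 2r'$ classes $U_1,\ldots,U_z$ of mutually non-adjacent edges, each of size at least $m/(4r')$, via the Hajnal--Szemer\'edi theorem applied to the edge-conflict graph (max degree $\le 2r'-2$). Within each class the indicators ``both endpoints chose $j$'' are genuinely independent, so the multiplicative Chernoff lower tail applies with exponent linear in the mean, $\Omega\bigl(|U_i|/(h+1)^2\bigr)=\Omega\bigl(m/(r'h^2)\bigr)\gtrsim\log k$, giving failure probability at most $1/k^2$ per class and $1/k$ after the union bound over the $\le 2r'$ classes. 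To salvage a martingale route you would need variance control (e.g., a Freedman/Bernstein-type inequality, using that the variance of $|E_H(X_j)|$ is $O(mr'/h^3+m/h^2)$), not worst-case increments; as written, the step bounding $\prob{|E_H(X_j)|\le m/(8h^2)}$ fails.
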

\newcommand{\localalpha}{\gamma}

Given a partition $X_1,\ldots,X_{h+1}$, we can efficiently check
whether the conditions of Claim~\ref{claim: bound on random partition}
hold for it. If this is not the case, we compute a new random
partition, until the conditions of Claim~\ref{claim: bound on random
  partition} hold. Clearly, after a polynomial number of iterations,
w.h.p. we obtain a partition with desired properties.
 
From now on we assume that we are given a partition
$X_1,\ldots,X_{h+1}$ of the vertices of $H$, for which the conditions
of Claim~\ref{claim: bound on random partition} hold.  For each $1\leq
j\leq h+1$, let $X'_j$ be the subset of vertices of $G$, obtained by
un-contracting the super-nodes in $X_j$, that is,
$X'_j=\bigcup_{v_C\in X_j}C$. Notice that at most one subset $X'_j$
may contain more than $|T|/2$ terminals. We assume without loss of
generality that this subset is $X'_{h+1}$, and we ignore it from now
on.  Observe that for each $1\leq j\leq h$, $|E_G(X'_j)|\geq
|E_{H}(X_j)|>\frac{|\out_{H}(X_j)|}{128h}=\frac{|\out_G(X'_j)|}{128h}$.
In our next step, we show that for each $1\leq j\leq h$, we can either
find a subset $S_j\sse X'_j$ of $r'/\Delta$ vertices, that are
$\localalpha$-well-linked in $G[X'_j]$, for $\localalpha=\frac{6\Delta^2r}{r'}$,
or we can find a new partition $\cset'$ of $V(G)$ into acceptable
clusters with $\phi(\cset')\leq \phi(\cset)-1$.
 
 \begin{claim}\label{claim: recover large-treewidth sets}
   For each $1\leq j\leq h$, we can either find a subset $S_j\sse
   X'_j$ of $r'/\Delta$ vertices, such that $S_j$ is
   $\localalpha$-well-linked in $G[X'_j]$, for
   $\localalpha=\frac{6\Delta^2r}{r'}$, or we can compute a new partition
   $\cset'$ of $V(G)$ into acceptable clusters with $\phi(\cset')\leq
   \phi(\cset)-1$.
 \end{claim}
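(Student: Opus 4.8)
The plan is to run the well-linked decomposition of Theorem~\ref{thm: well-linked-decomposition} on the un-contracted cluster $X'_j$ and to branch according to whether that decomposition produces a piece of large boundary. Fix $1\le j\le h$. Recall that $X'_j$ is a union of clusters of $\cset$, that $|X'_j\cap T|\le |T|/2$, and (by Claim~\ref{claim: bound on random partition} together with the displayed inequalities preceding this claim) that $|\out_G(X'_j)|=|\out_H(X_j)|<128h\,|E_H(X_j)|$ while $|E_H(X_j)|\ge m/(8h^2)$. I would apply Theorem~\ref{thm: well-linked-decomposition} to $S=X'_j$, whose boundary has size $k'=|\out_G(X'_j)|$, with well-linkedness parameter $\localalpha=6\Delta^2 r/r'$; the definition of $r'$ and the assumed lower bound on $k$ make $\localalpha$ smaller than the threshold $\frac{1}{8\alphasc(k')\log k'}$ required by that theorem. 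This yields a partition $\wset_j$ of $X'_j$ into $\localalpha$-good clusters, each of boundary size at most $k'$, with $\sum_{W\in\wset_j}|\out_G(W)|\le k'\left(1+16\localalpha\,\alphasc(k')\log k'\right)$.

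The first point to record is that every $\localalpha$-good cluster $W$ induces a connected subgraph of $G$: if $G[W]$ split into components $W_1,W_2$, the partition $(W_1,W_2)$ of $W$ would have no internal edges, yet (since $G$ is connected and $W\ne V(G)$) both $W_1$ and $W_2$ meet $\out_G(W)$, contradicting $\localalpha$-goodness. Also $|W\cap T|\le|X'_j\cap T|\le|T|/2$. Hence a cluster $W\in\wset_j$ fails to be acceptable only if $|\out_G(W)|>r'$. \emph{Case A: every $W\in\wset_j$ has $|\out_G(W)|\le r'$.} Then every cluster of $\wset_j$ is acceptable; let $\cset'$ be obtained from $\cset$ by deleting the clusters contained in $X'_j$ and adding the clusters of $\wset_j$ (this is legitimate because $X'_j$ is a union of clusters of $\cset$). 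Counting edges of the contracted graph, $\phi(\cset')-\phi(\cset)=-|E_H(X_j)|+\frac{1}{2}\left(\sum_{W\in\wset_j}|\out_G(W)|-|\out_G(X'_j)|\right)$, since the $G$-edges running between distinct clusters of $\cset$ inside $X'_j$ number exactly $|E_H(X_j)|$, while $\sum_{W\in\wset_j}|\out_G(W)|$ double-counts the edges between distinct clusters of $\wset_j$ and single-counts $\out_G(X'_j)$. The excess term is at most $8\localalpha\,\alphasc(k')\log k'\cdot|\out_G(X'_j)|<1024\,\localalpha\,h\,\alphasc(k')\log k'\cdot|E_H(X_j)|$, which by the choice $\localalpha=6\Delta^2 r/r'$ (which makes this coefficient smaller than $2$) is below $2|E_H(X_j)|$; hence $\phi(\cset')\le\phi(\cset)-1$, as required.

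\emph{Case B: some $W\in\wset_j$ has $|\out_G(W)|>r'$.} Let $\Gamma=\set{v\in W:\exists\,(u,v)\in\out_G(W)}$. Since $G$ has maximum degree $\Delta$, $|\Gamma|\ge|\out_G(W)|/\Delta>r'/\Delta$. Because $W$ is $\localalpha$-good, every partition $(A,B)$ of $W$ satisfies $|E_G(A,B)|\ge\localalpha\min\set{|\out_G(A)\cap\out_G(W)|,|\out_G(B)\cap\out_G(W)|}\ge\localalpha\min\set{|\Gamma\cap A|,|\Gamma\cap B|}$, so $\Gamma$ is $\localalpha$-well-linked in $G[W]$; and since $W\sse X'_j$, restricting any partition of $X'_j$ to $W$ can only decrease the number of separating edges while leaving $\Gamma\cap A$ and $\Gamma\cap B$ unchanged, so $\Gamma$ is $\localalpha$-well-linked in $G[X'_j]$ as well. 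Taking $S_j$ to be an arbitrary subset of $\Gamma$ of size $r'/\Delta$ finishes this case, since any subset of a $\localalpha$-well-linked set is $\localalpha$-well-linked.

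The real content is a single parameter balance: $\localalpha$ must be small enough that the well-linked decomposition barely enlarges the total boundary (so Case~A strictly decreases $\phi$), and simultaneously it must equal the well-linkedness guarantee we are required to output in Case~B. Checking that $\localalpha=6\Delta^2 r/r'$ lies in this window is exactly where the definition of $r'$ and the assumed lower bound on $k$ are consumed; the only other point worth flagging is the (short but necessary) observation that $\localalpha$-goodness forces connectivity, which is what makes the pieces produced in Case~A acceptable.
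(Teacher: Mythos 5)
Your Case B is fine (a $\gamma$-good cluster with boundary larger than $r'$ yields the required well-linked set, exactly as in the proof of Theorem~\ref{thm:main1-intro}), and so is the reduction of acceptability to the single condition $|\out_G(W)|\le r'$ (connectivity can also be handled, as the paper does, by simply splitting clusters into connected components, which does not change the edge count). The genuine gap is in Case A, i.e.\ in the claim that applying Theorem~\ref{thm: well-linked-decomposition} to all of $X'_j$ with parameter $\gamma=6\Delta^2 r/r'=6/(2^{20}h\alphasc(k))$ forces $\phi(\cset')\le\phi(\cset)-1$. First, the precondition $\gamma<\frac{1}{8\alphasc(k')\log k'}$ of that theorem involves $k'=|\out_G(X'_j)|$, which is bounded only by $16m/h$ with $m=|E(H)|$; this is polynomial in $n$ and is \emph{not} bounded by any function of $k$, whereas $\gamma$ and $r'$ depend only on $k,h,r,\Delta$. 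The "assumed lower bound on $k$" relates $k$ to $h$ and $r$, not to $k'$, so the precondition is not justified and can fail. Second, even where the theorem applies, the coefficient you must beat is $1024\,\gamma\,h\,\alphasc(k')\log k'=\frac{6\,\alphasc(k')\log k'}{1024\,\alphasc(k)}=\Theta\bigl((\log k')^{3/2}/\sqrt{\log k}\bigr)$, and it must be below $1$, not below $2$ as you wrote: with excess $<2|E_H(X_j)|$ you only get $\phi(\cset')-\phi(\cset)<|E_H(X_j)|$, which shows nothing. This coefficient is not $O(1)$: even in the most favourable case $k'\approx k$ it is $\Theta(\log k)$, and in general $\log k'$ can be $\Theta(\log n)$. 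So with the claim's fixed $r'$ and $\gamma$ the required inequality simply does not follow; rescuing your route would force extra $\log$ factors (depending on $m$, hence on $n$) into $r'$, which is not the statement being proved.

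The paper avoids exactly this loss by never performing a well-linked decomposition of the whole set $X'_j$. Instead it iterates: for any cluster $C$ in the current partition with $|\out(C)|\ge r'$ it picks an arbitrary subset $\Gamma$ of exactly $r'$ boundary edges and runs $\algsc$ on the sparsest-cut instance whose terminals are only these $r'$ edges, so the only approximation factor that ever appears is $\alphasc(r')\le\alphasc(k)$, which cancels against the $\alphasc(k)$ built into the definition of $r'$. If the returned cut is not sparse, the endpoints of $\Gamma$ give the well-linked set (your Case B in disguise); if it is sparse, the cluster is split and the boundary increase is accounted for by a per-edge charging scheme in which successive charges to a fixed edge decay geometrically, because the boundary of the cluster containing a fixed endpoint shrinks by a factor $2/3$ at each charge while staying above $r'$ until the last step. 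This caps the total charge per boundary edge at $O(\gamma\,\alphasc(r'))=O(1/h)$ with no $\log k'$ or $\log m$ factor, which is what makes $\sum_{C\in\wset_j}|\out_G(C)|<(1+\frac{1}{2^8h})|\out_G(X'_j)|$ and hence the decrease of $\phi$ go through. That charging idea, tailored to the scale $r'$, is the ingredient your proposal is missing.
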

 
 We complete the proof of Claim~\ref{claim: recover large-treewidth
   sets} below, and show that the proof of Theorem~\ref{thm:
   iteration2} follows from it here. If we find a partition $\cset'$
 of $V(G)$ into acceptable clusters with $\phi(\cset')\leq
 \phi(\cset)-1$, then we return this partition. Otherwise, we let
 $G_j=G[X'_j]$, for $1\leq j\leq h$. From Corollary~\ref{cor: from
   well-linkedness to treewidth}, using the sets $S_j$, $\tw(G_j)\geq
 \frac{|S_j|\localalpha}{3\Delta}-1=\frac{r'}{\Delta}\cdot
 \frac{6\Delta^2r}{r'}\cdot\frac{1}{3\Delta}-1\geq r$.
 In order to complete the proof of Theorem~\ref{thm:main2-intro}, it
 is therefore enough to prove Claim~\ref{claim: recover
   large-treewidth sets}.

 \begin{proof}
   Fix some $1\leq j\leq h$. We maintain a partition $\wset_j$ of the
   vertices of $X'_j$, where at the beginning, $\wset_j=X'_j$. We then
   perform a number of iterations, as follows.
 
   In every iteration, we select any cluster $C\in \wset_j$ with
   $|\out(C)|\geq r'$.  Let $\Gamma$ be any subset of $r'$ edges in
   $\out(C)$. We set up an instance of the sparsest cut problem, as
   follows. Subdivide every edge $e\in \Gamma$ by a vertex $t_e$, and
   let $T'=\set{t_e\mid e\in \Gamma}$. Consider the sub-graph of the
   resulting graph induced by $C\cup T'$, where the vertices of $T'$
   serve as terminals. We apply the algorithm $\algsc$ to the
   resulting instance of the sparset cut problem. Let $(A,B)$ be the
   resulting partition of $C$, and let $(\Gamma_A,\Gamma_B)$ be the
   resulting partition of the edges of $\Gamma$, that is,
   $\Gamma_A=\out(A)\cap \Gamma$, and $\Gamma_B=\out(B)\cap
   \Gamma$. Assume without loss of generality that $|\out(A)|\leq
   |\out(B)|$. Two cases are possible. If $|E(A,B)|\geq \localalpha\cdot
   \alphasc(r') \min\set{|\Gamma_A|,|\Gamma_B|}$, then we define the
   set $S_j$ to contain the endpoints of the edges of $\Gamma$ that
   belong to $C$. Since the degree of every vertex in $G$ is at most
   $\Delta$, $|S_j|\geq r'/\Delta$, and since the algorithm $\algsc$
   returned a cut whose sparsity is at least $\localalpha\cdot
   \alphasc(r')$, it follows that set $S_j$ is $\localalpha$-well-linked in
   $C$, and hence in $G[X'_j]$.
 
   Otherwise, $|E(A,B)|< \localalpha\cdot \alphasc(r') |\Gamma_A|$. Every
   edge in $\out(A)\cap \out(C)$ is then charged
   $|E(A,B)|/|\out(A)\cap \out(C)|\leq
   r'\localalpha\alphasc(r')/|\out(A)\cap \out(C)|$ for the edges of
   $E(A,B)$. Notice that the total charge is at least $|E(A,B)|$. The
   charge to every edge of $\out(A)\cap \out(C)$ can also be bounded
   by $\localalpha\cdot \alphasc(r')$, since $|E(A,B)|<\localalpha\cdot
   \alphasc(r')|\out(A)\cap \out(C)|$.
 
   The algorithm terminates when we either find the desired subset
   $S_j$ of vertices, or when every cluster $C\in \wset_j$ has
   $|\out(C)|<r'$. In the former case, we return the set $S_j$ as the
   output. In the latter case, we build a partition $\cset'$ of $V(G)$
   into acceptable clusters, with $\phi(\cset')\leq\phi(\cset)-1$.
 The collection $\cset'$ of clusters contains all clusters $C\in
 \cset$ with $C\cap X'_j=\emptyset$. Additionally, for each cluster
 $C\in \wset_j$, we add all connected components of $G[C]$ to
 $\cset'$. It is easy to see that $\cset'$ is an acceptable
 clustering. It now only remains to show that $\phi(\cset')\leq
 \phi(\cset)-1$. Observe that:
 
 \[\phi(\cset')\leq\phi(\cset)-|\out_G(X'_j)|-|E_G(X'_j)|+\sum_{C\in \wset_j}|\out_G(C)|\]
 
 We show that $\sum_{C\in
   \wset_j}|\out_G(C)|<\left(1+\frac{1}{2^8h}\right
 )|\out_G(X'_j)|$. Since $|E_G(X'_j)|\geq
 \frac{|\out_G(X'_j)|}{128h}$, we will obtain that
 $\phi(\cset')\leq\phi(\cset)-1$.
 
 In order to bound $\sum_{C\in \wset_j}|\out(C)|$, we use the charging
 scheme defined above. Consider some cluster $C$ that belonged to
 $\wset_j$ at some point of the algorithm execution, and assume that
 we have replaced $C$ with $A$ and $B$, where $|\out(A)|\leq
 |\out(B)|$, charging every edge in $\out(A)\cap \out(C)$ at most
 $r'\localalpha\alphasc(r')/|\out(A)\cap \out(C)|$. Recall that
 $|E(A,B)|\leq \localalpha\alphasc(r')\cdot |\Gamma_A|\leq
 \localalpha\alphasc(r')\cdot |\out(A)\cap \out(C)|<0.1|\out(A)\cap
 \out(C)|$, since $\localalpha=\frac{6\Delta^2r}{r'}$, and
 $r'=2^{20}r\Delta^2 h\alphasc(k)$.  Therefore,
 $|\out(A)|<2|\out(C)|/3$.  The charge to the edges of $\out(A)\cap
 \out(C)$ can be bounded by $\frac{r'\localalpha\alphasc(r')}{|\out(C)\cap
   \out(A)|}\leq \frac{2r'\localalpha\alphasc(r')}{|\out(A)|}$, since
 $|\out(A)|=|\out(C)\cap \out(A)|+|E(A,B)|<2|\out(C)\cap \out(A)|$.
 
 Consider now some edge $e=(u,v)$. We bound the charge to the edge $e$
 via the vertex $u$. Let $C_1,C_2,\ldots,C_z$ be the clusters that
 belonged to $\wset_j$ over the course of the algorithm, such that for
 each $1\leq i\leq z$, $u\in C_i$, $v\not\in C_i$, and $e$ was charged
 via $u$ when cluster $C_i$ was created. Then for each $2\leq i\leq
 z$, $|\out(C_i)|<2|\out(C_{i-1})|/3$ must hold, and edge $e$ was
 charged at most $\frac{2r'\localalpha\alphasc(r')}{|\out(C_{i})|}$ for the
 creation of cluster $C_i$. Moreover, we are guaranteed that
 $|\out(C_{z-1})|> r'$.  Therefore, the total charge to $e$ via $u$
 for creating clusters $C_1,\ldots,C_{z-1}$ is bounded by:

 \[\frac{2r'\localalpha\alphasc(r')}{|\out(C_1)|}+\frac{2r'\localalpha\alphasc(r')}{|\out(C_2)|}+\cdots
 +\frac{2r'\localalpha\alphasc(r')}{|\out(C_{z-1})|}\leq
 \frac{2r'\localalpha\alphasc(r')}{r'}\left (1+\frac{2}3+\left (\frac 2
     3\right )^2+\cdots \right )\leq 6\localalpha\alphasc(r').\]

 In the last iteration, $e$ is charged at most $\localalpha\alphasc(r')$
 for creating cluster $C_z$. Therefore, the total direct charge to $e$
 via $u$ is bounded by $7\localalpha\alphasc(r') <\frac{1}{2^{11}h}$, since
 $\localalpha=\frac{6\Delta^2r}{r'}$ and $r'=2^{20}r\Delta^2 h\alphasc(k)$.
 The total direct charge to edge $e$, via both $u$ and $v$, is then
 bounded by $\frac{1}{2^{10}h}$, and the total charge to any edge
 $e\in \out(X'_j)$, including the direct and the indirect charge (that
 happens when $e$ is charged for some edge $e'$, which is in turn
 charged for some other edges), is at most $\frac{1}{2^9h}$, since the
 indirect charge forms a geometrically decreasing sequence. We
 conclude that $\sum_{C\in
   \wset_j}|\out(C)|<\left(1+\frac{1}{2^8h}\right )|\out_G(X'_j)|$,
 and $\phi(\cset')\leq \phi(\cset)-1$, as required.
 \end{proof}

\fi

\label{----------------------------------------sec: apps------------------------}
\section{Applications}
\label{sec:apps}

We now describe two applications of Theorem~\ref{thm:main1-intro}.
Consider an integer-valued {\em parameter} $P$ that associates a
number $P(G)$ with each graph $G$. For instance, $P(G)$ could be the
size of the smallest vertex cover of $G$, or it could be the maximum
number of vertex-disjoint cycles in $G$. We say that $P$ is
\emph{minor-closed} if $P(G) \ge P(H)$ for any minor $H$ of $G$, that
is, the value does not increase when deleting edges or contracting
edges. A number of interesting parameters are minor-closed. Following
\cite{DemaineH-survey}, we say that $P$ has the
\emph{parameter-treewidth bound}, if there is some function
$f:\mathbb{Z}_+ \rightarrow \mathbb{Z}_+$ such that $P(G) \le k$
implies that $\tw(G) \le f(k)$. In other words, if the treewidth of
$G$ is large, then $P(G)$ must also be large. A minor-closed property
$P$ has the parameter-treewidth bound iff it has the bound on the
family of grids. This is because an $r \times r$ grid has treewidth
$r$, and the \GM Theorem shows that sufficiently large treewidth
implies the existence of a large grid minor.  This approach also has
the advantage that grids are simple and concrete graphs to reason
about. However, this approach for proving parameter treewidth bounds
suffers from the (current) qunatitative weakness in the \GM theorem.
For a given parameter $P$, one can of course 
focus on methods that are tailored to it. Alternatively,
good results can be obtained in special classes of graphs such as
planar graphs, and graphs that exclude a fixed graph as a minor, due
to the linear relationship between the treewidth and the grid-minor
size in such graphs.  Theorem~\ref{thm:main1-intro} allows for a
generic method to change the dependence $f(k)$ from exponential to
polynomial, under some mild restrictions. The following subsections
describe these applications.

\subsection{FPT Algorithms in General Graphs}
\label{subsec:fpt}

Let $P$ be any minor-closed graph parameter, and consider the decision
problem associated with $P$: Given a graph $G$ and an integer $k$, is
$P(G) \le k$? We say that parameter $P$ is \emph{fixed-parameter
  tractable}, iff there is an algorithm for this decision problem,
whose running time is $h(k) \cdot n^{O(1)}$ where $n$ is the size of
$G$ and $h$ is a function that depends only on $k$. There is a vast
literature on \FPT (FPT), and we refer the reader to
\cite{DowneyF07,Niedermeier2006,FlumG10,Bodlaenderetal2012}.

Observe that for any minor-closed parameter $P$, and any fixed integer
$k$, the family $\fset=\set{G\mid P(G)\leq k}$ of graphs is a
minor-closed family. That is, if $G\in \fset$, and $G'$ is a minor of
$G$, then $G'\in \fset$. Therefore, from the work of Robertson and
Seymour on graph minors and the proof of Wagner's conjecture, there is
a finite family $H_\fset$ of graphs, such that $\fset$ is precisely
the set of all graphs that do not contain any graph from $H_\fset$ as
a minor.
In particular, in order to test whether $P(G)\leq k$, we only need to
check whether $G$ contains a graph from $H_\fset$ as a minor, and this
can be done in time $O(n^3)$ (where we assume that $k$ is a constant),
using the work of Robertson and Seymour. However, even though the
family $H_\fset$ of graphs is known to exist, no explicit algorithms
for constructing it are known. The family $H_{\fset}$ of course
depends on $P$, and moreover, even for a fixed property $P$, it varies
with the parameter $k$. Therefore, the theory only guarantees the
existence of a non-uniform FPT algorithm for every minor-closed
parameter $P$. For this reason, it is natural to consider various
restricted classes of minor-closed parameters. Motivated by the
existence of sub-exponential time algorithms on planar and
H-minor-free graphs, a substantial line of work has focused on {\em
  bidimensional} parameters --- see Demaine et
al. \cite{Demaineetal-bidimensional}, and the survey in
\cite{DemaineH-survey}.  Demaine and Hajiaghayi \cite{DemaineH07}
proved the following generic theorem on \FPT of minor-closed
bidimensional properties that satisfy some mild additional conditions.

\begin{theorem}[\cite{DemaineH07}]
  \label{thm:demaineh-fpt}
  Consider a minor-closed parameter $P$ that is positive on some $g
  \times g$ grid, is at least the sum over the connected components of
  a disconnected graph, and can be computed in $h(w) n^{O(1)}$ time
  given a width-$w$ tree decomposition of the graph. Then there is an
  algorithm that decides whether $P$ is at most $k$ on a graph with
  $n$ vertices in $[2^{2^{O(g\sqrt{k})^5}} +
  h(2^{O(g\sqrt{k})^5})]n^{O(1)}$ time.
\end{theorem}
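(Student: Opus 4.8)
The plan is to combine the \GM Theorem (Theorem~\ref{thm:RST-grid}) with the observation that a sufficiently large grid already forces the value of $P$ to exceed $k$, and then to fall back on the assumed tree-decomposition dynamic program once the treewidth is known to be small. Fix $t=\lceil\sqrt{k+1}\,\rceil$, so that $t=\Theta(\sqrt k)$, and let $\Gamma$ be the $(gt)\times(gt)$ grid. First I would show that $P(\Gamma)>k$: tile $\Gamma$ into $t^2$ vertex-disjoint blocks, each block inducing a copy of the $g\times g$ grid, and let $H$ be the spanning subgraph of $\Gamma$ obtained by keeping only the edges internal to a block. Then $H$ is a subgraph, hence a minor, of $\Gamma$; its connected components are exactly $t^2$ copies of the $g\times g$ grid; and therefore, using that $P$ is minor-closed and at least the sum over connected components, $P(\Gamma)\ge P(H)\ge t^2\cdot P(g\times g\text{ grid})\ge t^2>k$, where the penultimate inequality uses that $P$ is a positive integer on the $g\times g$ grid (so it is at least $1$) and $t^2\ge k+1$.

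Next, let $\tau=20^{2(gt)^5}=2^{O((g\sqrt k)^5)}$ be the treewidth threshold guaranteed by Theorem~\ref{thm:RST-grid} for a $(gt)\times(gt)$ grid minor. The algorithm runs a tree-decomposition routine (for instance Bodlaender's $2^{O(w^3)}n$-time algorithm) with target width $\tau$. If it reports $\tw(G)>\tau$, then Theorem~\ref{thm:RST-grid} tells us that $G$ contains $\Gamma$ as a minor, so $P(G)\ge P(\Gamma)>k$ by minor-closedness, and the algorithm answers ``$P(G)>k$''. Otherwise it returns a tree decomposition of $G$ of width $w\le\tau$ (in fact $O(\tau)$ would already suffice, so a constant-factor or even logarithmic approximation to treewidth could be substituted here), and the algorithm invokes the assumed dynamic program to compute $P(G)$ exactly in $h(w)\,n^{O(1)}=h(2^{O((g\sqrt k)^5)})\,n^{O(1)}$ time and compares the result with $k$.

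For the running-time accounting, the tree-decomposition step costs $2^{O(\tau^3)}n=2^{2^{O((g\sqrt k)^5)}}n$ (since $\tau$ is itself singly exponential in $(g\sqrt k)^5$), and the dynamic-programming step costs $h(2^{O((g\sqrt k)^5)})\,n^{O(1)}$; summing the two yields the claimed bound $\bigl[2^{2^{O(g\sqrt k)^5}}+h(2^{O(g\sqrt k)^5})\bigr]n^{O(1)}$.

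Finally, the point I expect to matter most: the two implications ``large grid $\Rightarrow$ large $P$'' and ``small treewidth $\Rightarrow$ run the DP'' are routine, and the entire tower of exponentials comes from the treewidth side of the dichotomy. The \GM Theorem only yields a $g'\times g'$ grid minor once the treewidth surpasses $20^{2g'^5}$, so the width at which we are forced into the (exponential-time) dynamic program is already exponential in $(g\sqrt k)^5$, and exactly computing a tree decomposition of that width costs a further exponential. This is precisely the bottleneck that Theorems~\ref{thm:main1-intro} and~\ref{thm:main2-intro} are designed to circumvent, by decomposing $G$ directly into many subgraphs of moderate treewidth rather than extracting a single huge grid minor; substituting such a decomposition for the \GM step shrinks the relevant treewidth from exponential to $k\cdot\poly\log k$ and hence the running time to $2^{k\cdot\poly\log k}n^{O(1)}$ for a fixed parameter.
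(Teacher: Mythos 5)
Your proposal is correct, and it is essentially the expected argument: the paper does not reprove this statement (it is quoted from \cite{DemaineH07}), and your dichotomy --- tile a $(g\lceil\sqrt{k+1}\rceil)\times(g\lceil\sqrt{k+1}\rceil)$ grid into $t^2>k$ disjoint $g\times g$ grids to force $P>k$ whenever the \GM Theorem applies, otherwise run the width-$w$ dynamic program --- is exactly the proof in that source, and is also the same template the paper's own Theorem~\ref{thm:fpt} follows with Theorem~\ref{thm:main1-intro} substituted for the Grid-Minor step. Your closing observation that the doubly-exponential cost comes entirely from the $2^{O(g'^5)}$ treewidth threshold of Theorem~\ref{thm:RST-grid} is precisely the bottleneck the paper's decomposition theorems are designed to remove.
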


The main advantage of the above theorem is its generality.
However, its proof uses the \GM Theorem, and hence the running
time of the algorithm is doubly exponential in the parameter $k$.
Demaine and Hajiaghayi also observed, in the following theorem,
that the running time can be reduced to singly-exponential in 
$k$ if the \GM Theorem can be improved substantially.

\begin{theorem}[\cite{DemaineH07}]
  Assume that every graph of treewidth greater than $\Theta(g^2 \log
  g)$ has a $g \times g$ grid as a minor. Then for every minor-closed
  parameter $P$ satisfying the conditions of Theorem
  \ref{thm:demaineh-fpt}, there is an algorithm that decides whether
  $P(G)\leq k$  on any $n$-vertex graph $G$ in $[2^{O(g^2k\log
    (gk))} +h(O(g^2k \log (gk)))]n^{O(1)}$ time.
\end{theorem}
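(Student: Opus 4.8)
The plan is to prove the theorem by the standard bidimensionality route, simply substituting the \emph{hypothesized} grid-minor bound for the \GM Theorem of Robertson and Seymour. The argument splits into two halves: a structural claim that large treewidth forces $P(G)>k$, and an algorithm that either certifies $P(G)>k$ directly (when the treewidth is large), or computes $P(G)$ exactly by dynamic programming on a bounded-width tree decomposition.

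First I would fix the threshold. Set $N=g\lceil\sqrt{k+1}\rceil$, which is $O(g\sqrt{k})$, and let $w=\Theta(N^2\log N)=\Theta(g^2k\log(gk))$ be the bound from the hypothesis applied with parameter $N$. The claim is: if $\tw(G)>w$ then $P(G)>k$. Indeed, by assumption $G$ then contains the $N\times N$ grid as a minor; tiling this grid into $g\times g$ blocks exhibits $\lfloor N/g\rfloor^2\ge k+1$ vertex-disjoint copies of the $g\times g$ grid, so the disjoint union of $k+1$ copies of the $g\times g$ grid is a minor of $G$. Since $P$ is integer-valued and positive on the $g\times g$ grid, its value there is at least $1$; since $P$ is at least the sum over connected components, its value on the disjoint union is at least $k+1$; and since $P$ is minor-closed, $P(G)\ge k+1>k$. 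Contrapositively, $P(G)\le k$ implies $\tw(G)\le w$.

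Next I would assemble the algorithm. Run a treewidth subroutine that, on input $G$ and the threshold $w$, either reports $\tw(G)>w$ or outputs a tree decomposition of width $O(w)$ — for instance a constant-factor (single-exponential-time) treewidth approximation. If it reports $\tw(G)>w$, then by the structural claim (the approximation factor being absorbed into the $\Theta$ defining $w$) we conclude $P(G)>k$ and answer ``no''. Otherwise we hold a width-$O(w)$ tree decomposition, and we invoke the dynamic-programming algorithm guaranteed by the hypotheses of Theorem~\ref{thm:demaineh-fpt} to compute $P(G)$ exactly in $h(O(w))\cdot n^{O(1)}$ time, and compare the result with $k$. Correctness is immediate from the structural claim together with exactness of the DP. For the running time, the treewidth step costs $2^{O(w)}\cdot n^{O(1)}=2^{O(g^2k\log(gk))}\cdot n^{O(1)}$ and the DP step costs $h(O(w))\cdot n^{O(1)}=h(O(g^2k\log(gk)))\cdot n^{O(1)}$, which together give the stated bound $[2^{O(g^2k\log(gk))}+h(O(g^2k\log(gk)))]\cdot n^{O(1)}$.

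The combinatorial core — large treewidth $\Rightarrow$ large grid minor $\Rightarrow$ many disjoint small grids $\Rightarrow$ large $P$ — is routine given the three conditions in the statement, so I do not expect a serious obstacle there. The one point that needs care is the choice of treewidth subroutine: it must run in time single-exponential in its width parameter (not, say, exponential in a polynomial of it) so that the first term in the running time is genuinely $2^{O(g^2k\log(gk))}$; and if only an approximate decomposition of width $O(w)$ is available, one should check that the extra constant (or polylogarithmic) blow-up in the width is harmless, which it is since it is swallowed by the $O(\cdot)$ appearing inside $h$ in the stated bound.
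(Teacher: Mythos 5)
Your proposal is correct, but note that the paper itself does not prove this statement at all: it is quoted verbatim from \cite{DemaineH07}, and what you have written is essentially a reconstruction of the standard bidimensionality argument behind that citation (apply the hypothesized grid-minor bound with parameter $N=O(g\sqrt{k})$, tile the $N\times N$ grid into $k+1$ disjoint $g\times g$ grids, combine ``positive on the grid'', additivity over components, and minor-closedness to get $P(G)>k$, then fall back on a single-exponential treewidth approximation plus the width-parameterized dynamic program). Your bookkeeping is right: $N^2\log N=O(g^2k\log(gk))$, the approximation factor of the treewidth routine is absorbed into the constants, and correctness does not depend on which branch the routine takes, since the DP is exact whenever a decomposition is returned. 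The only proof in the paper that plays the analogous role is that of Theorem~\ref{thm:fpt}, which follows the same skeleton but replaces your grid-tiling step by the decomposition of Theorem~\ref{thm:main1-intro} (partition $G$ directly into $k$ disjoint subgraphs of treewidth at least $p$); that substitution is exactly what lets the paper drop the unproven grid-minor hypothesis and trade the $2^{O(g^2k\log(gk))}$-type threshold for $\tilde O(p^2k)$, at the cost of parameterizing by the treewidth threshold $p$ rather than the grid size $g$.
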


We show below that, via Theorem~\ref{thm:main1-intro}, we can 
bypass the need to improve the \GM Theorem.

\begin{theorem}
  \label{thm:fpt}
  Consider a minor-closed parameter $P$ that is positive on all graphs
  with treewidth $\ge p$, is at least the sum over the connected
  components of a disconnected graph, and can be computed in $h(w)
  n^{O(1)}$ time given a width-$w$ tree decomposition of the
  graph. Then there is an algorithm that decides whether $P$ is at
  most $k$ on a graph with $n$ vertices in $[2^{\tilde{O}(p^2 k)} +
  h(\tilde{O}(p^2 k))] n^{O(1)}$ time.
\end{theorem}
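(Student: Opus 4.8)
The plan is to prove Theorem~\ref{thm:fpt} by the standard win/win dichotomy of bidimensionality, using Theorem~\ref{thm:main1-intro} in place of the \GM Theorem on which Theorem~\ref{thm:demaineh-fpt} relies. Let $\theta$ be the smallest value of $k'$ for which the hypothesis of Theorem~\ref{thm:main1-intro} is satisfied with $h=k+1$ and $r=p$, i.e.\ $(k+1)p^2\le k'/\polylog(k')$; since $\polylog$ grows sub-polynomially, this fixed point obeys $\theta=\Theta\!\left((k+1)p^2\polylog((k+1)p)\right)=\tilde O(p^2k)$, and because $x/\polylog(x)$ is eventually increasing, the hypothesis holds for every $G$ with $\tw(G)\ge\theta$. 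The algorithm begins by invoking a constant-factor treewidth approximation run with width budget $\theta$; in time $2^{O(\theta)}n^{O(1)}=2^{\tilde O(p^2k)}n^{O(1)}$ this either reports $\tw(G)>\theta$, or outputs a tree decomposition of $G$ of width $w=O(\theta)=\tilde O(p^2k)$. (A polynomial-time $O(\log\tw)$-approximation, such as Amir's, would also do, losing an extra polylog factor in $w$.)

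If the routine reports $\tw(G)>\theta$, apply Theorem~\ref{thm:main1-intro} with $h=k+1$, $r=p$ to obtain in polynomial time node-disjoint subgraphs $G_1,\dots,G_{k+1}$ of $G$ with $\tw(G_i)\ge p$; replacing each $G_i$ by its connected component of largest treewidth, we may assume each $G_i$ is connected and still has $\tw(G_i)\ge p$. Since $P$ is positive on every graph of treewidth $\ge p$, $P(G_i)\ge1$ for all $i$. The disjoint union $\bigsqcup_{i=1}^{k+1}G_i$ is a subgraph, hence a minor, of $G$, and its connected components are exactly $G_1,\dots,G_{k+1}$; hence $P(G)\ge P\!\left(\bigsqcup_iG_i\right)\ge\sum_{i=1}^{k+1}P(G_i)\ge k+1>k$, using minor-closedness and then super-additivity of $P$ over components, so the algorithm reports that $P(G)>k$. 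Otherwise we hold a tree decomposition of width $w=\tilde O(p^2k)$, so we run the assumed dynamic program in time $h(w)n^{O(1)}=h(\tilde O(p^2k))n^{O(1)}$ and return its exact answer. The overall running time is $\bigl[2^{\tilde O(p^2k)}+h(\tilde O(p^2k))\bigr]n^{O(1)}$, as claimed.

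What remains is routine: checking that the fixed point $\theta$ is $\tilde O(p^2k)$ is a one-line estimate; the treewidth step is run with an explicit budget precisely so it terminates within $2^{\tilde O(p^2k)}n^{O(1)}$ even when $\tw(G)$ is huge; and the structural deduction uses only the two stated properties of $P$ (minor-closedness, to descend to $\bigsqcup_iG_i$, and super-additivity over components, to split it). The substantive content lies entirely outside this reduction. The one point that genuinely matters is that Theorem~\ref{thm:main1-intro} has a \emph{linear} dependence on $h$ in its hypothesis $hr^2\le\tw(G)/\polylog(\tw(G))$: this is exactly what makes the threshold $\theta$, and hence the exponent of the running time, linear in $k$; using Theorem~\ref{thm:main2-intro}, with its $h^3r$ bound, would instead give a $\tilde O(p\,k^3)$-type exponent. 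Thus the \emph{hard part} is the decomposition theorem, already proved; the reduction above is just bookkeeping.
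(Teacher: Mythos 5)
Your proposal is correct and follows essentially the same win/win argument as the paper: apply a $2^{O(w)}n^{O(1)}$-time constant-factor treewidth test with budget $\tilde{O}(p^2k)$, and in the high-treewidth case invoke Theorem~\ref{thm:main1-intro} to extract many disjoint subgraphs of treewidth at least $p$, each contributing $1$ to $P$ via minor-closedness and super-additivity over components, otherwise run the width-parameterized algorithm on the decomposition. Your use of $h=k+1$ parts (and making each part connected) is a minor tightening of the paper's argument, which uses $k$ parts and concludes $P(G)\ge k$, but the approach is the same.
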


\begin{proof}
  Let $k' = \tilde{\Theta}(p^2 k)$. If the given graph $G$ has
  treewidth greater than $k'$, then by Theorem~\ref{thm:main1-intro}
  it can be partitioned into $k$ node-disjoint subgraphs
  $G_1,\ldots,G_k$ where $\tw(G_i) \ge p$ for each $i$. Let $G'$ be
  obtained by the union of these disconnected graphs (equivalently we
  remove the edges that do not participate in the graphs $G_i$ from
  $G$).  From the assumptions on $P$, $P(G_i) \ge 1$ for each $i$, and
  $P(G') \ge \sum_i P(G_i) \ge k$. Moreover, since $P$ is
  minor-closed, $P(G) \ge P(G')$. Therefore, if $\tw(G) \ge k' =
  \tilde{\Omega}(p^2 k)$ then $P(G)\geq k$ must hold.
  
  We use known algorithms, for instance \cite{Amir10}, that, given a
  graph $G$, either produce a tree decomposition of width at most $4w$
  or certify that $\tw(G) > w$ in $2^{O(w)} n^{O(1)}$ time. Using such
  an algorithm we can detect in $2^{O(k')}n^{O(1)}$ time whether $G$
  has treewidth at least $k'$, or find a tree decomposition of width
  at most $4k'$.

  If $\tw(G) \ge k'$, then, as we have argued above, $P(G)\geq k$. We
  then terminate the algorithm with a positive answer. Otherwise,
  $\tw(G) < 4k'$ and we can use the promised algorithm that runs in
  time $h(4k') \cdot n^{O(1)}$ to decide whether $P(G) < k$ or not.
  The overall running time of the algorithm is easily seen to be the
  claimed bound.
\end{proof}

\begin{remark}
  In the proof of Theorem~\ref{thm:fpt} the assumption on $P$ being
  minor-closed is used only in arguing that $P(G') \ge P(G)$. Thus, it
  suffices to assume that the parameter $P$ does not increase under
  edge deletions (in addition to the assumption on $P$ over 
  disconnected components of a graph).
\end{remark}

Note that the running time is singly-exponential in $p$ and $k$.  How
does one prove an upper bound on $p$, the minimum treewidth guaranteed
to ensure that the parameter value is positive? For some problems it
may be easy to directly obtain a good bound on $p$. The following
corollary shows that one can always use grid minors to obtain a bound
on $p$.  The run-time dependence on the grid size $g$ is doubly
exponential since we are using the \GM Theorem, but it is only
singly-exponential in the parameter $k$. Thus, if $g$ is considered to
be a fixed constant, we obtain singly-exponential \FPT algorithms in
general graphs for all the problems that satisfy the conditions in
Theorem~\ref{thm:demaineh-fpt}.

\begin{corollary}
  \label{cor:fpt}
  Consider a minor-closed parameter $P$ that is positive on some $g
  \times g$ grid, is at least the sum over the connected components of
  a disconnected graph, and can be computed in $h(w) n^{O(1)}$ time
  given a width-$w$ tree decomposition of the graph. Then there is an
  algorithm that decides whether $P(G)\leq k$ on a graph $G$ with $n$
  vertices in $[2^{\tilde{O}(k\cdot 2^{O(g^{5})})} +
  h(\tilde{O}(2^{O(g^{5})} k))] n^{O(1)}$ time.
\end{corollary}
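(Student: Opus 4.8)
The plan is to derive the corollary directly from Theorem~\ref{thm:fpt} by using the \GM Theorem to supply an explicit admissible value of the parameter $p$. Concretely, I would set $p = 20^{2g^5}$ and first check that $P$ is positive on every graph of treewidth at least $p$: by Theorem~\ref{thm:RST-grid}, any graph $G$ with $\tw(G)\ge p$ contains a $g\times g$ grid $M$ as a minor, and since $P$ is minor-closed and positive on the $g\times g$ grid, $P(G)\ge P(M)>0$. Hence $P$ satisfies the hypotheses of Theorem~\ref{thm:fpt} with this choice of $p$; the two remaining hypotheses — superadditivity over connected components and computability in $h(w)n^{O(1)}$ time from a width-$w$ tree decomposition — are assumed verbatim in the corollary.

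Having verified the hypotheses, I would simply invoke Theorem~\ref{thm:fpt}, which yields an algorithm deciding whether $P(G)\le k$ in time $[2^{\tilde{O}(p^2 k)} + h(\tilde{O}(p^2 k))]n^{O(1)}$. It then remains to rewrite this bound in terms of $g$: since $p = 20^{2g^5} = 2^{O(g^5)}$, we also have $p^2 = 2^{O(g^5)}$, so $\tilde{O}(p^2 k) = \tilde{O}(2^{O(g^5)} k)$ and $2^{\tilde{O}(p^2 k)} = 2^{\tilde{O}(k\cdot 2^{O(g^5)})}$, which matches the claimed running time.

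There is no real obstacle here — the argument is a direct substitution — so the only points to be careful about are that the \GM Theorem is applied with the correct threshold on the treewidth and that the nested-exponential arithmetic is carried out consistently (in particular that squaring $2^{O(g^5)}$ leaves a bound of the same form). One could moreover shave the exponent by plugging in the sharper grid-minor bounds of Kawarabayashi and Kobayashi~\cite{KawarabayashiK-grid} or Seymour~\cite{Seymour-grid} in place of Theorem~\ref{thm:RST-grid}, and one could weaken ``minor-closed'' to ``non-increasing under edge deletions and edge contractions'' as in the remark following Theorem~\ref{thm:fpt}; neither refinement is needed for the statement as given.
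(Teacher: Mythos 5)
Your proposal is correct and matches the paper's own proof: both take $p = 20^{2g^5}$ from the Grid-Minor Theorem (Theorem~\ref{thm:RST-grid}), note that minor-closedness makes $P$ positive on any graph of treewidth at least $p$, and plug this $p$ into Theorem~\ref{thm:fpt}, with $p^2 = 2^{O(g^5)}$ giving the stated running time.
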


\begin{proof}
  If $P$ is minor-closed and positive on some $g \times g$ grid then
  it is positive on a graph $G$ with treewidth $p > 20^{2g^5}$
  via Theorem~\ref{thm:RST-grid}. Plugging this value of $p$ into
  the bound from Theorem~\ref{thm:fpt} gives the desired result.
\end{proof}

\begin{remark}
The results in \cite{ReedW-grid,KreutzerT10} can be used
to obtain a singly exponential dependence on $g$, 
provided $P$ can be shown to be positive on a graph that
contains a {\em grid-like} minor of size $g$.
\end{remark}

\subsection{Bounds for  \Erdos-\Posa theorems}
\label{subsec:erdosposa}

Let $\mF$ be any family of graphs. Following the notation in
\cite{Reed-chapter}, we say that the $\mF$-packing number of $G$,
denoted by $p_\mF(G)$, is the maximum number of node-disjoint
subgraphs of $G$, each of which is isomorphic to a member of $\mF$. An
$\mF$-cover is a set $X$ of vertices, such that $p_\mF(G-X) = 0$;
that is, removing $X$ ensures that there is no subgraph isomorphic to
a member of $\mF$ in $G$. The $\mF$-covering number of $G$, denoted by
$c_\mF(G)$ is the minimum cardinality of an $\mF$-cover for $G$.  It
is clear that $p_\mF(G) \le c_\mF(G)$ always holds. A family $\mF$ is said to
satisfy the \EP property if there is function $f: \mathbb{Z}_+
\rightarrow \mathbb{Z}_+$ such that $c_\mF(G) \le f(p_\mF(G))$ for all graphs $G$. 
\iffull
In other words, for every integer $k$, either $G$ has $k$ disjoint copies
of graphs from $\mF$, or there is a set of $f(k)$ nodes, whose removal
from $G$ ensures that no subgraph isomorphic to a graph from $\mF$ remains in $G$. 
\fi
\Erdos and \Posa
\cite{ErdosP65} showed such a property when $\mF$ is the family of
cycles, with $f(k) = \Theta(k \log k)$.

There is an important connection between treewidth and \EP property
as captured by the following two lemmas.
\iffull The proofs closely follow the arguments in~\cite{Thomassen88,FominST11}
 and appear in the Appendix.
\fi \ifabstract The proofs appear in the full version.\fi

\begin{lemma}
  \label{lem:thomassen}
  Let $\mF$ be any family of connected graphs, and let $h_{\fset}$ be
  an integer-valued function, such that the following holds.  For any
  integer $k$, and any graph $G$ with $\tw(G)\geq h_{\fset}(k)$, $G$
  contains $k$ disjoint subgraphs $G_1,\ldots,G_k$, each of which is
  isomorphic to a member of $\fset$. Then $\fset$ has the \EP property
  with $f_\mF(k) \le k\cdot h_\mF(k)$.
\end{lemma}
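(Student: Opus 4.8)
The plan is to run the classical induction of Thomassen~\cite{Thomassen88} on the packing number $k:=p_{\mF}(G)$. The base case $k=0$ is immediate: then $G$ has no subgraph isomorphic to a member of $\mF$, so $c_{\mF}(G)=0$. For the inductive step, assume $k\geq 1$ and (harmlessly) that $h_{\mF}$ is nondecreasing. Since $G$ has no $k+1$ vertex-disjoint copies of members of $\mF$, the contrapositive of the hypothesis gives $\tw(G)<h_{\mF}(k+1)$, so I would fix a rooted tree decomposition $(T,\{X_t\}_{t\in V(T)})$ of $G$ of width $<h_{\mF}(k+1)$; every bag then has at most $h_{\mF}(k+1)$ vertices. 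Writing $V_t$ for the union of the bags in the subtree rooted at $t$, I would pick $t^{*}$ as deep as possible subject to $G[V_{t^{*}}]$ containing a subgraph isomorphic to a member of $\mF$ (such $t^{*}$ exists since $G=G[V_{\mathrm{root}}]$ does), and set $X:=X_{t^{*}}$.

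The heart of the argument is the claim that $p_{\mF}(G-X)\leq k-1$. First, every copy $H$ of a member of $\mF$ inside $G[V_{t^{*}}]$ must meet $X$: otherwise, using that $H$ is connected and that for distinct children $c,c'$ of $t^{*}$ the sets $V_c\setminus X$ and $V_{c'}\setminus X$ have no edge between them in $G[V_{t^{*}}]$, all of $V(H)$ lies in a single $V_c$, so $G[V_c]$ contains a copy with $c$ strictly deeper than $t^{*}$, a contradiction. Consequently, by the tree-decomposition axioms, $G-X$ is the disjoint union of $G[V_{t^{*}}\setminus X]$ --- which by the previous sentence contains no copy at all --- and $G[V(G)\setminus V_{t^{*}}]$; hence $p_{\mF}(G-X)=p_{\mF}(G[V(G)\setminus V_{t^{*}}])$. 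If this were still $k$, a maximum packing of $G-X$ would be $k$ copies all disjoint from $V_{t^{*}}$, and appending any copy inside $G[V_{t^{*}}]$ would give $k+1$ disjoint copies in $G$, contradicting $p_{\mF}(G)=k$. So $p_{\mF}(G-X)\leq k-1$.

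To finish, since any $\mF$-cover of $G-X$ together with $X$ is an $\mF$-cover of $G$, the induction hypothesis gives $c_{\mF}(G)\leq |X|+c_{\mF}(G-X)\leq h_{\mF}(k+1)+(k-1)h_{\mF}(k-1)\leq k\cdot h_{\mF}(k+1)$ by monotonicity, which is the \EP property with $f_{\mF}(k)=k\cdot h_{\mF}(k)$ up to the benign shift in the argument of $h_{\mF}$ (replacing the crude estimate ``$G$ has no $k+1$ copies'' by the exact value $p_{\mF}(G)=k$, and recursing inside $G[V(G)\setminus V_{t^{*}}]$ rather than $G-X$, tightens the bookkeeping to $k\cdot h_{\mF}(k)$).

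I expect the main obstacle to be the separation claim: one must argue carefully, from the tree-decomposition axioms together with the hypothesis that members of $\mF$ are \emph{connected}, that deleting a single deepest bag simultaneously destroys all copies living below it and strictly decreases the global packing number. Connectedness is what forces each copy to occupy a connected subtree of $T$ and therefore prevents a copy from straddling two child-branches of $t^{*}$; without it the claim fails. The remaining steps --- bounding $\tw(G)$ via the contrapositive of the hypothesis, and the arithmetic of summing bag sizes along the recursion --- are routine.
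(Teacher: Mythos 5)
Your proof is correct in substance and follows essentially the same route as the paper: a Thomassen-style induction in which one picks a deepest (in the paper, a minimal) bag whose subtree-graph contains a copy of a member of $\mF$, uses connectedness of the members of $\mF$ together with the separation properties of tree decompositions to show that every copy living below that bag must hit it, deletes the bag, and recurses. The only divergence is bookkeeping. As you set it up, the induction on $k=p_\mF(G)$ closes with the bound $c_\mF(G)\le k\cdot h_\mF(k+1)$; your displayed step $h_\mF(k+1)+(k-1)h_\mF(k-1)\le k\cdot h_\mF(k)$ need not hold, and the repair you sketch in the final parenthetical is not available, since knowing $p_\mF(G)=k$ exactly still only yields $\tw(G)<h_\mF(k+1)$, not $\tw(G)<h_\mF(k)$. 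This is harmless for the lemma as the paper uses it: under the reading ``for every $k$, either $G$ has $k$ disjoint copies or a cover of $f(k)$ vertices,'' your bound gives a cover of size at most $(k-1)\,h_\mF(k)\le k\,h_\mF(k)$, and the paper's own proof incurs the same shift under the literal $c_\mF(G)\le f(p_\mF(G))$ reading. If you want the clean $k\cdot h_\mF(k)$ directly, do what the paper does: fix the width bound $w$ once (it is inherited by all subgraphs encountered), prove by induction on $k$ that either $p_\mF(G)\ge k$ or $c_\mF(G)\le k(w+1)$, and only at the end substitute $w+1\le h_\mF(k)$.
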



\begin{lemma}
  \label{lem:ep-improved}
  Let $\mF$ be any family of connected graphs, and let $h_{\fset}$ be
  an integer-valued function, such that the following holds.  For any
  integer $k$, and any graph $G$ with $\tw(G)\geq h_{\fset}(k)$, $G$
  contains $k$ disjoint subgraphs $G_1,\ldots,G_k$, each of which is
  isomorphic to a member of $\fset$. Moreover, suppose that
  $h_\fset(\cdot)$ is superadditive\footnote{We say that an integer-valued function $h$ is superadditive if for all $x,y\in\mathbb{Z}^+$, $h(x)+h(y)\leq h(x+y)$.} and satisfies the property that
  $h_\fset(k+1) \le \alpha h_\fset(k)$ for all $k \ge 1$ where $\alpha$ is
  some universal constant. Then $\fset$ has the \EP property with
  $f_\mF(k) \le \beta h_\mF(k) \log (k+1)$ where $\beta$ is a universal
  constant.
\end{lemma}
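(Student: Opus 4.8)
The plan is to prove, by strong induction on $k$, that $c_\mF(G)\le \beta\,h_\mF(k)\log(k+1)$ for every graph $G$ with $p_\mF(G)=k$ (following the divide-and-conquer analysis of \cite{FominST11}); taking $f_\mF(k):=c_\mF(k):=\sup\{c_\mF(G'):p_\mF(G')\le k\}$, which is finite by Lemma~\ref{lem:thomassen}, then yields the \EP property. The base case $k=0$ is trivial. For $k\ge 1$, fix $G$ with $p_\mF(G)=k$: since $\tw(G)\ge h_\mF(k+1)$ would force $k+1$ disjoint $\mF$-subgraphs, we have $w:=\tw(G)\le h_\mF(k+1)-1$, and we fix a tree decomposition $(T,\{X_t\})$ of $G$ of width $w$.

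The crux is to select a bag $X_{v^*}$ that is balanced with respect to $p_\mF$. For an edge $e=uv$ of $T$, deleting $e$ partitions $V(G)$ as $A_u\cup A_v$ with $A_u\cap A_v=Y_e:=X_u\cap X_v$ and with $A_u\setminus Y_e$, $A_v\setminus Y_e$ mutually non-adjacent; since every member of $\mF$ is connected, $p_\mF$ is additive over these two parts, so $p_\mF(G[A_u\setminus Y_e])+p_\mF(G[A_v\setminus Y_e])=p_\mF(G-Y_e)\le k$, and at most one side can have $p_\mF>k/2$. Orient $e$ toward such a heavy side if it exists (arbitrarily otherwise). A walk following the orientation from any node of $T$ cannot revisit a node, hence terminates at a node $v^*$ all of whose incident edges point inward; then for each neighbor $u$ of $v^*$, with $C_u$ the component of $T-v^*$ containing $u$ and $Y_u:=\big(\bigcup_{t\in C_u}X_t\big)\setminus X_{v^*}$, we have $Y_u\subseteq A_u\setminus Y_{uv^*}$ and therefore $p_\mF(G[Y_u])\le k/2$. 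By the standard structure of tree decompositions the branches $Y_u$ are pairwise disjoint, pairwise non-adjacent, and partition $V(G)\setminus X_{v^*}$.

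Now every $\mF$-subgraph of $G$ either meets $X_{v^*}$ or, being connected, lies inside a single branch $Y_u$, so $c_\mF(G)\le |X_{v^*}|+\sum_u c_\mF(G[Y_u])\le (w+1)+\sum_u c_\mF(\mu_u)$, where $\mu_u:=p_\mF(G[Y_u])\le k/2<k$ and $\sum_u\mu_u=p_\mF(G-X_{v^*})\le k$. The inductive hypothesis and superadditivity of $h_\mF$ give $\sum_u c_\mF(\mu_u)\le \beta\log(k/2+1)\sum_u h_\mF(\mu_u)\le \beta\log(k/2+1)\,h_\mF(k)$, while $w+1\le h_\mF(k+1)\le \alpha h_\mF(k)$, so $c_\mF(G)\le \big(\alpha+\beta\log(k/2+1)\big)h_\mF(k)$. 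Since $\log(k+1)-\log(k/2+1)=\log\frac{2k+2}{k+2}\ge \log(3/2)$ for every $k\ge 2$, and for $k=1$ one has $\mu_u=0$ for all $u$ so that $c_\mF(G)\le w+1\le \alpha h_\mF(1)$, taking $\beta$ to be a large enough multiple of $\alpha$ (e.g.\ $\beta=2\alpha$) makes $\alpha+\beta\log(k/2+1)\le\beta\log(k+1)$, which closes the induction.

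The step I expect to demand the most care is the balanced-separator construction: one must check carefully that $p_\mF$ is genuinely additive across an adhesion set $Y_e$ (using connectedness of the members of $\mF$ together with the separation property of tree decompositions) and that the edge-orientation argument really produces a bag each of whose branches carries at most half of $p_\mF(G)$. Once this is in place, the resulting recursion and the tuning of the constant $\beta$ against $\alpha$ are routine.
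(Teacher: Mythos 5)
Your proof is correct, and at the top level it is the same divide-and-conquer argument the paper uses (and attributes to \cite{FominST11}): induct on $k$, extract a balanced separator consisting of bag(s) of a minimum-width tree decomposition (whose width is below $h_\mF(k+1)\le\alpha h_\mF(k)$ because $p_\mF(G)=k$), recurse on the pieces, use connectedness of the members of $\mF$ so that covers of the pieces plus the separator form a cover of $G$, collapse $\sum_u h_\mF(\mu_u)\le h_\mF(k)$ via superadditivity, and tune $\beta$ against $\alpha$ to absorb the separator into the extra $\log$ factor. Where you differ is the balanced-separator step itself: the paper calls a node $t$ ``large'' when $G_t^-$ contains a connected subgraph of packing number more than $\lfloor 2k/3\rfloor$, takes the deepest large node, and uses the union of two bags $X_t\cup X_{t'}$, giving a separator of size at most $2h_\mF(k+1)$ with pieces of packing number at most $\lfloor 2k/3\rfloor$; you instead orient every tree edge toward its heavy side and take the single centroid bag $X_{v^*}$ at a sink of the orientation, giving a separator of size at most $h_\mF(k+1)$ whose branches have packing number at most $k/2$. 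Your version of this step is sound (two heavy sides of one edge would force a packing of size more than $k$ across the adhesion set, since members of $\mF$ are connected, and a walk that always follows the fixed orientation cannot revisit a node of a tree) and yields slightly better constants and balance, while the recursion, the use of superadditivity, and the final bound $f_\mF(k)\le\beta h_\mF(k)\log(k+1)$ are identical; your separate handling of $k=1$, where all $\mu_u=0$, correctly covers the one value of $k$ where the generic inequality $\alpha\le\beta\log\frac{2(k+1)}{k+2}$ is too weak for a small $\beta$.
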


One way to prove that $p_\mF(G) \ge k$ whenever $\tw(G) \ge h_\mF(k)$
is via the following proposition, that is based on the \GM Theorem.
It is often implicitly used; see \cite{Reed-chapter}.

\begin{proposition}
    \label{prop:erdos-posa-grid}
    Let $\mF$ be any family of connected graphs, and assume that there
    is an integer $g$, such that any graph containing a $g\times g$
    grid as a minor is guaranteed to contain a sub-graph isomorphic to
    a member of $\mF$. Let $h(g')$ be the treewidth that guarantees the
    existence of a $g'\times g'$ grid minor in any graph.  Then
    $f_\mF(k) \le O(k\cdot h(g\sqrt k))$. In particular $f_\mF(k) \le
    2^{O(g^5k^{2.5})}$.
\end{proposition}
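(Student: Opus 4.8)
The plan is to reduce the statement to Lemma~\ref{lem:thomassen} by producing an appropriate function $h_\mF$. Recall that $\mF$ is a family of \emph{connected} graphs, so Lemma~\ref{lem:thomassen} applies: it suffices to show that for every $k$ there is a threshold $h_\mF(k)$ such that every graph $G$ with $\tw(G)\ge h_\mF(k)$ contains $k$ vertex‑disjoint subgraphs, each isomorphic to a member of $\mF$; this then gives $f_\mF(k)\le k\cdot h_\mF(k)$. The natural choice is to set $g'=g\cdot\lceil\sqrt k\rceil$, so that $g'\ge g\sqrt k$ and $g'$ is a multiple of $g$ with $(g'/g)^2=\lceil\sqrt k\rceil^2\ge k$, and to take $h_\mF(k)=h(g')$, the treewidth threshold from the \GM Theorem that forces a $g'\times g'$ grid minor.

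First I would verify the disjoint‑packing claim. If $\tw(G)\ge h_\mF(k)=h(g')$, then by definition of $h$ the graph $G$ contains the $g'\times g'$ grid $R$ as a minor; fix a minor model, i.e.\ a family $\{B_v\}_{v\in V(R)}$ of pairwise‑disjoint connected vertex subsets of $G$ such that whenever $uv\in E(R)$ there is an edge of $G$ between $B_u$ and $B_v$. Partition $R$ into $\lceil\sqrt k\rceil^2\ge k$ axis‑aligned $g\times g$ subgrids and keep any $k$ of them, say $R_1,\dots,R_k$. For each $i$ set $G_i=G\!\left[\bigcup_{v\in V(R_i)}B_v\right]$. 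Since the index sets $V(R_i)$ are disjoint and the $B_v$ are pairwise disjoint, the subgraphs $G_1,\dots,G_k$ are vertex‑disjoint; moreover the restriction of the model $\{B_v\}$ to $v\in V(R_i)$ is a valid minor model of $R_i$ inside $G_i$, so each $G_i$ contains a $g\times g$ grid as a minor. By the hypothesis on $\mF$, each $G_i$ therefore contains a subgraph isomorphic to a member of $\mF$, yielding the desired $k$ disjoint copies.

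With this in hand, Lemma~\ref{lem:thomassen} gives $f_\mF(k)\le k\cdot h_\mF(k)=k\cdot h\bigl(g\lceil\sqrt k\rceil\bigr)$. Since $g\lceil\sqrt k\rceil$ differs from $g\sqrt k$ only by a bounded factor, this is the claimed bound $O\bigl(k\cdot h(g\sqrt k)\bigr)$ (with the constant understood inside the argument of $h$, as is standard here given the extremely fast growth of $h$). For the explicit estimate, substitute $h(g')\le 20^{2(g')^5}$ from Theorem~\ref{thm:RST-grid}: then $h\bigl(g\lceil\sqrt k\rceil\bigr)\le 20^{2(2g\sqrt k)^5}=2^{O(g^5 k^{2.5})}$, and absorbing the polynomial factor $k$ gives $f_\mF(k)\le 2^{O(g^5 k^{2.5})}$.

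The proof is essentially an assembly of Lemma~\ref{lem:thomassen}, the \GM Theorem, and the trivial observation that a big grid splits into many small grids; the only point genuinely needing care — the ``main obstacle'', modest as it is — is the bookkeeping in the second paragraph: checking that restricting a grid‑minor model to the branch sets of a subgrid is a bona fide minor model of that subgrid inside the induced subgraph $G_i$, and that the resulting $G_i$ are vertex‑disjoint. Both follow immediately from the definition of a minor model but should be spelled out. (If one wished to sharpen the bound, Lemma~\ref{lem:ep-improved} could replace Lemma~\ref{lem:thomassen} to give $f_\mF(k)\le O\bigl(h_\mF(k)\log k\bigr)$, but this is not needed for the stated conclusion.)
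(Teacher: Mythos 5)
Your proof is correct and is precisely the folklore argument the paper intends for this (unproved) proposition: extract a $g\lceil\sqrt k\rceil\times g\lceil\sqrt k\rceil$ grid minor from the definition of $h$, tile it into $k$ disjoint $g\times g$ subgrids whose branch sets give vertex-disjoint subgraphs each containing a $g\times g$ grid minor, feed the threshold $h_\mF(k)=h(g\lceil\sqrt k\rceil)$ into Lemma~\ref{lem:thomassen}, and instantiate Theorem~\ref{thm:RST-grid} to get $2^{O(g^5k^{2.5})}$. The only inaccuracy is your closing aside: Lemma~\ref{lem:ep-improved} requires $h_\fset(k+1)\le\alpha h_\fset(k)$ for a universal constant $\alpha$, which fails for this threshold at the jumps of $\lceil\sqrt k\rceil$, but as you note that remark plays no role in the stated conclusion.
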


We improve the exponential dependence on $k$ in the preceding
proposition to near-linear. We state a more general
theorem\ifabstract{, whose proof appears in the full version,}\fi\ and
then derive the improvement as a corollary.

\begin{theorem}
  \label{thm:erdos-posa}
  Let $\mF$ be any family of connected graphs, and assume that there
  is an integer $r$, such that any graph of treewidth at least $r$ is
  guaranteed to contain a sub-graph isomorphic to a member of $\mF$.
  Then $f_\mF(k) \le \tilde{O}(k r^2)$.
\end{theorem}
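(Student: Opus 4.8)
The plan is to combine Theorem~\ref{thm:main1-intro} with the divide-and-conquer \EP reduction of Lemma~\ref{lem:ep-improved}. The first step is to produce a clean treewidth threshold function $h_\mF$ with $h_\mF(k)=\tilde{O}(kr^2)$ that guarantees an $\mF$-packing of size $k$. By Theorem~\ref{thm:main1-intro} there are absolute constants $c\ge 1$ and $d\ge 1$ such that, whenever $\tw(G)\ge c\,k\,r^2\,\log^d(kr+2)$, one can partition $G$ into $k$ node-disjoint subgraphs $G_1,\dots,G_k$ with $\tw(G_i)\ge r$ for every $i$; here we invoke Theorem~\ref{thm:main1-intro} with $h=k$, so that its hypothesis $hr^2\le \tw(G)/\poly\log\tw(G)$ reduces to the stated bound once the polylogarithmic factor (which is of order $\poly\log(kr)$) is absorbed into the constants $c,d$. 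By the assumption on $\mF$, each $G_i$ contains a subgraph isomorphic to a member of $\mF$; since the $G_i$ are vertex-disjoint and each is a subgraph of $G$, these are $k$ vertex-disjoint copies of members of $\mF$ in $G$, so $p_\mF(G)\ge k$. Hence we may set $h_\mF(k):=c\,k\,r^2\,\log^d(kr+2)$, which is $\tilde{O}(kr^2)$ and already satisfies $h_\mF(1)\ge r$, the threshold needed for a single copy.

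The second step is to verify that $h_\mF$ meets the two side conditions of Lemma~\ref{lem:ep-improved}. Note that $h_\mF(k)/k=c\,r^2\,\log^d(kr+2)$ is non-negative and non-decreasing in $k$; since any function of the form $k\mapsto k\cdot a(k)$ with $a$ non-negative and non-decreasing satisfies $x\,a(x)+y\,a(y)\le(x+y)\,a(x+y)$, the function $h_\mF$ is superadditive. For the one-step growth bound, $h_\mF(k+1)/h_\mF(k)=\frac{k+1}{k}\cdot\frac{\log^d((k+1)r+2)}{\log^d(kr+2)}$; the first factor is at most $2$, and because $(k+1)r+2<2(kr+2)$ while $\log(kr+2)\ge 1$, the second factor is less than $2^d$. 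Therefore $h_\mF(k+1)\le 2^{d+1}h_\mF(k)$ for all $k\ge 1$, and $2^{d+1}$ is an absolute constant since $d$ is.

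Finally, I would apply Lemma~\ref{lem:ep-improved} with this $h_\mF$ to obtain $f_\mF(k)\le\beta\,h_\mF(k)\log(k+1)=\tilde{O}(kr^2)$, which is the claimed bound. (In contrast, Lemma~\ref{lem:thomassen} alone would only yield $f_\mF(k)\le k\cdot h_\mF(k)=\tilde{O}(k^2r^2)$, so the refinement in Lemma~\ref{lem:ep-improved} is precisely what removes the extra factor of $k$.) I expect the only delicate point to be the bookkeeping in the first two steps: choosing a closed form for $h_\mF$ that simultaneously dominates the somewhat awkward threshold produced by Theorem~\ref{thm:main1-intro}, is genuinely non-decreasing and superadditive, and has a bounded one-step growth ratio. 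These are routine verifications, however; the substantive work is entirely contained in Theorem~\ref{thm:main1-intro} and Lemma~\ref{lem:ep-improved}.
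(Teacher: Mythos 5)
Your proposal is correct and takes essentially the same route as the paper: apply Theorem~\ref{thm:main1-intro} with $h=k$ to conclude that $\tw(G)\ge \tilde{O}(kr^2)$ forces $p_\mF(G)\ge k$, and then feed the resulting threshold $h_\mF(k)=\tilde{O}(kr^2)$ into Lemma~\ref{lem:ep-improved}. The only difference is that you explicitly verify the superadditivity and bounded one-step-growth hypotheses of Lemma~\ref{lem:ep-improved} for your closed-form $h_\mF$, a routine check the paper leaves implicit.
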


\iffull
\begin{proof}
  Let $G$ be any graph with $\tw(G) \ge k r^2
  \polylog(kr)$. Theorem~\ref{thm:main1-intro} guarantees that $G$ can
  be partitioned into $k$ node-disjoint subgraphs $G_1,\ldots,G_k$,
  such that for each $i$, $\tw(G_i) \ge r$. From the assumption in the
  theorem statement, each $G_i$ has a subgraph isomorphic to a member
  of $\mF$.  Therefore $G$ contains $k$ such subgraphs. We have thus
  established that if $\tw(G) \ge
 \tilde{O}(k r^2)$, then $p_\mF(G) \ge k$. We apply
  Lemma~\ref{lem:ep-improved} to conclude that $f_\mF(G) \le
  \tilde{O}(k r^2)$.
\end{proof}

Combining the preceding theorem with the \GM Theorem gives the
following easy corollary.
\fi

\begin{corollary}
  \label{cor2:erdos-posa}
  Let $\mF$ be any family of connected graphs, such that for some integer $g$, any graph containing a $g\times g$ grid
  as a minor is guaranteed to contain a sub-graph isomorphic to a
  member of $\mF$. Then $f_\mF(k) \le 2^{O(g^5)} \tilde{O}(k)$.
\end{corollary}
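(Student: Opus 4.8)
The plan is to derive this directly from Theorem~\ref{thm:erdos-posa} by plugging in the quantitative Grid-Minor Theorem. First I would set $r = 20^{2g^5}$, which is $2^{O(g^5)}$. By Theorem~\ref{thm:RST-grid}, any graph $G$ with $\tw(G) \ge r$ contains a $g \times g$ grid as a minor, and hence, by the hypothesis of the corollary, contains a subgraph isomorphic to a member of $\mF$. Thus $r = 2^{O(g^5)}$ is a valid choice for the parameter in the statement of Theorem~\ref{thm:erdos-posa}: treewidth at least $r$ guarantees an $\mF$-member.

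Next I would invoke Theorem~\ref{thm:erdos-posa} with this value of $r$. It gives $f_\mF(k) \le \tilde{O}(k r^2)$. Substituting $r = 2^{O(g^5)}$, so that $r^2 = 2^{O(g^5)}$ as well, yields $f_\mF(k) \le 2^{O(g^5)}\,\tilde{O}(k)$, which is the claimed bound. The $\polylog$ factors suppressed by $\tilde{O}$ depend on $k$ and $r$ (equivalently on $k$ and $g$); treating $g$ as a constant, or absorbing the $\polylog r$ contribution into the $2^{O(g^5)}$ factor, this is consistent with the stated form of the bound.

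I do not expect a genuine obstacle here: this is a one-line corollary, with all the real content residing in Theorem~\ref{thm:erdos-posa} (which decomposes $G$ into $k$ node-disjoint subgraphs of treewidth $\ge r$ via Theorem~\ref{thm:main1-intro}, locates a copy of an $\mF$-member in each, and applies the refined \EP bound of Lemma~\ref{lem:ep-improved}) and in the Robertson--Seymour--Thomas bound on grid minors. The only subtlety worth flagging is that the $2^{O(g^5)}$ dependence on $g$ is inherited verbatim from Theorem~\ref{thm:RST-grid}; if one wanted a singly-exponential-in-$g$ bound one would instead feed a grid-like-minor theorem (Reed--Wood, Kreutzer--Tazari) into Theorem~\ref{thm:erdos-posa}, exactly as in the remark following Corollary~\ref{cor:fpt}, but that is orthogonal to the present statement.
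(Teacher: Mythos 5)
Your proposal is correct and is exactly the paper's argument: the paper derives this corollary by combining Theorem~\ref{thm:erdos-posa} with the Grid-Minor Theorem~\ref{thm:RST-grid}, taking $r = 20^{2g^5} = 2^{O(g^5)}$ so that treewidth at least $r$ forces a $g\times g$ grid minor and hence an $\mF$-member, and then plugging $r$ into the $\tilde{O}(kr^2)$ bound. Your handling of the suppressed polylogarithmic factors (absorbing $\polylog(r)$ into the $2^{O(g^5)}$ term) matches the intended reading of the statement.
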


{\bf Some concrete results:} For a fixed graph $H$, let $\mF(H)$ be
the family of all graphs that contain $H$ as a minor. Robertson and
Seymour \cite{RS-grid}, as one of the applications of their \GM
Theorem, showed that $\mF(H)$ has the \EP property iff $H$ is
planar. The if direction can be deduced as follows. Every planar graph
$H$ is a minor of a $g\times g$ grid, where $g=O(|V(H)|^2)$. We can
then use Proposition~\ref{prop:erdos-posa-grid} to obtain a bound on
$f_{\fset(H)}$, which is super-exponential in $k$. However, by
directly applying Corollary~\ref{cor2:erdos-posa}, we get the
following improved near-linear dependence on $k$.

\begin{theorem}
  \label{thm:erdos-posa-planar}
  For any fixed planar graph $H$, the family $\fset(H)$ of graphs has
  the \EP property with $f_{\mF(H)}(k) = O(k \cdot \polylog(k))$.
\end{theorem}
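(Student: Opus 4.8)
The plan is to obtain Theorem~\ref{thm:erdos-posa-planar} as an essentially immediate corollary of Corollary~\ref{cor2:erdos-posa}. The only real work is to arrange the hypotheses of that corollary for the family $\mF(H)$, and in particular to deal with the fact that Corollary~\ref{cor2:erdos-posa} and Theorem~\ref{thm:erdos-posa} are phrased for families of \emph{connected} graphs, whereas $\mF(H)$ --- the family of all graphs that contain $H$ as a minor --- is not such a family.

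First I would reduce to a connected family. Assume (as is standard, and the interesting case) that $H$ is connected, and let $\mF'(H) \subseteq \mF(H)$ be the subfamily of members that are connected. I claim that $p_{\mF(H)}(G) = p_{\mF'(H)}(G)$ and $c_{\mF(H)}(G) = c_{\mF'(H)}(G)$ for every graph $G$, so that the same function witnesses the \EP property for both, and it suffices to bound $f_{\mF'(H)}$. Indeed, if a subgraph $G'$ of $G$ contains $H$ as a minor, fix a minor model of $H$ in $G'$ and let $G''$ be the subgraph consisting of a spanning tree of each (connected) branch set together with one connecting edge per edge of $H$; then $G'' \subseteq G'$, and contracting the branch sets of $G''$ produces a graph whose vertex set corresponds to $V(H)$ and which contains $H$ as a spanning subgraph, so (using that $H$ is connected) $G''$ is connected and has $H$ as a minor, i.e.\ $G'' \in \mF'(H)$. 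Hence every occurrence of a member of $\mF(H)$ in $G$ contains an occurrence of a member of $\mF'(H)$ on a subset of its vertices; this gives $p_{\mF(H)} \le p_{\mF'(H)}$ and equality of the covering numbers (a vertex set destroying all $\mF'(H)$-subgraphs also destroys all $\mF(H)$-subgraphs), while the reverse inequalities hold trivially since $\mF'(H) \subseteq \mF(H)$.

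Next I would invoke the classical bound (recalled just above the statement, and due to Robertson and Seymour) that every planar graph $H$ on $n$ vertices is a minor of the $g \times g$ grid for some $g = O(n^2)$; as $H$ is fixed, $g$ is a constant. By transitivity of the minor relation, every graph $G$ containing the $g \times g$ grid as a minor contains $H$ as a minor, and then (by the branch-set construction above) contains a subgraph isomorphic to a member of $\mF'(H)$. Thus $\mF'(H)$ is a family of connected graphs meeting the hypothesis of Corollary~\ref{cor2:erdos-posa} with this $g$, and the corollary yields $f_{\mF'(H)}(k) \le 2^{O(g^5)}\,\tilde{O}(k)$. Since $g = O(|V(H)|^2)$ depends only on $H$, the factor $2^{O(g^5)}$ is an absolute constant, so $f_{\mF'(H)}(k) = \tilde{O}(k) = O(k\cdot \polylog(k))$, and therefore $f_{\mF(H)}(k) = O(k\cdot \polylog(k))$, as claimed.

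All the substance is inherited from Corollary~\ref{cor2:erdos-posa} (hence ultimately from Theorem~\ref{thm:main1-intro}, Lemma~\ref{lem:ep-improved}, and the \GM Theorem), so I do not expect a genuine obstacle in this derivation. The two points that require a little care are the connected-versus-disconnected bookkeeping for $\mF(H)$ handled in the second paragraph, and the quantitative bound $g = O(|V(H)|^2)$ for realizing a planar graph as a grid minor; neither is deep.
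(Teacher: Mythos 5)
Your proof is correct and takes essentially the same route as the paper: the paper likewise observes that a fixed planar $H$ is a minor of a $g\times g$ grid with $g=O(|V(H)|^2)$ and then directly applies Corollary~\ref{cor2:erdos-posa}, so the factor $2^{O(g^5)}$ is a constant and $f_{\mF(H)}(k)=O(k\cdot\polylog(k))$. The only difference is your explicit reduction to the connected subfamily $\mF'(H)$ (for connected $H$), a bookkeeping point the paper glosses over when invoking a corollary stated for families of connected graphs.
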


For any integer $m>0$, let $\fset_{m}$ be the family of all cycles
whose length is $0$ modulo $m$.  Thomassen \cite{Thomassen88} showed
that $\mF_{m}$ has the \EP property, with
$f_{\fset_{m}}=2^{m^{O(k)}}$.  We can use
Corollary~\ref{cor2:erdos-posa} to obtain a bound of
$f_{\fset_{m}}=\tilde O(k)\cdot 2^{\poly(m)}$, using the fact that a
graph containing a grid minor of size $2^{\poly(m)}$ must contain a
cycle of length $0$ modulo $m$.

\medskip
\noindent {\bf Acknowledgements:} CC thanks Tasos Sidiropoulos for
discussions on grid minors and bidemsionality, and Alina Ene for
discussions on the decomposition result in \cite{Chuzhoy11} and for
pointing out \cite{FominST11}.  This
paper is partly inspired by the lovely article of Bruce Reed on
treewdith and their applications \cite{Reed-chapter}. We thank Paul
Wollan for pointers to recent work on the \GM theorem and \EP type
theorems.

\ifabstract
\bibliographystyle{abbrv}
\fi
\iffull
\bibliographystyle{alpha}
\fi
\bibliography{treewidth-decomposition}

\iffull
\label{------------------------------------Appendix-----------------------------------}
\appendix

\section{Proofs Omitted from Section~\ref{sec:prelim}}
\label{--------------------------------------sec proof from prelims-------------------------}
\label{sec: proofs from prelims}

\subsection{Proof of Claim~\ref{claim: simple partition}}

  We assume without loss of generality that $x_1\geq x_2\geq\cdots\geq x_n$, and process
  the integers in this order. When $x_i$ is processed, we add $i$ to
  $A$ if $\sum_{j\in A}x_j\leq \sum_{j\in B}x_j$, and we add it to $B$
  otherwise. We claim that at the end of this process, $\sum_{i\in
    A}x_i,\sum_{i\in B}x_i\geq N/3$. Indeed, if $x_1\geq N/3$, then
  $1$ is added to $A$, and, since $x_1\leq 2N/3$, it is easy to see
  that both subsets of integers sum up to at least $N/3$.  Otherwise,
  $|\sum_{i\in A}x_i-\sum_{i\in B}x_i|\leq \max_i\set{x_i}\leq x_1\leq
  N/3$.

\subsection{Proof of Corollary~\ref{cor: from well-linkedness to treewidth}}

  We build a bramble $\bset$ of order at least $\frac{\alpha \cdot
    |T|}{3\Delta}$, as follows. Let $X$ be any subset of fewer than
  $\frac{\alpha\cdot |T|}{3\Delta}$ vertices, and let $\cset$ be the
  set of all connected components of $G\setminus X$. We claim that at
  least one connected component in $\cset$ must contain more than
  $\half |T|$ of the vertices of $T$.

  Assume otherwise. Let $\cset=\set{C_1,\ldots,C_{\ell}}$; let
  $C_{\ell+1}=X$, and let $\cset'=\cset\cup\set{C_{\ell+1}}$. Then,
  using Claim~\ref{claim: simple partition}, we can find a partition
  $(\aset,\bset)$ of the sets in $\cset'$, with $\sum_{C\in
    \aset}|C\cap T|, \sum_{C\in \bset}|C\cap T|\geq |T|/3$. Let
  $A=\bigcup_{C\in \aset}C$, $B=\bigcup_{C\in \bset}C$. From the
  well-linkedness of the set $T$ of vertices, $|E(A,B)|\geq \alpha
  |T|/3$ must hold. However, $E(A,B)$ only contains edges incident to
  the vertices of $X$, and their number is bounded by $|X|\cdot
  \Delta<\alpha |T|/3$, a contradiction.

  We are now ready to define the bramble $\bset$. For each subset $X$
  of fewer than $\frac{\alpha\cdot |T|}{3\Delta}$ vertices, we add the
  unique connected component $C_X$ of $G\setminus X$, containing more
  than half the vertices of $T$ to $\bset$. It is easy to see that
  $\bset=\set{C_X\mid X\sse V(G); |X|< \frac{\alpha \cdot
      |T|}{3\Delta}}$ is indeed a bramble. The order of $\bset$ is at
  least $\frac{\alpha |T|}{3\Delta}$, since for each set $S$ of fewer
  than $\frac{\alpha |T|}{3\Delta}$ vertices, there is a graph in
  $\bset$ that does not contain any vertices of $S$ - the graph $C_S$.
  Therefore, $\BN(G)\geq \frac{\alpha |T|}{3\Delta}$, and so
  $\tw(G)\geq \frac{\alpha |T|}{3\Delta}$-1.

\subsection{Proof of Theorem~\ref{thm: well-linked-decomposition}}

We use the $\alphasc(k')$-approximation algorithm $\algsc$ for the
sparsest cut
problem; if a polynomial-time algorithm is not needed we can use
an exact algorithm for the sparsest cut problem. 
Throughout the algorithm, we maintain a partition $\wset$ of the input
set $S$ of vertices, where for each $R\in \wset$, $|\out(R)|\leq
|\out(S)|$. At the beginning, $\wset$ consists of the subsets of $S$
defined by the connected components of $G[S]$.

Let $R\in \wset$ be any set in the current partition, and let
$(G_R,\tset'_R)$ be the instance of the sparsest cut problem
corresponding to $R$, defined as follows. We start with graph $G$, and
subdivide every edge $e\in \out_G(R)$ with a new vertex $t_e$, letting
$\tset'_R=\set{t_e\mid e\in \out_G(R)}$. Let $G_R$ be the sub-graph of
the resulting graph, induced by $R\cup \tset'_R$. We then consider the
instance of the sparsest cut on graph $G_R$, with the set $\tset'_R$
of terminals.  We say that a cut $(A',B')$ in $G_R$ is sparse, iff its
sparsity is less than $\alpha\cdot \alphasc(k')$. We apply the
algorithm $\algsc$ to the instance $(G_R,\tset'_R)$ of sparsest
cut. If the algorithm returns a cut $(A',B')$, that is a sparse cut,
then let $A=A'\setminus \tset'_R$, and $B=B'\setminus \tset'_R$. We
remove $R$ from $\wset$, and add $A$ and $B$ to it instead. Let
$T_A=\out(R)\cap \out(A)$, and $T_B=\out(R)\cap \out(B)$, and assume
without loss of generality that $|T_A|\leq |T_B|$. Then $|E(A,B)|< \alpha\cdot
\alphasc(k') |T_A|$ must hold, and in particular, $|\out(A)|\leq
|\out(B)|\leq |\out(R)|\leq |\out(S)|$. For accounting purposes, each
edge in set $T_A$ is charged $\alpha\cdot \alphasc(k')$ for the edges
in $E(A,B)$. Notice that the total charge to the edges in $T_A$ is $
\alpha \cdot \alphasc(k')|T_A|\geq |E(A,B)|$. Notice also that since
$|T_A|\leq |\out(R)|/2$ and $|E(A,B)|\leq \alpha\cdot \alphasc(k')
|T_A|\leq 0.1|T_A|$, we have that $|\out(A)|\leq 0.51|\out(R)|$.

The algorithm stops when for each set $R\in \wset$, the procedure
$\algsc$ returns a cut that is not sparse. We argue that this means
that each set $R\in \wset$ is $\alpha$-good. Assume otherwise, and let
$R\in \wset$ be a set that is not $\alpha$-good. Then the
corresponding instance of the sparsest cut problem must have a cut of
sparsity less than $\alpha$. The algorithm $\algsc$ should then have
returned a cut whose sparsity is less than $\alpha\cdot\alphasc(k')$,
that is a sparse cut.

Finally, we need to bound $\sum_{R\in\wset}|\out(R)|$. We use the
charging scheme defined above. Consider some iteration where we
partition the set $R$ into two subsets $A$ and $B$, with $|T_A|\leq
|T_B|$. Recall that each edge in $T_A$ is charged $\alpha\cdot
\alphasc(k')$ in this iteration, while $|\out(A)|\leq 0.51|\out(R)|$
holds.  Consider some edge $e=(u,v)$. Whenever $e$ is charged via the
vertex $u$, the size of the set $\out(R)$, where $u\in R\in \wset$
goes down by the factor of at least $0.51$. Therefore, $e$ can be
charged at most $2\log k'$ times via each of its endpoints. The total
charge to $e$ is then at most $4\alpha\cdot \alphasc(k')\log k'<\half$
(since $\alpha<\frac{1}{8\alphasc(k')\cdot \log k'}$). This however
only accounts for the \emph{direct} charge. For example, some edge
$e'\not \in \out(S)$, that was first charged to the edges in
$\out(S)$, can in turn be charged for some other edges. We call such
charging \emph{indirect}. If we sum up the indirect charge for every
edge $e\in \out(S)$, we obtain a geometric series, and so the total
direct and indirect amount charged to every edge $e\in \out(S)$ is at
most $8\alpha\cdot \alphasc(k')\log k'$. Therefore, $\sum_{R\in
  \wset}|\out(R)|\leq k'(1+16\alpha\cdot \alphasc(k')\log k')$ (we
need to count each edge $e\in \left (\bigcup_{R\in \wset}\out(R)\right
)\setminus \out(S)$ twice: once for each its endpoint).\hfill

\subsection{Proof of Theorem~\ref{thm:log-degree}}

Since $G$ has treewidth $k$, we can efficiently find a set $X$ of $\Omega(k)$ vertices of $G$ with properties guaranteed by Lemma~\ref{lem: find well-linked set}. Assume for
simplicity that $|X|$ is even.  We use the cut-matching game of
Khandekar, Rao and Vazirani \cite{KRV}, defined as follows.
 
We are given a set $V$ of nodes, where $\card{V}$ is even, and two
players, the cut player and the matching player. The goal of the cut
player is to construct an edge-expander in as few iterations as
possible, whereas the goal of the matching player is to prevent the
construction of the edge-expander for as long as possible. The two
players start with a graph $\mX$ with node set $V$ and an empty edge
set. The game then proceeds in iterations, each of which adds a set of
edges to $\mX$.  In iteration $j$, the cut player chooses a partition
$(Y_j, Z_j)$ of $V$ such that $\card{Y_j} = \card{Z_j}$ and the
matching player chooses a perfect matching $M_j$ that matches the
nodes of $Y_j$ to the nodes of $Z_j$.  The edges of $M_j$ are then
added to $\mX$.  Khandekar, Rao, and Vazirani \cite{KRV} showed that
there is a strategy for the cut player that guarantees that after
$O(\log^2{\card{V}})$ iterations the graph $\mX$ is a
$1/2$-edge-expander. Orecchia \etal \cite{OrecchiaSVV08} strengthened
this result by showing that after $O(\log^2{\card{V}})$ iterations the
graph $\mX$ is a $\Omega(\log{\card{V}})$-edge-expander.  
We use $\cKRV(n)$ to denote the number of
iterations of the cut-matching game required in the proof of the
preceding theorem for $|V|=n$. Note that the resulting expander is 
regular with vertex degrees equal to $\cKRV(n)$.

Using the cut-matching game we can embed an expander $H=(X,F)$ into
$G$ as follows. Each iteration $j$ of the cut-matching game requires
the matching player to find a matching $M_j$ between a given partition
of $X$ into two equal-sized sets $Y_j,Z_j$. From Lemma~\ref{lem: find well-linked set}, there exist a collection $P_j$ of
paths from $Y_j$ to $Z_j$, that cause congestion at most $1/\alpha^*$ on the vertices of $G$; these paths naturally define
the required matching $M_j$. The game terminates in $\cKRV(|X|)$
steps. Consider the collection of paths $\mP = \cup_j P_j$ and let
$G'$ be the subgraph of $G$ induced by the union of the edges in these
paths and let $H=(X,F)$ be the expander on $X$ created by the union of
the edges in $\cup_j M_j$. By the construction, for each $j$, any node $v$
of $G$ appears in at most $1/\alpha^*$ paths in $P_j$.  Therefore, the maximum
degree in $G'$ is at most $2 \cKRV(|X|)/\alpha^*=O(\log^3k)$ and moreover the node (and
hence also edge) congestion caused by the edges of $H$ in $G$ is also
upper bounded by the same quantity. We claim that $\tw(G') =
\Omega(k/\log^6 k)$. Since $H=(X,F)$ is an edge-expander, $X$ is
$\alpha$-edge-well-linked in $H$ for a fixed constant $\alpha$. Since
$H$ is embedded in $G'$ with congestion at most $2\cKRV(|X|)/\alpha^*$, $X$ is
$\frac{\alpha\cdot \alpha^*}{2\cKRV(|X|)}$-edge-well-linked in $G'$. Since the maximum
degree in $G'$ is at most $2\cKRV(|X|)/\alpha^*$, we can apply
Corollary~\ref{cor: from well-linkedness to treewidth} to see that
$\tw(G') = \Omega\left(\frac{|X|(\alpha^*)^2}{(\cKRV(|X|))^2}\right) = \Omega(k/\log^6 k)$.
\section{Proofs Omitted from Section~\ref{sec: proof of first main theorem}}
\label{--------------------------------------sec proof from alg 1-------------------------}
\label{sec: proofs from alg 1}

\subsection{Proof of Theorem~\ref{thm: cut into expanders}}
Given the graph $H$, we build a new graph $H'$, as follows: Subdivide every edge $e\in E(H)$ with a new vertex $v_e$; add a new vertex $t_e$ and connect it to $v_e$ with an edge. The set of vertices of this new graph $H'$ can be partitioned into three subsets: $V_1=V(H)$; $V_2=\set{v_e\mid e\in E(H)}$ and $\tset=\set{t_e\mid e\in E(H)}$. Let $S=V_1\cup V_2$. We perform a well-linked decomposition of $S$ in graph $H'$, by applying Theorem~\ref{thm: well-linked-decomposition} to it, with parameter $\alpha=\frac{1}{160\log m\cdot\alphaSC(m)}$. Let $\wset$ be the resulting well-linked decomposition of $S$. We define a partition $\wset'$ of the vertices of $H$ as follows: for each $W\in \wset$, we add $W'=W\cap V_1$ to $\wset'$. Our final partition of $V(H)$ is $\wset'$. 

In order to bound $\sum_{W'\in \wset'}|\out(W')|$, observe that each edge $e\in \out(W')$ contributes at least $1$ to $\out(W)$. In addition, $\out(W)$ contains edges connecting some vertices in $V_2$ to the vertices in $\tset$. Such edges do not belong to $H$ and do not contribute to $|\out(W')|$. The total number of such edges in graph $H'$ is $m$. Therefore, $\sum_{W'\in \wset'}|\out(W')|\leq \sum_{W\in \wset}|\out(W)|-m\leq m(1+16\alpha\alphaSC(m)\log m)-m=m/10$.

Finally, we claim that for each $W'\in \wset'$, $\Psi(H[W'])\geq \alpha$. Consider any partition $(A,B)$ of $W'$, and assume without loss of generality that $|E_H(A)|\leq |E_H(B)|$. It is enough to prove that $|E_H(A,B)|\geq \alpha|E_H(A)|$.
We build a partition $(A',B')$ of $W$, as follows. Set $A'$ contains all vertices of $V_1\cap A$, and all vertices $v_e$ where both endpoints of $e$ belong to $A$. (We assume that if both endpoints of an edge $e$ belong to $W$, then so does the vertex $v_e$: otherwise, vertex $v_e$ must belong to a separate cluster $W_e=\set{v_e}$ in $\wset$, and by merging $W$ and $W_e$ we obtain a valid partition.) All other vertices of $W$ belong to $B$. Clearly, $|E_H(A,B)|=|E_{H'}(A',B')|$. Moreover, for every edge $e\in E_{H}(A)$, we have $v_e\in A'$ and $t_e\in \out_{H'}(A')\cap \out_{H'}(W)$, and for every edge $e\in E_{H}(B)$, we have $v_e\in B'$ and $t_e\in \out_{H'}(B')\cap \out_{H'}(W)$. In particular, $|\out_{H'}(A')\cap \out_{H'}(W)|,|\out_{H'}(B')\cap \out_{H'}(W)|\geq |E_H(A)|$. Therefore, from the $\alpha$-well-linkedness of $W'$, $|E_H(A,B)|=|E_{H'}(A',B')|\geq \alpha \min\set{|\out_{H'}(A')\cap \out_{H'}(W)|,|\out_{H'}(A')\cap \out_{H'}(W)|}\geq \alpha|E_H(A)|$.

\subsection{Proof of Theorem~\ref{thm: remove expander}}

\begin{definition} We say that a graph $G=(V,E)$ is an $\alpha$-expander, iff
$\min_{\stackrel{X\subseteq V:}{|X|\leq |V|/2}}\set{\frac{|E(X,\overline{X})|}{|X|}}\geq \alpha$.
\end{definition}

We will use the result of Leighton and Rao~\cite{LR}, who show that
any multicommoditly flow instance in an expander graph that can be
routed with no congestion, can also be routed on relatively short paths with a small
edge-congestion.  In order to use their result, we need to turn $H'$ into a
constant-degree expander\footnote{An alternative to using
  constant degree expanders is to argue about short paths by appealing
  to large conductance and product multicommodity flows.}. We do so as
follows.

We process the vertices of $H'$ one-by-one. Let $v$ be any such
vertex, let $d$ be its degree, and let $e_1,\ldots,e_d$ be the edges
adjacent to $v$. We replace $v$ with a degree-3 expander $X_v$ on $d$
vertices, whose expansion parameter is some constant $\alpha'<1$. Each
edge $e_1,\ldots,e_d$ now connects to a distinct vertex of $X_v$. Let
$H''$ denote the graph obtained after each super-node of $H'$ has been
processed.  Notice that the maximum vertex degree in $H''$ is bounded
by $4$. We next show that graph $H''$ is an $\alpha_0$-expander, for
$\alpha_0=\alpha'\cdot \Psi(H')/12$.

\begin{claim}
  Graph $H''$ is an $\alpha_0$-expander, for $\alpha_0=\alpha'\cdot
  \Psi(H')/12$.
\end{claim}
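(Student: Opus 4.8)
The plan is to verify the vertex-expansion condition of $H''$ directly. Fix any $Y\subseteq V(H'')$ with $1\le |Y|\le |V(H'')|/2$; note that $|V(H'')|=\sum_{v}d_{H'}(v)=2|E(H')|$, so $|Y|\le |E(H')|$. For each super-node $v$ of $H'$ write $d_v=d_{H'}(v)=|V(X_v)|$ and $y_v=|Y\cap V(X_v)|$, and let $A=\{v:y_v>d_v/2\}$ and $B=V(H')\setminus A$. Put $c=|E_{H'}(A,B)|$, $W_A=\sum_{v\in A}d_v=2|E_{H'}(A)|+c$ and $W_B=\sum_{v\in B}d_v=2|E_{H'}(B)|+c$, and let $P=\sum_{v\in A}(d_v-y_v)$ be the number of $\bar Y$-vertices lying inside $A$-gadgets, $Q=\sum_{v\in B}y_v$ the number of $Y$-vertices lying inside $B$-gadgets, and $s=P+Q$. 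Write $\gamma=\Psi(H')$; in the regime we use, $\gamma\le 1$, which I will assume.

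First I would lower-bound $|E_{H''}(Y,\bar Y)|$ by two \emph{edge-disjoint} contributions. Inside the gadgets: since each $X_v$ is an $\alpha'$-expander, $|E_{X_v}(Y\cap X_v,\,X_v\setminus Y)|\ge \alpha'\min(y_v,d_v-y_v)$, which equals $\alpha'(d_v-y_v)$ for $v\in A$ and $\alpha' y_v$ for $v\in B$; summing over $v$ gives at least $\alpha' s$ crossing edges. Among the original $H'$-edges: an edge $e=(u,v)$ with $u\in A$, $v\in B$ maps in $H''$ to an edge between one vertex $a_e$ of $X_u$ and one vertex $b_e$ of $X_v$, and it fails to cross $(Y,\bar Y)$ only if $a_e,b_e$ lie on the same side; the number of such ``bad'' $A$–$B$ edges is at most $Q$ (charging each to a distinct $Y$-vertex of a $B$-gadget, namely $b_e$) plus at most $P$ (charging each to a distinct $\bar Y$-vertex of an $A$-gadget, namely $a_e$), so at least $c-s$ of them survive as cut edges. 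Hence $|E_{H''}(Y,\bar Y)|\ge \alpha' s+\max(0,c-s)$, and in particular $|E_{H''}(Y,\bar Y)|\ge \alpha' s$ and (using $\alpha'<1$: when $c>s$, $\alpha' s+c-s\ge \alpha' c$) also $|E_{H''}(Y,\bar Y)|\ge \alpha' c$.

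Next I would convert the balance constraint $|Y|\le |V(H'')|/2$ into a bound relating $|Y|$, $c$ and $s$. Trivially $|Y|\le W_A+Q$, and since $|Y|\le|\bar Y|$ and $\bar Y$ has exactly $P$ vertices in the $A$-gadgets, $|Y|\le |\bar Y|\le P+W_B$. By the conductance of $H'$, $c\ge \gamma\min(|E_{H'}(A)|,|E_{H'}(B)|)$, so $\min(W_A,W_B)\le \frac{2c}{\gamma}+c\le \frac{3c}{\gamma}$ (using $\gamma\le 1$). Therefore $|Y|\le \min(W_A+Q,\,W_B+P)\le \min(W_A,W_B)+s\le \frac{3c}{\gamma}+s$. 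A short dichotomy now finishes: if $s\ge|Y|/2$ then $|E_{H''}(Y,\bar Y)|\ge \alpha' s\ge \frac{\alpha'}{2}|Y|\ge \frac{\alpha'\gamma}{12}|Y|$ (as $\gamma\le 1$); otherwise $|Y|\le \frac{3c}{\gamma}+|Y|/2$ gives $c\ge \frac{\gamma}{6}|Y|$, whence $|E_{H''}(Y,\bar Y)|\ge \alpha' c\ge \frac{\alpha'\gamma}{6}|Y|$. In both cases $|E_{H''}(Y,\bar Y)|\ge \alpha_0|Y|$ with $\alpha_0=\alpha'\gamma/12=\alpha'\Psi(H')/12$, which proves the claim.

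The gadget bookkeeping and the closing arithmetic are routine. The step needing the most care is the second half of the edge lower bound: showing that at least $c-s$ of the original $A$–$B$ edges remain cut edges, by injectively charging each non-crossing $A$–$B$ edge to a ``misplaced'' vertex of an expander gadget (a $Y$-vertex of a $B$-gadget, or a $\bar Y$-vertex of an $A$-gadget), so that there are at most $P+Q=s$ of them. The other mildly delicate point is deriving $|Y|\le \frac{3c}{\gamma}+s$ — i.e.\ recognizing that the vertex-balance of $Y$ must be played off against the edge-conductance of $H'$, which is precisely what controls $\min(W_A,W_B)$; once that inequality is established, the ``$s$ large vs.\ $s$ small'' case split closes everything.
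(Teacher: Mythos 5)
Your proof is correct and follows essentially the same route as the paper's: round the $H''$-cut to a cut $(A,B)$ of $H'$ by gadget majority, use the $\alpha'$-expansion of the gadgets $X_v$ to control the total minority mass, and invoke the conductance of $H'$ on the rounded cut; the paper merely packages this as a proof by contradiction (bounding the rounded cut above by $|E_{H''}(A,B)|/\alpha'$ and the edge mass of the smaller side below), whereas you verify the expansion directly via the charging of surviving inter-gadget edges and the case split on $s$ versus $|Y|/2$. Your explicit assumption $\Psi(H')\le 1$ is harmless: the paper's own argument implicitly needs $\Psi(H')\le 1.2$ for its ``$\le 0.1|A|$'' and ``$\le |A|/8$'' steps, and in the regime where the claim is applied $\Psi(H')$ is polylogarithmically small.
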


\begin{proof}
  Assume otherwise, and let $(A,B)$ be a violating cut, that is,
  $|E_{H''}(A,B)|<\alpha_0\cdot\min\set{ |A|,|B|}$.  We use the cut
  $(A,B)$ to define a partition $(A',B')$, of $V(H')$, and show that
  $|E_{H'}(A',B')|< \Psi(H') \cdot
  \min\set{|E_{H'}(A')|,|E_{H'}(B')|}$, contradicting the definition
  of conductance.

  Partition $(A',B')$ is defined as follows. For each vertex $v\in
  V(H')$, if at least half the vertices of $X_{v}$ belong to $A$, then
  we add $v$ to $A'$; otherwise we add $v$ to $B'$.

  We claim that $|E_{H'}(A',B')|\leq |E_{H''}(A,B)|/\alpha'$. Indeed,
  consider any vertex $v\in V(H')$, and consider the partition
  $(A_{v},B_{v})$ of the vertices of $X_{v}$ defined by the partition
  $(A,B)$: that is, $A_{v}=A\cap V(X_{v})$, $B_{v}=B\cap
  V(X_{v})$. Assume without loss of generality that $|A_{v}|\leq |B_{v}|$. Then the
  contribution of the edges of $X_{v}$ to $E_{H''}(A,B)$ is at least
  $\alpha'\cdot |A_{v}|$. After vertex $v$ is processed, we add at
  most $|A_{v}|$ edges to the cut. Therefore,

\[|E_{H'}(A',B')|\leq \frac{|E_{H''}(A,B)|}{\alpha'}\leq \frac{\alpha_0}{\alpha'}\cdot\min\set{|A|,|B|}=\frac{\Psi(H')}{12}\min\set{|A|,|B|}\]

Assume without loss of generality that $\sum_{v\in A'}d_{H'}(v)\leq \sum_{v\in
  B'}d_{H'}(v)$, so $|E_{H'}(A')|\leq |E_{H'}(B')|$.  Consider the set
$A$ of vertices of $H''$, and let $A_1\sse A$ be the subset of
vertices, that belong to expanders $X_{v}$, where $|V(X_{v})\cap
A|\leq |V(X_{v})\cap B|$. Notice that from the expansion properties of
graphs $X_{v}$, $|E_{H''}(A,B)|\geq \alpha'|A_1|$, and so $|A_1|\leq
\frac{|E_{H''}(A,B)|}{\alpha'}\leq \frac{\alpha_0}{\alpha'}|A|\leq
\frac{|A|} 8$.  As every vertex in $A\setminus A_1$ contributes at
least $1$ to the final summation $\sum_{v\in A'}d_{H'}(v)$, we get
that $\sum_{v\in A'}d_{H'}(v)\geq \frac 7 8|A|$, while, as observed
above, $|E_{H'}(A',B')|\leq \frac{\Psi(H')}{12}|A|\leq 0.1|A|$.
Therefore, $|E_{H'}(A')|=\left (\sum_{v\in
    A'}d_{H'}(v)-|E_{H'}(A',B')|\right )/2\geq 0.2|A|$. We conclude
that

\[|E_{H'}(A',B')|\leq  \frac{\Psi(H')}{12} |A|<0.2\Psi(H')|A|\leq \Psi(H')|E_{H'}(A')|,\]

contradicting the definition of conductance.
\end{proof}

The following theorem easily follows from the results of Leighton and
Rao~\cite{LR}, and its proof can be found in~\cite{Chuzhoy11} (see
also \cite{KolmanS06}).

\begin{theorem}\label{thm: routing on expanders}
  Let $G$ be any $n$-vertex $\alpha$-expander with maximum vertex
  degree $\dmax$, and let $M$ be any partial matching over the
  vertices of $G$. Then there is an efficient randomized algorithm
  that finds, for every pair $(u,v)\in M$, a path $P_{u,v}$ of length
  $O(\dmax\log n/\alpha)$ connecting $u$ to $v$, such that the set
  $\pset=\set{P_{u,v}\mid (u,v)\in M}$ of paths causes edge-congestion
  $O(\log^3 n/\alpha)$ in $G$. The algorithm succeeds with high
  probability.
\end{theorem}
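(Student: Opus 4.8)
The plan is to reduce to the short-path concurrent-flow theorem of Leighton and Rao (Theorem~18 of~\cite{LR}) by a Valiant-style two-phase routing, thereby obtaining a \emph{fractional} routing of $M$ with small congestion and short paths, and then to extract integral paths by randomized rounding. First I would invoke Theorem~18 of~\cite{LR}: in the $n$-vertex $\alpha$-expander $G$ of maximum degree $\dmax$ there is an efficiently computable concurrent flow $F$ in which every ordered pair of vertices simultaneously sends $\epsilon := \Omega(\alpha/(n\log n))$ units, with total edge-congestion at most $1$, and every flow-path has length at most $L_0 = O(\dmax\log n/\alpha)$. For a matched pair $(u,v)\in M$ I route one unit from $u$ to $v$ through the uniform distribution in two phases: in Phase~1, $u$ spreads flow, sending exactly $\epsilon$ units to every $w\neq u$ (a sub-flow of $F$, taking an at-most-$1$ fraction of each $u\to w$ path of $F$); in Phase~2, every $w\neq v$ sends exactly $\epsilon$ units to $v$ (a sub-flow of $F$ with sink $v$). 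The concatenation is a valid $u$--$v$ flow of value $\Theta(n\epsilon)=\Omega(\alpha/\log n)$ supported on paths of length at most $2L_0$; scaling by a factor $O(\log n/\alpha)$ yields a fractional unit $u$--$v$ flow. The decisive point is that, because $M$ is a \emph{matching}, the Phase-1 sub-flows over all pairs have pairwise distinct sources, so their sum is a sub-flow of $F$ and has congestion $\le 1$; likewise the Phase-2 sub-flows have pairwise distinct sinks. Hence the sum over all matched pairs of the (unscaled) combined flows has congestion at most $2$, and after the $O(\log n/\alpha)$ scaling the whole fractional routing of $M$ has edge-congestion $O(\log n/\alpha)$ with every flow-path of length $O(\dmax\log n/\alpha)$.

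Second, I would round this fractional routing by Raghavan--Thompson. Decompose each pair's unit flow into flow-paths and, independently for each $(u,v)\in M$, choose $P_{u,v}$ with probability proportional to its flow value; each chosen path then has length $O(\dmax\log n/\alpha)$. For a fixed edge $e$, the indicators $\mathbf{1}[e\in P_{u,v}]$ are independent over the pairs and their expected sum equals the fractional congestion on $e$, namely $O(\log n/\alpha)$; a Chernoff bound gives integral congestion $O(\log n/\alpha+\log n)$ on $e$ with probability $1-n^{-\Omega(1)}$, and a union bound over the at most $n\dmax$ edges shows that with high probability every edge carries $O(\log n/\alpha)=O(\log^3 n/\alpha)$ of the chosen paths. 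Since computing $F$, forming the combined flows, decomposing into paths, and sampling are all polynomial time, the algorithm is efficient.

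The main obstacle is the fractional step: Theorem~18 of~\cite{LR} routes only the product (complete-graph) demand, so a naive per-pair application would incur a factor of $|M|$ in congestion. The key observation that defeats this is that the matching structure makes the per-source spreading flows (and the per-sink gathering flows) mutually packable inside the single congestion-$1$ flow $F$, so there is no $|M|$ blow-up; simultaneously, routing through an intermediate uniform distribution keeps every path within twice the Leighton--Rao length bound. The remaining rounding step is entirely standard.
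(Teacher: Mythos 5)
Your proposal is correct and follows essentially the same route as the paper: the paper's argument (deferred to \cite{Chuzhoy11}) likewise takes the Leighton--Rao uniform concurrent flow on paths of length $O(\dmax\log n/\alpha)$ \cite{LR}, concatenates for each matched pair the spreading flow out of $u$ with the gathering flow into $v$, and uses the matching structure so that each endpoint's sub-flow of the LR flow is packed only once (hence fractional congestion $O(\log n/\alpha)$ after scaling), finishing with independent per-pair randomized path selection and a Chernoff--union bound. The only cosmetic difference is that your final congestion bound $O(\log n/\alpha+\log n)$ is in fact somewhat stronger than the stated $O(\log^3 n/\alpha)$.
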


Let $X'$ be any degree-$3$ $\alpha'$-expander over
$r''=\Theta(r\Delta^2\log^8k)$ vertices, where $\alpha'$ is some constant. Our
next step is to embed $X'$ into $H''$, by using short
paths. Specifically, we select any collection
$\Gamma'=\set{v_1,\ldots,v_{r''}}$ of vertices in $H''$, and define an
arbitrary $1:1$ matching between the vertices of $X'$ and the vertices
of $\Gamma'$ (we identify these vertices and refer to vertices of $X'$
as $\Gamma'$ from now on). Notice that $r''<r'\leq |E(H')|$, so there are at least $r''$ vertices in $H''$.

Next, for every edge $e=(u,v)\in E(X')$, we find a path $P_e$
connecting $u$ to $v$ in $H''$. This path will serve as the embedding
of the edge $e$. In order to find the embeddings of the edges of $X'$,
we partition $E(X')$ into $5$ disjoint matchings $M_1,\ldots,M_5$,
using the fact that the maximum vertex degree in $X'$ is bounded by
$3$. We then use Theorem~\ref{thm: routing on expanders} to route the
matchings $M_1,\ldots, M_5$ in graph $H''$. Let $\pset=\set{P_e\mid
  e\in E(X')}$ be the resulting set of paths. Then, from
Theorem~\ref{thm: routing on expanders}, the length of every path in
$\pset$ is bounded by $\ell=O(\log^3n)=O(\log^3k)$, and these paths
cause congestion at most $\eta=O(\log^3n)/\Psi(H')=O(\log^5n)=O(\log^5k)$ in $H''$.

We are now ready to define the set $S$ of vertices in graph $H'$. We
add to $S$ every vertex $v$, such that at least one vertex of $X_v$
participates in the paths in $\pset$. Since $|\pset|=r''$, and every
path in $\pset$ contains at most $\ell$ vertices, $|S|\leq r''\cdot
\ell\leq O(r\Delta^2\log^{11}k)\leq r'$, as required. Finally, consider the
graph $G'$, obtained from $H'[S]$, by un-contracting the super-nodes
in $S$. It now only remains to prove that $\tw(G')\geq r$. In order to
do so, we define a subset $\Gamma$ of at least $r''/\Delta$ vertices
of $G'$, and prove that these vertices are $\alpha^{**}$-well-linked in
$G'$, for a suitably large $\alpha^{**}$.

Consider the sub-graph $H^*$ of $H''$, induced by the set $S'$ of
vertices, participating in the paths $\pset$, and the graph $G'$. For
every vertex $v_C\in S$, graph $G'$ contains the sub-graph $G'[C]$,
and graph $H^*$ contains the expander $X_{v_C}$. So we can obtain
$H^*$ from $G'$, by replacing every cluster $G'[C]$ with the expander
$X_{v_C}$. Let $E_0$ be the set of edges in $H^*$, connecting vertices
$(x,y)$ that belong to distinct expanders $X_{v_C},X_{v_{C'}}$. Then for
each edge $e\in E_0$, there is a corresponding edge $e'\in E(G')$,
connecting some vertex $x'\in C$ to some vertex $y\in C'$. We do not
distinguish between the edges $e,e'$, and will think about them as the
same edge.

We now define a subset $\Gamma$ of vertices of $G'$, by mapping every
vertex $x\in \Gamma'$ to its corresponding vertex in $G'$. The mapping
is defined as follows. Consider some vertex $x\in \Gamma'$, and assume
that $x\in X_{v_C}$. Let $e$ be the unique edge of $E_0$ incident on
$x$, and consider the same edge $e$ in graph $G'$. Let $x'$ be the
endpoint of $e$ that belongs to the cluster $C$. We then define
$f(x)=x'$. Let $\Gamma=\set{f(x)\mid x\in \Gamma'}$. Since the degree
of every vertex in $G'$ is at most $\Delta$, $|\Gamma|\geq
r''/\Delta$. From Corollary~\ref{cor: from well-linkedness to
  treewidth}, in order to prove that $\tw(G')\geq r$, it is enough to
show that the set $\Gamma$ of vertices is $\alpha^{**}$-well-linked,
for $\alpha^{**}\geq \frac{6\Delta^2 r}{r''}$.

Consider any partition $(A,B)$ of $V(G')$, and denote
$\Gamma_A=\Gamma\cap A, \Gamma_B=\Gamma\cap B$. Assume without loss of
generality that $|\Gamma_A|\leq |\Gamma_B|$. We need to prove that
$|E(A,B)|\geq \alpha^{**}\cdot |\Gamma_A|$.

Let $\Gamma'_A,\Gamma'_B\subseteq \Gamma'$ be subsets of vertices of
$\Gamma'$, corresponding to the partition $(\Gamma_A,\Gamma_B)$ of
$\Gamma$ (recall that for each vertex $x'\in \Gamma'$, there can be up
to $\Delta$ vertices in $\Gamma$ mapped to $x'$. In this case we only
add one of them to $\Gamma_A$ or $\Gamma_B$). Since graph $X'$ is an
$\alpha'$-expander, there are $|\Gamma_A|$ paths connecting the
vertices in $\Gamma_A'$ to the vertices of $\Gamma_B'$ in $X'$, and
they cause a congestion of $1/\alpha'=O(1)$ in $X'$. Let $\pset_1$
denote this set of paths. Using the embedding of $X'$ into $H''$, we
can build a collection $\pset_2$ of $|\Gamma_A|$ paths in graph $H''$,
where every path connects a distinct vertex in $\Gamma_A'$ to a
distinct vertex in $\Gamma_B'$, and the total congestion due to these
paths is $O(\eta)$. Moreover, from the definition of $H^*$, all paths
in $\pset_2$ are contained in $H^*$. We now use the paths in $\pset_2$
to define a flow $F$ connecting the vertices of $\Gamma_A'$ to the
vertices of $\Gamma_B'$ in $G'$. The flow $F$ follows the paths in
$\pset_2$ on the edges that belong to set $E_0$. In order to complete
the description of this flow, we need to show how to route it inside
the clusters $C$ for $v_C\in S$. For each such cluster $C$, the set
$\pset_2$ of paths defines a set $D_C$ of $O(\eta)$-restricted demands
over the edges of $\out(C)$. Since the cluster $C$ is $\alphaWL$-good,
we can route these demands inside $C$ with congestion at most
$O(\eta\log k/\alphaWL)$. Overall, we obtain a flow $F$ of value
$|\Gamma_A|$, connecting the vertices in $\Gamma_A$ to the vertices of
$\Gamma_B$, with congestion $O(\eta\log k/\alphaWL)$. It follows that
$|E(A,B)|\geq \frac{|\Gamma_A|\alphaWL}{\eta\log
  k}=\Omega\left(\frac{|\Gamma_A|}{\log^{7.5}k}\right )$. We conclude that
set $\Gamma$ is $\Omega(1/\log^{7.5}k)$-well-linked in $G'$. From
Corollary~\ref{cor: from well-linkedness to treewidth}, it follows
that $\tw(G')\geq \Omega\left( \frac{|\Gamma|}{\Delta \log^{7.5}k}\right
)=\Omega\left(\frac{r''}{\Delta^2\log^{7.5}k}\right )\geq r$.

\section{Proof of Claim~\ref{claim: bound on random partition}}

Fix some $1\leq j\leq h+1$. 
Let $\event_1(j)$ be the bad event that $\sum_{v\in X_j}d_{H}(v)\geq \frac{16m}{h}$. In order to bound the probability of $\event_1(j)$, we define, for each vertex $v\in V(H)$, a random variable $x_v$, whose value is $\frac{d_{H}(v)}{r'}$ if $v\in X_j$ and $0$ otherwise. Notice that $x_v\in [0,1]$, and the random variables $\set{x_v}_{v\in V(H)}$ are pairwise independent. Let $B=\sum_{v\in V(H)} x_v$. Then the expectation of $B$, $\mu_1=\sum_{v\in V(H)} \frac{d_{H}(v)}{(h+1) r'}=\frac{2m}{(h+1)r'}$. Using the standard Chernoff bound,

\[\prob{\event_1(j)}<\prob{B> 8\mu_1}\leq 2^{-8\mu_1}=2^{-\frac{16m}{(h+1)r'}}<\frac 1 {6h}\]

since $m\geq \alpha^*k/3$ and $k>hr'\log h/\alpha^*$.

Let $\event_2(j)$ be the bad event that $|E_{H}(X_j)|<
\frac{m}{8h^2}$. We next prove that $\prob{\event_2(j)}\leq \frac 1
{k}$. We say that two edges $e,e'\in E(H)$ are \emph{independent} iff
they do not share any endpoints. Our first step is to compute a
partition $U_1,\ldots,U_z$ of the set $E(H)$ of edges, where $z\leq
2r'$, such that for each $1\leq i\leq z$, $|U_i|\geq \frac m{4r'}$,
and all edges in set $U_i$ are mutually independent. In order to
compute such a partition, we construct an auxiliary graph $Z$, whose
vertex set is $\set{v_e\mid e\in E(H)}$, and there is an edge
$(v_e,v_{e'})$ iff $e$ and $e'$ are not independent. Since the maximum
vertex degree in $G'$ is at most $r'$, the maximum vertex degree in
$Z$ is bounded by $2r'-2$. Using the Hajnal-Szemer\'edi
Theorem~\cite{Hajnal-Szemeredi}, we can find a partition
$V_1,\ldots,V_z$ of the vertices of $Z$ into $z\leq 2r'$ subsets,
where each subset $V_i$ is an independent set, and $|V_i|\geq
\frac{|V(Z)|}{z}-1\geq \frac{m}{4r'}$. The partition $V_1,\ldots,V_z$
of the vertices of $Z$ gives the desired partition $U_1,\ldots,U_z$ of
the edges of $H$. For each $1\leq i\leq z$, we say that the bad event
$\event_2^i(j)$ happens iff $|U_i\cap E(X_j)|<
\frac{|U_i|}{2(h+1)^2}$. Notice that if $\event_2(j)$ happens, then
event $\event_2^i(j)$ must happen for some $1\leq i\leq z$. Fix some
$1\leq i\leq z$. The expectation of $|U_i\cap E(X_j)|$ is
$\mu_2=\frac{|U_i|}{(h+1)^2}$. Since all edges in $U_i$ are
independent, we can use a standard Chernoff bound to bound the
probability of $\event_2^i(j)$, as follows:

\[\prob{\event_2^i(j)}=\prob{|U_i\cap E(X_j)|<\mu_2/2}\leq e^{-\mu_2/8}=e^{-\frac{|U_i|}{8(h+1)^2}}\]

Since $|U_i|\geq \frac{m}{4r'}$, $m\geq k\alpha^*/3$, $k\geq 2^{10}h^2r'\log
k/\alpha^*$, this is bounded by $\frac{1}{k^2}$. We conclude that
$\prob{\event_2^i(j)}\leq \frac{1}{k^2}$, and by using the union bound
over all $1\leq i\leq z$, $\prob{\event_2(j)}\leq \frac{1}{k}$.

Using the union bound over all $1\leq j\leq h+1$, with probability at
least $\half$, none of the events $\event_1(j),\event_2(j)$ for $1\leq
j\leq h+1$ happen, and so for each $1\leq j\leq h+1$,
$|\out_{H}(X_j)|\leq\sum_{v\in X_j}d_H(v)< \frac{16m}{h}$, and
$|E_{G'}(X_j)|\geq\frac{m}{8h^2}$ must hold.

\section{Proof of Lemma~\ref{lem:thomassen}}

The proof closely follows the proof of Proposition 2.1
in~\cite{Thomassen88}.  Let $G$ be any graph, and $k$ any integer. If
$\tw(G)\geq h_{\fset}(k)$, then from our assumption, $p_{\fset}(G)\geq
k$, and there is nothing to prove. So from now on, it is enough to
prove the following. If $G$ is any graph with $\tw(G)=w<
h_{\fset}(k)$, then either $p_{\fset}(G)\geq k$, or $c_{\fset}(G)\leq
k(w+1)$. We prove this statement by induction on $k$. The statement is
clearly true for $k=0$. Consider now some general value of $k$.

Let $T$ be the tree-decomposition of width $w$ of $G$. For each vertex
$v\in V(T)$, we denote by $X_v$ the corresponding subset of vertices
of $G$, and recall that $|X_v|\leq w+1$. For each sub-tree $T'\sse T$,
we denote by $S_{T'}=\bigcup_{v\in V(T')}X_v$, and by $G_{T'}$ the
sub-graph of $G$ induced by $S_{T'}$.

For every vertex $v\in V(T)$, we consider all pairs $(T_1,T_2)$ of
sub-trees of $T$, where $T_1\cup T_2=T$, and $T_1\cap
T_2=\set{v}$. Among all such triples $(v,T_1,T_2)$, we are interested
only in those where $G_{T_1}$ contains a sub-graph isomorphic to a
graph in $\fset$, and among all triples satisfying this condition, we
select the one minimizing $|V(T_1)|$. Let $H$ be any sub-graph of
$G_{T_1}$ isomorphic to a member of $\fset$. Then $V(H)\cap X_v\neq
\emptyset$, since otherwise we can obtain a new triple
$(v',T_1',T_2')$ satisfying all the above properties, with
$T_1'\subsetneq T_1$, contradicting the minimality of $T_1$.

Assume now that $c_{\fset}(G)> k(w+1)$. In other words, for any subset
$A$ of $k(w+1)$ vertices in graph $G$, $G\setminus A$ contains a
sub-graph isomorphic to a graph in $\fset$. In particular, if we let
$G'=G\setminus X_v$, then for any subset $A$ of $(k-1)(w+1)$ vertices
in this graph, $G'\setminus A$ contains a sub-graph isomorphic to a
graph in $\fset$. By the induction hypothesis, this means that $G'$
contains $(k-1)$ disjoint sub-graphs $G_1,\ldots,G_{k-1}$, each of
which is isomorphic to a graph in $\fset$. Moreover, each such graph
$G_i$ must be disjoint from $G_{T_1}$ (since, as observed above, any
copy of a graph in $\fset$, which is contained in $G_{T_1}$, must
intersect $X_v$). Let $H$ be any copy of a graph in $\fset$ that is
contained in $G_{T_1}$. Then $G_1,\ldots,G_{k-1},H$ are $k$ disjoint
subgraphs of $G$, each of which is isomorphic to a graph in $\fset$,
as required.

\section{Proof of Lemma~\ref{lem:ep-improved}}

The proof is inspired by the argument in \cite{FominST11}.

  We prove that for each $k\geq 1$, and for each graph $G$, if
  $p_{\fset}(G)\le k$, then $c_{\fset}(G)\leq \beta h_\mF(k) \log (k+1)$.
  The proof is by induction on $k$. The claim is trivially true for
  $k=0$.  We prove the statement for $k \ge 1$ assuming that it holds
  for all values up to $k-1$.
  
  Let $G$ be such that $p_\fset(G) = k$ and let $T=(V_T,E_T)$ be
  a tree decomposition of smallest width for $G$. We observe that
  the width of $T$ is strictly less than $h_\fset(k+1)$ for otherwise
  $p_\fset(G) > k$, contradicting our assumption. For $t \in V_T$
  let $X_t \subseteq V$ be the bag of vertices at $t$. 
  We root $T$ at any vertex and use the following notation. For $t \in V_T$, 
  $T_t$ is the subtree of $T$ rooted at $t$. 
  $G_t= G[S_t]$ where $S_t = \cup_{t' \in T_t} X_{t'}$.
  $G_t^- = G_t \setminus X_t$ is the graph obtained by removing the
  nodes in $X_t$ from $G_t$.

  The induction step is based on the following claim.
  \begin{claim}
    There exists a separator $S \subseteq V$ such that $|S| \le 2 h_\fset(k+1)$
    and for each connected subgraph $G'$ in $G\setminus S$, 
    $p_\fset(G') \le \floor{2k/3}$.
  \end{claim}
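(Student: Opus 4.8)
The plan is to produce a \emph{balanced separator} out of the tree decomposition $T$ in the classical style: orient the edges of $T$ toward the ``heavier'' side, locate a sink, and take its bag as $S$. First I would record the only structural input needed. Since $T$ is a minimum-width tree decomposition of $G$ and $\tw(G)<h_\fset(k+1)$ (otherwise the hypothesis on $h_\fset$ would give $p_\fset(G)\ge k+1$, contradicting $p_\fset(G)=k$), every bag satisfies $|X_t|\le h_\fset(k+1)$. Next, for each edge $e=tt'$ of $T$ I would look at the two subtrees it separates, with vertex sets $U^e_t$ and $U^e_{t'}$, and use the standard facts about tree decompositions: $U^e_t\cup U^e_{t'}=V(G)$, $U^e_t\cap U^e_{t'}=Z_e:=X_t\cap X_{t'}$, and $Z_e$ separates $U^e_t\setminus Z_e$ from $U^e_{t'}\setminus Z_e$ in $G$.

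The key observation is a trivial superadditivity of the packing number over vertex-disjoint subgraphs: concatenating maximum $\fset$-packings of disjoint induced subgraphs yields an $\fset$-packing of $G$. Applied to $G[U^e_t\setminus Z_e]$ and $G[U^e_{t'}\setminus Z_e]$ this gives $p_\fset(G[U^e_t\setminus Z_e])+p_\fset(G[U^e_{t'}\setminus Z_e])\le p_\fset(G)=k$, so at most one of the two sides of $e$ can have packing number exceeding $k/2$. I would then orient $e$ toward the heavy side when one exists, and arbitrarily otherwise; by the inequality this is well defined, and following the orientation in the finite tree $T$ leads to a node $t^*$ all of whose incident edges point toward it. Re-root $T$ at $t^*$ and set $S=X_{t^*}$, so $|S|\le h_\fset(k+1)\le 2h_\fset(k+1)$.

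To finish, I would argue that every connected component $C$ of $G\setminus S$ is small. By the connectivity axiom of $T$, each vertex of $V(G)\setminus S$ lies in the union $U_c$ of bags of exactly one child-subtree $T_c$ of $t^*$, and $G\setminus S$ has no edge between $U_c\setminus S$ and $U_{c'}\setminus S$ for distinct children $c\ne c'$; hence $V(C)\subseteq U_c\setminus S$ for a single child $c$. The edge $e=t^*c$ points toward $t^*$, so its $c$-side is not heavy, i.e.\ $p_\fset(G[U^e_c\setminus Z_e])\le k/2$ (either the $t^*$-side was heavy, forcing the $c$-side to satisfy this, or neither side was heavy). Since $Z_e=X_{t^*}\cap X_c\subseteq S$, we have $V(C)\subseteq U_c\setminus S\subseteq U^e_c\setminus Z_e$, so $C$ is an induced subgraph of $G[U^e_c\setminus Z_e]$ and monotonicity of $p_\fset$ gives $p_\fset(C)\le\lfloor k/2\rfloor\le\lfloor 2k/3\rfloor$. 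Any connected subgraph of $G\setminus S$ lies inside some such $C$, which yields the claim.

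I do not expect a real obstacle: this is essentially the standard ``balanced separator from bounded treewidth'' lemma transported from a weight function on vertices to the packing number, and it in fact proves slightly more than stated (a single bag, and the sharper bound $\lfloor k/2\rfloor$). The only points needing care are (i) bookkeeping the two tree-decomposition identities $U^e_t\cap U^e_{t'}=X_t\cap X_{t'}$ and the resulting separator property, both of which rely on the connectivity axiom, and (ii) checking the orientation is consistent, i.e.\ that no edge gets forced in both directions — which is exactly what the inequality ``sum of the two sides $\le k$'' guarantees.
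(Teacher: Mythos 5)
Your proof is correct, but it follows a genuinely different route from the paper's. You run the classical balanced-separator (centroid-bag) argument: orient each tree-decomposition edge toward the side whose packing number exceeds $k/2$ (well-defined because the two sides of an edge are vertex-disjoint away from the shared bag, so their packing numbers are superadditive and sum to at most $k$), take a sink $t^*$ of the orientation, and output the single bag $S=X_{t^*}$; every component of $G\setminus S$ then lives on a non-heavy side, giving the sharper bound $p_\fset(G')\le \floor{k/2}\le\floor{2k/3}$ and $|S|\le h_\fset(k+1)$. The paper instead defines a node $t$ to be \emph{large} if $G_t^-$ still contains a component with packing number exceeding $\floor{2k/3}$, takes the \emph{deepest} large node, argues that the offending component is unique and lies inside one child subtree $t'$, and outputs the union of two bags $S=X_t\cup X_{t'}$ — which is why the claim carries the factor $2h_\fset(k+1)$ and the $2k/3$ threshold. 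Both arguments rest on the same two facts (bag size is at most $h_\fset(k+1)$ since otherwise $p_\fset(G)>k$, and packing numbers add across vertex-disjoint parts), but yours buys a single-bag separator and a $1/2$-balance, which would even let one tighten the constants in the ensuing induction (e.g.\ $|S|\le \alpha h_\fset(k)$ instead of $2\alpha h_\fset(k)$), while the paper's "deepest large node" argument is more ad hoc but avoids setting up the edge-orientation machinery. The only points you should make fully explicit in a write-up are the two tree-decomposition identities you cite ($U^e_t\cap U^e_{t'}=X_t\cap X_{t'}$ and the absence of edges between distinct child parts of $G\setminus X_{t^*}$), both standard consequences of the connectivity axiom, and the observation that an orientation of a tree always has a node with no outgoing edge; with those in place your argument is complete and in fact proves a slightly stronger statement than claimed.
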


\begin{proof}
  Call a node $t \in V_T$ large if $G_t^-$ contains a connected
  subgraph $G'$ such that $p_\fset(G') > \floor{2k/3}$. Otherwise
  $t$ is small.  If the root $r$ is small then $X_r$ is the desired
  separator and we are done. Otherwise, let $t$ be the deepest large
  node in $T$. 
  There is a single connected component $G'$ in $G \setminus X_t$ such
  that $p_\fset(G') > \floor{2k/3}$, otherwise it would imply that
  $p_\fset(G) > k$.  Moreover $G'$ is contained in $G_{t'}$ for some
  child $t'$ of $t$. We claim that $S = X_{t} \cup X_{t'}$ is the
  desired separator.  If $G\setminus S$ still contains a connected
  component $G'$ such that $p_\fset(G') > \floor{2k/3}$ then it is
  contained in $G_{t'}^-$ contradicting the choice that $t$ is the
  deepest large node in $T$.
  \end{proof}
  
  Let $S$ be the separator from the claim. We have $|S| \le 2
  h_\fset(k+1)$ which by the assumption on the function $h(\cdot)$ is
  at most $2\alpha h_\fset(k)$.  Let $G_1,G_2,\ldots, G_\ell$ be the
  connected components of $G\setminus S$ and let $k_i = p_\fset(G_i)$.
  For $1 \le i \le \ell$, $k_i \le \floor{2k/3} < k$, and moreover
  $\sum_{i=1}^\ell k_i \le k$.  Let $S_i$ be a minimum cardinality
  $\fset$-cover for $G_i$. From the induction hypothesis $|S_i| \le
  \beta h_\fset(k_i) \log (k_i+1)$.  Since $\fset$ is a family of
  connected graphs, we note that $S' = S \cup\left ( \bigcup_i
    S_i\right )$ is a $\fset$-cover in $G$ whose cardinality can be
  bounded as $2 \alpha h_\fset(k) + \sum_i \beta h_\fset(k_i) \log
  (k_i+1)$.  If $k=1$ then $k_i = 0$ for all $i$ and therefore $|S'|
  \le 2 \alpha h_\fset(k)$ which proves the induction hypothesis for
  $k=1$ if $\beta \ge 2 \alpha$. We will now assume $k \ge 2$ in which
  case for each $i$, $k_i +1 \le \floor{2k/3}+1 \le 3(k+1)/4$. The
  cardinality of $S'$ is upper bounded as:
  \begin{eqnarray*}
    2 \alpha h_\fset(k) + \sum_i \beta h_\fset(k_i) \log (k_i+1) & \le & 2 \alpha h_\fset(k) + \sum_i \beta  h_\fset(k_i) \log (\frac{3}{4}(k+1)) \\
    & \le & 2 \alpha h_\fset(k) + \beta  \log (\frac{3}{4}(k+1)) \sum_i h_\fset(k_i) \\
    & \le & 2  \alpha h_\fset(k) + \beta  \log (\frac{3}{4}(k+1)) \cdot h_\fset(k)     \quad \quad \mbox{(since $h_\fset(\cdot)$ is superadditive)} \\
    & \le & 2  \alpha h_\fset(k) - \beta  h_{\fset}(k) \log \frac{4}{3}  + \beta h_\fset(k) \log (k+1) \\
    & \le & \beta  h_\fset(k) \log (k+1),
  \end{eqnarray*}
  where the last inequality follows by choosing $\beta$ sufficiently
  large compared to $\alpha$. This establishes the induction step for
  $k$ and finishes the proof.

\fi

\end{document}